\documentclass[%
 reprint,
superscriptaddress,
amsmath,amssymb,
aps,
pra,
floatfix,
]{revtex4-2}

\usepackage[english]{babel}
\usepackage{graphicx} 
\usepackage{physics}
\usepackage{mathtools}

\usepackage[colorlinks=true, allcolors=blue]{hyperref}

\usepackage[english]{babel}

\usepackage{amsmath,amssymb,amsthm}
\usepackage{algorithm}
\usepackage{algpseudocode}
\usepackage{tikz}
\usetikzlibrary{calc}
\usepackage{graphicx}
\usepackage{pgf}

\DeclareMathOperator*{\argmin}{arg\,min}
\DeclareMathOperator*{\poly}{poly}

\newtheorem{lemma}{Lemma}
\newtheorem{proposition}{Proposition}
\theoremstyle{definition}

\begin{document}

\title{Leveraging Landau-Zener-Stückelberg interference for accelerating diabatic quantum annealing}

\author{Matthias Werner}
\email{matthias.werner@qilimanjaro.tech}
\affiliation{Qilimanjaro Quantum Tech., Carrer de Vene{\c{c}}uela, 74, Sant Mart{\'i}, 08019 Barcelona, Spain}
\affiliation{Departament de F{\'i}sica Qu{\`a}ntica i Astrofísica (FQA), Universitat de Barcelona (UB), Carrer de Mart{\'i} i Franqu{\'e}s, 1, 08028 Barcelona, Spain}
\affiliation{Institut de Ci{\`e}ncies del Cosmos, Universitat de Barcelona, ICCUB, Carrer de Mart{\'i} i Franqu{\`e}s, 1, 08028 Barcelona, Spain}

\author{Mat\'ias Jonsson}
\noaffiliation

\author{Artur Garc\'ia-S\'aez}
\affiliation{Qilimanjaro Quantum Tech., Carrer de Vene{\c{c}}uela, 74, Sant Mart{\'i}, 08019 Barcelona, Spain}
\affiliation{Barcelona Supercomputing Center, Pl. Eusebi G{\"u}ell 1, 08032 Barcelona, Spain}

\author{Arnau Riera}
\affiliation{Qilimanjaro Quantum Tech., Carrer de Vene{\c{c}}uela, 74, Sant Mart{\'i}, 08019 Barcelona, Spain}

\author{Tameem Albash}
\email{tnalbas@sandia.gov}
\affiliation{Center for Computing Research, Sandia National Laboratories, Albuquerque NM, 87185 USA}

\begin{abstract}
    Diabatic quantum annealing with variationally optimized schedules can exhibit exponential speedups over conventional adiabatic quantum annealing, as was demonstrated numerically for a frustrated Ising ring model by C{\^o}t{\'e} et al. 
Here we identify Landau-Zener-St{\"u}ckelberg interference as the underlying mechanism for this speedup, and based on this insight we propose a variational schedule ansatz with far fewer parameters. This simplified ansatz allows us to show analytically that the classical optimization of the schedule parameters can be done in polynomial time and discuss conditions when we expect this type of mechanism to provide speedups over adiabatic annealing. 
Furthermore, we provide an analytical argument that coherence is an essential resource for this mechanism, which we verify numerically. 
We perform extensive numerical tests of the proposed ansatz and observe substantial improvements over adiabatic annealing and competitive performance in particularly challenging problem instances, including a well-studied MAXCUT instance commonly used for benchmarking. Our work shows that explicitly leveraging physical mechanisms can lead to more effective designs of variational annealing algorithms.
\end{abstract}

\maketitle

\section{Introduction}
Quantum annealing (QA)~\cite{Ray1989,Finnila1994,Kadowaki1998,Farhi2001,Santoro2002} is a heuristic quantum algorithm for tackling classical optimization problems. In its standard formulation, it relies on the principle of computation via adiabatic evolution~\cite{Farhi2000}.
The solution to a computational problem is encoded in the ground state of a target Hamiltonian, and the system Hamiltonian is slowly transformed towards this target from an easily prepared ground state of a simple Hamiltonian. The adiabatic theorem~\cite{Born1928,Kato1950,Jansen2007,Amin_2009a} guarantees that, if the evolution is slow enough, the system will remain in the ground state with high probability throughout the evolution, and the final state will yield the desired solution. For ground state evolution, ``slow'' is relative to the time scale set by the smallest energy gap between the instantaneous ground state and first excited state of the interpolating Hamiltonian connecting the initial and target Hamiltonians.

We consider the typical setting of QA for $N$ qubits. Here, the simple Hamiltonian is typically given by the uniform transverse field Hamiltonian:
\begin{equation}
    H_X = -\sum_{n=1}^N X_n \ ,
    \label{eq:DefHx}
\end{equation}
while the target Hamiltonian encoding the solution to an optimization problem is of the Ising form:
\begin{equation}
    H_Z = \sum_{n=1}^N h_n Z_n + \sum_{m>n}^N J_{mn} Z_m Z_n \ .
    \label{eq:DefHz}
\end{equation}
Here, $X_n$ and $Z_n$ are the Pauli-X and Pauli-Z operators acting on qubit $n$ with $+1$ eigenstates $\ket{+}$ and $\ket{0}$ respectively. In QA, the system is initially prepared in the ground state of $H_X$, i.e. $\ket{+}^{\otimes N}$. Then, the system evolves for time $T$ under the time-dependent Hamiltonian
\begin{equation}
    H(s) = (1-s(t)) H_X + s(t) H_Z \ ,
    \label{eq:DefFullH}
\end{equation}
where the time-dependent function $s: [ 0, T ] \rightarrow [ 0, 1 ] $ with $s(t=0) = 0$ and $s(t=T) = 1$ governs the anneal and is called the schedule. In conventional quantum annealing, one simply performs a linear ramp, i.e. $s(t) = t/T$. In the limit of large $T$ satisfying the adiabatic condition, the system will evolve to a state with high overlap with the ground state of $H_Z$, and a measurement of the qubits will give the solution to the optimization problem with high probability. 

Whether $T$ is sufficiently high for the evolution to be adiabatic is determined by the adiabatic theorem \cite{Born1928,Kato1950,Jansen2007,Amin_2009a, Albash_2018}. Let $E_i(s)$ be the $i$-th eigenvalue of the instantaneous Hamiltonian $H(s)$. A commonly used adiabatic condition~\cite{messiah1962quantum} states that the transition between the $i$-th and the $j$-th instantaneous energy level of $H(s)$ can be neglected if
\begin{equation}
    \frac{1}{T} \max_{s} \frac{|\langle E_i(s)| \partial_s H(s) | E_j(s) \rangle |}{|E_i(s) - E_j(s)|^2} \ll 1 \ .
    \label{eq:AdiabaticTheorem}
\end{equation}
Thus, if the spectral gap between ground and first excited state closes exponentially fast as a function of $N$, the anneal time $T$ must be chosen exponentially large in order to remain in the ground state throughout the evolution.

In many interesting optimization problems, the interpolating Hamiltonian $H(s)$ exhibits an avoided level crossing, also referred to as a perturbative anti-crossing, with an exponentially small gap~\cite{vanDam2001,Reichardt2004,Farhi2012}, which then requires exponentially long anneal times to reach the adiabatic limit. In some cases, such exponential slowdowns can be circumvented by choosing smart Hamiltonian interpolations~\cite{Seki2012} and encodings~\cite{Choi2010,Dickson2011a,Dickson2011b}, but it is also expected that some exponential slowdowns are unavoidable, as it can be seen as a signature of the problem's complexity~\cite{Kitaev2002}.

An alternative approach to avoiding exponentially long anneal times in these cases is to relax the requirement of adiabaticity and allow for diabatic transitions out of and back into the ground state~\cite{Somma2012,Crosson2014,Muthukrishnan2016,Crosson2021}, which allows for the possibility of faster evolution times but usually without provable guarantees of success. A concrete example of this was provided recently in Ref.~\cite{Cote_2023}, where the target Hamiltonian $H_Z$ is a 1-dimensional frustrated Ising ring. The parameters of the Ising Hamiltonian $H_Z$ are chosen such that the interpolating Hamiltonian $H(s)$ exhibits an exponentially closing energy gap with increasing system size~\cite{Roberts_2020}, despite the ground state of $H_Z$ being computationally trivial to find. Instead of an adiabatic linear anneal, which keeps the system in the instantaneous ground state throughout the evolution, Ref.~\cite{Cote_2023} considers non-adiabatic evolutions with annealing schedules optimized to minimize the energy of the final state. This optimization is performed over a set of variational schedules composed of concatenated linear ramps, with the parameters of the optimization corresponding to the different stopping points between ramps. With this approach, sufficiently optimized annealing schedules can avoid the exponential slowdown of the adiabatic protocol for the frustrated ring model, whereby the system is driven to the first excited state from the ground state just before the exponentially closing gap and the ramp through the minimal gap merely inverts the populations.

Here, we build on the observations of Ref.~\cite{Cote_2023} and address open questions about the mechanisms for the speedup and apply our conclusion to a broader context. Our contributions are the following:
\begin{itemize}
    \item We provide a theoretical explanation for the speedup observed in Ref.~\cite{Cote_2023}.
    \item Based on this understanding, we develop a simplified version of the variational schedules that maintains the exponential speedup while significantly reducing the optimization cost (from 100 to 7 optimization parameters). Our ansatz reproduces the simulation results from Ref.~\cite{Cote_2023} and even slightly exceeds them by achieving sub-quadratic scaling exponents on the frustrated Ising ring. Furthermore, we show that if we use a different interpolation method, the scaling exponents can be further reduced.
    \item Using this simplified ansatz, we can show that the computational cost of the optimization is at most polynomial in system size, thus showing that the observed speedup is genuine and not an artifact of an exponentially expensive optimization.
    \item We discuss sufficient conditions when we expect our ansatz to deliver a significant speedup over adiabatic QA. We analyze both theoretically and numerically the impact of decoherence on the ansatz and conclude that coherence is a crucial resource.
    \item We test our theoretical intuitions in extensive numerical studies, not only on the frustrated Ising ring, but also on MAXCUT instances, as well as a toy model of problem instances showing perturbative anti-crossings. Here we show that our ansatz is a viable strategy to overcome exponentially closing gaps under certain circumstances.
\end{itemize}

Some of the models we investigate numerically here have been studied in prior work. This makes them well-suited as benchmarks to put our results into the context of alternative approaches. In particular, Wang et al.~\cite{Wang_2025} recently applied an annealing-inspired digital approach to the frustrated Ising ring and observed linear scaling of the required computational time with the system size, not considering the cost of the optimization of the parameters. Arezzo et al.~\cite{Arezzo_2025} apply the Quantum Approximate Optimization Algorithm (QAOA)~\cite{Farhi2014} to the frustrated Ising ring and establish a quadratic bound on the required circuit depth. However, both approaches substantially increase the parameter count as the system grows. Our ansatz does not quite match the linear scaling by Wang et al., however, we observe a competitive scaling of the minimal anneal time, while requiring significantly fewer parameters to be optimized.

Similarly, Pecci et al.~\cite{Pecci_2024} investigated various QAOA techniques to solve a MAXCUT instance that displays a particularly small spectral gap. Here the QAOA circuit depth and thus the parameter number also needs to be increased in order to achieve lower residual energies. In comparison, as we demonstrate below, our proposed ansatz achieves comparable performance, while only requiring seven parameters to be optimized.

Our manuscript is structured as follows. First, we review a common cause of exponentially closing spectral gaps in Section~\ref{sec:CauseOfClosingGaps} and subsequently introduce the basics of Landau-Zener-St{\"u}ckelberg (LZS) interference in Section \ref{sec:LZSInterference}. In Section~\ref{sec:PopulationTransferMechanism} we discuss how LZS interference can be used to overcome exponentially closing spectral gaps in some cases, which will form the basis for our proposed annealing schedule ansatz in Section~\ref{sec:ScheduleIntroduciton}. We analyze the capabilities of the ansatz in Section~\ref{sec:SchedulesAsGates} and how it is affected by decoherence in Section~\ref{sec:DecoherenceTheory}. The theory section concludes in Section~\ref{sec:EfficientOptimization} with an argument that the optimization of the schedule parameters is efficient, and we discuss in Section~\ref{sec:OptimalityConditions} sufficient conditions under which we can expect the ansatz to provide an advantage over linear ramp QA.

In Section~\ref{sec:Numerics} we discuss our numerical results. We apply the ansatz to the frustrated Ising ring in Section \ref{sec:IsingRingSimulation}, and in Section \ref{sec:ToyModel} we study a toy model that allows us to distinguish qualitatively different problem instances with and without a perturbative anti-crossing resulting in an exponentially closing spectral gap. This insight is then applied to an instance of the MAXCUT problem in Section \ref{sec:Instance1} and a numerical verification of our theoretical discussion of the impact of decoherence on the working mechanism of the ansatz in Section \ref{sec:OpenSystem}. The numerical results conclude in Section \ref{sec:CubicSplines} with a discussion of an additional speedup on the frustrated Ising ring by choosing cubic spline interpolation, instead of linear interpolation. Finally, we summarize our findings in the Conclusion in Section \ref{sec:Conclusion}.
\section{Theoretical findings}
\begin{figure*}
    \centering
    \begin{tikzpicture}
        \node[anchor=north west, inner sep=0] (fig) at (0,0)
            {\includegraphics[scale=0.3125]{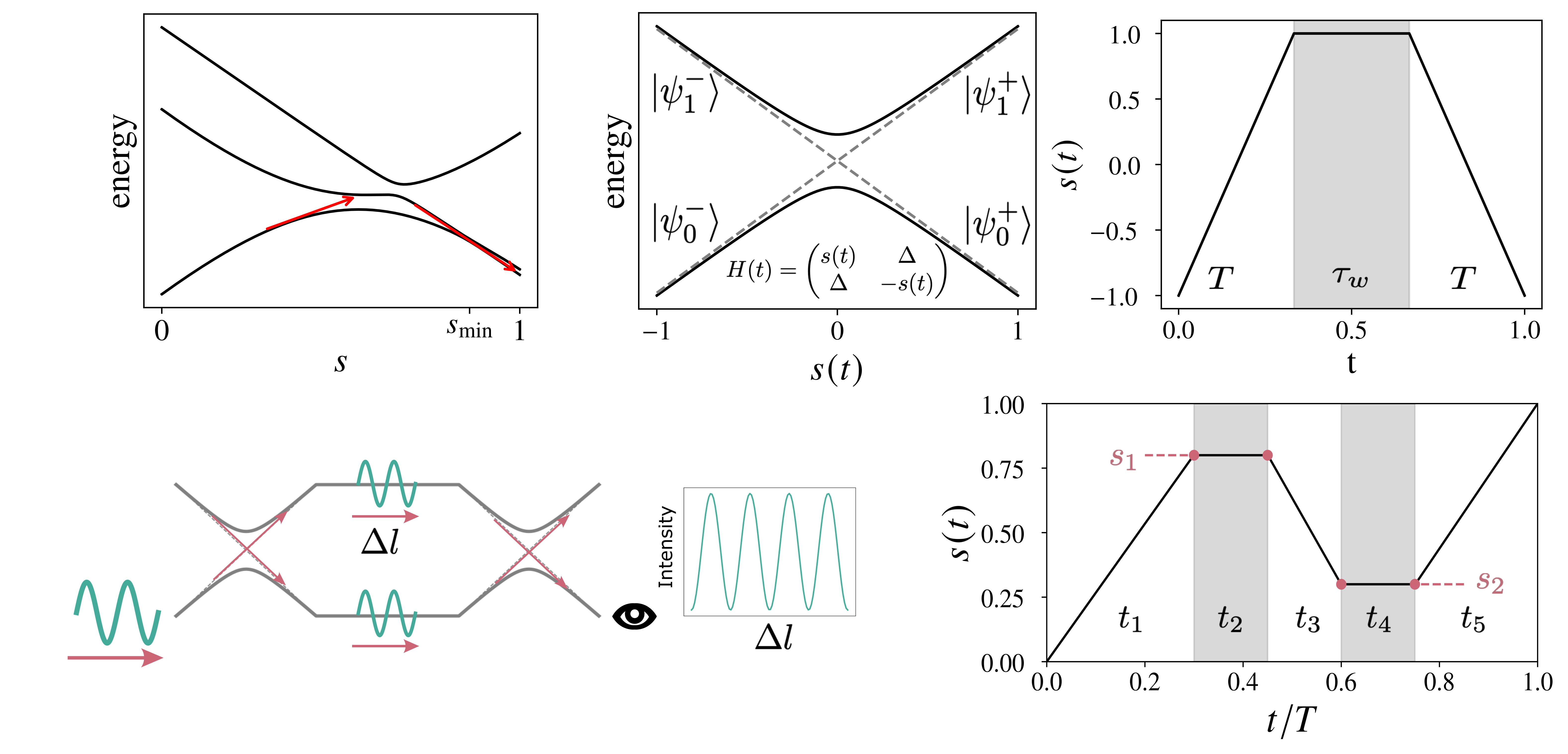}};
        \node[font=\large] at ($(fig.north west) + (fig.north west)!0.045!(fig.north east)$) {(a)};
        \node[font=\large] at ($(fig.north west)!0.38!(fig.north east)$) {(b)};
        \node[font=\large] at ($(fig.north west)!0.7!(fig.north east)$) {(c)};
        \node[font=\large] at ($(fig.north west)!0.55!(fig.south west) + (fig.north west)!0.045!(fig.north east)$) {(d)};
        \node[font=\large] at ($(fig.north west)!0.55!(fig.south west) + (fig.north west)!0.59!(fig.north east)$) {(e)};
    \end{tikzpicture}
    \caption{(a) Diagram of the three lowest energy eigenvalues of $H(s)$, red arrows indicating the diabatic transitions that allow to reach the ground state without slowing down for the minimal gap at $s_{\min}$. (b) Eigenvalues of the two-level system $H(t)$ to illustrate Landau-Zener transitions. (c) Two consecutive sweeps in time $T$ with a waiting period of $\tau_w$ between them. (d) Illustration of an interferometer, where the two distinct paths a wave can take are analogous to the eigenstates of the two-level system. One path is longer by a difference $\Delta l$, resulting in interference. The interference manifests as oscillations in the intensity, or equivalently as oscillations in the population of the instantaneous eigenstates after two transition processes. (e) Simplified piece-wise linear schedule ansatz, where two schedule plateaus are introduced at locations $s_1$ and $s_2$. The locations of the plateaus, their durations $t_{2}, t_{4}$, as well as the durations of the linear ramps $t_{1}, t_{3}, t_{5}$ form the seven parameters of the proposed ansatz. Note that for the numerical experiments below, we use the total time $T=\sum_{i=1}^5 t_i$ as an optimization parameter, so only four of the five $t_i$ are independent.}
    \label{fig:DiagrammeIntuition}
\end{figure*}
\subsection{Cause of exponentially closing gaps and how to circumvent them}
\label{sec:CauseOfClosingGaps}
Anderson localization has been identified as a major cause of exponentially closing spectral gaps \cite{Altshuler_2010}. At the beginning of the anneal, the instantaneous ground state is close to $|+\rangle^{\otimes N}$, which is delocalized in the computational basis. The first transition the ground state undergoes is from this delocalized state in
the computational basis to a localized one. As the anneal continues, the instantaneous ground state eventually localizes in the computational basis close to a local minimum of the optimization function encoded in $H_Z$. As a consequence, there is an avoided level crossing before reaching the end of the anneal, and it is this transition that typically results in a rapidly closing spectral gap.  An evolving state must tunnel between the local minimum and the global minimum of $H_Z$~\cite{Amin_2009b,Werner_2023}, assuming the system follows the ground state adiabatically. 

One way to bypass the exponentially closing gap in the localized phase of $H(s)$ is to use diabatic transitions to populate the first excited state before $s_{\min}\coloneqq\argmin_{s\in[0,1]} E_1(s) - E_0(s)$. Then by interpolating sufficiently fast through the minimum gap at $s_{\min}$  the ground state can be repopulated. This mechanism is depicted in Figure~\ref{fig:DiagrammeIntuition}(a). It is not obvious though by which mechanism the first excited state can be populated in a controlled manner. We propose LZS interference as a mechanism that can be exploited to this end.
\subsection{Landau-Zener-St{\"u}ckelberg interference}
\label{sec:LZSInterference}
As observed by C{\^o}t{\'e} et al.~\cite{Cote_2023}, the dynamics under their optimized schedule $s(t)$ is restricted to the two lowest instantaneous energy eigenstates. Therefore, it is reasonable to discuss the observed effect in terms of a two-level system. In this effective two-level system, the optimizer identified schedules that reliably populate the first excited state $|E_1(s)\rangle$ before the minimal gap at $s_{\min}$. We claim that the mechanism responsible for this is LZS interference.

LZS interference is closely related to the theory of Landau-Zener transitions~\cite{Landau1932,Zener1932,Majorana1932,Stueckelberg1932}. In these transition, a system sweeps through an avoided level crossing, as depicted for a simple two-level system in Figure \ref{fig:DiagrammeIntuition}(b). With a certain transition probability $r$, given by the Landau-Zener formula, the system transitions between the eigenstates of the two-level system. Let $|E_0 (s) \rangle$ and $|E_1(s) \rangle$ be the instantaneous eigenstates of the two-level system, and let $|\psi(t)\rangle$ be the state of the system. At $t=0$, let us assume $|\psi(t=0)\rangle = |E_0(s=0)\rangle$. Sweeping linearly in time $T$, i.e. $s(t) = t/T$, at the end of the sweep we find
\begin{equation}
    \begin{aligned}
        &|\psi(t=T) \rangle = \\
        &\sqrt{1-r} e^{i\phi_{00}}|E_0(s=1)\rangle + \sqrt{r} e^{i\phi_{01}}|E_1(s=1)\rangle \ ,
    \end{aligned}
\end{equation}
where $\phi_{ij}$ are the phases associated with the transitions from eigenstate $i$ to $j$ as a result of the transition.

If the sweep is repeated, e.g.~by sweeping back from $s=1$ to $s=0$, some part of the population transitions again, leading to
\begin{equation}
    \begin{aligned}
        &|\psi(2T) \rangle =\\
        & \left( (1-r)e^{i(\phi_{00} + \phi_{00}')} + r e^{i(\phi_{01} + \phi_{10}')} \right) |E_0(s=0)\rangle \\
        &+ \left( e^{i(\phi_{01} + \phi_{11}')} + e^{i(\phi_{00} + \phi_{01}')} \right) \sqrt{r(1-r)} |E_1(s=0)\rangle  \ ,
    \end{aligned}
\end{equation}
where the phases $\phi_{ij}$ are the phases due to the first sweep and $\phi_{ij}'$ are the respective phases for the second sweep.
The population in the ground state is the sum of the population that did not transition in either sweep and the part of the population that transitioned in both sweeps, while the coefficient of $|E_1(s=0)\rangle$ is given by the sum of populations transitioning only once.

If between the two sweeps the system sits idle at $s=1$, the part of $|\psi(T) \rangle$ that is in the instantaneous excited state will pick up an additional relative phase $\phi_w = \tau_w \cdot (E_1(1) - E_0(1))$, where $\tau_w$ is the waiting time. Such a schedule is depicted in Figure \ref{fig:DiagrammeIntuition}(c). After the sweep-wait-sweep sequence, the system will be in the state
\begin{equation}
    \begin{aligned}
        &|\psi(2T+\tau_w) \rangle \\
        &= \left( (1-r)e^{i(\phi_{00} + \phi_{00}')} + r e^{i(\phi_{01} + \phi_{10}' + \phi_w)} \right) |E_0(s=0)\rangle \\
        &+ \left( e^{i(\phi_{01} + \phi_{11}' + \phi_w)} + e^{i(\phi_{00} + \phi_{01}')} \right) \sqrt{r(1-r)} |E_1(s=0)\rangle \ .
    \end{aligned}
\end{equation}
If the phases are such that
\begin{equation}
    \phi_{01} + \phi_{10}' +\phi_w = \phi_{00} + \phi_{00}' \mod 2\pi \ ,
\end{equation}
then we have $|\psi(2T+\tau_w) \rangle = e^{i(\phi_{00} + \phi_{00}')}|E_0(s=0)\rangle$, where the system returns completely to the ground state. In contrast, choosing
\begin{equation}
    \phi_{01} + \phi_{10}' +\phi_w = \phi_{00} + \phi_{00}' - \pi \mod 2\pi \ ,
\end{equation}
the system remains in the instantaneous ground state only with probability $(1-2r)^2$. So, if the transition probability in a single sweep is smaller than 0.5, the interference of the populations that transition at each sweep can be used to destroy or amplify the net transition rate. Note that if $r=0.5$, any population transfer can be accomplished with an appropriate choice of $\phi_w$.

In fact, this effect is analogous to an interference experiment with a beam splitter, as seen in Figure \ref{fig:DiagrammeIntuition}(d), where a wave is split and travels on two distinct paths. The paths differ by a length $\Delta l$. After the two beams are re-combined, the beams acquire a relative phase proportional to $\Delta l$, such that, depending on the value of $\Delta l$, the beams constructively or destructively interfere. The analogs to the two paths are the two energy levels, while the path length difference $\Delta l$ is analogous to the waiting time $\tau_w$. In the end, the intensity / level populations are periodic functions of the phase difference.
\subsection{Population transfer mechanism}
\label{sec:PopulationTransferMechanism}
Having introduced the basic intuition of LZS interference and its potential to prepare any population in a two-level system by appropriately choosing the waiting time between ramps, we can give an intuition how LZS interference can be used to circumvent the exponentially closing spectral gap due to a perturbative anti-crossing.

As we have discussed above, the ground state wave function undergoes a transition from a delocalized state in the computational basis to a localized state.  Depending on the model, the transition from a delocalized to a localized ground state happens at a quantum phase transition (PT) from a paramagnet to a ferromagnet. This transition is also associated with a closing gap distinct from the avoided anti-crossing in the localized phase, so if the schedule moves too fast at the transition a certain amount of ground state population transfers to the excited state. If the gap at this delocalized-localized transition closes more slowly than the spectral gap between the localized states, the optimal strategy might be to sweep two or more times through the delocalized-localized transition. LZS interference then allows us to amplify the population in the excited state, thus increasing the success probability if we now sweep rapidly through the localized-localized transition and restore the population to the ground state.

We claim that this is the underlying mechanism of the speedup over linear ramp QA observed in Ref.~\cite{Cote_2023}. We leverage this intuition to design a schedule ansatz and successfully apply it to various optimization tasks. We also show in Appendix~\ref{sec:AppendixLZSDemo} the LZS interference by inserting waiting times $\tau$ to an optimal annealing schedule from Ref.~\cite{Cote_2023} and observe oscillations of the cost function and the eigenstate populations as functions of $\tau$, as illustrated in Figure~\ref{fig:DiagrammeIntuition}(d).
\subsection{Variational schedules and the Landau - Zener - St{\"u}ckelberg interference}
\label{sec:ScheduleIntroduciton}
In this section, we introduce the variational schedules that are the cornerstone of our work. We assume a Hamiltonian $H(s)$ with a single controllable parameter $s$, although this could easily be generalized to multiple parameters.
The variational schedules $s(t)$ are defined as a concatenation of two fundamental operations:
\begin{itemize}
    \item \emph{Linear ramps}. A linear interpolation between two points in the schedule, $(s_a,t_a)$ and $(s_b, t_b)$, is performed:
    \begin{equation}
        s(t)=\frac{s_b-s_a}{t_b-t_a}(t-t_a)+s_a \hspace{0.5cm}\textrm{for } t\in [t_a,t_b) \, .
    \end{equation}
    
    \item \emph{Waiting times}. The linear ramp is followed by a waiting time $\Delta t$ from the time $t_b$ to some new time $t_c= t_b+\Delta t $ in which the Hamiltonian parameter remains constant, i.e.
    \begin{equation}
        s(t)=s_b  \hspace{0.5cm} \textrm{for }t\in [t_b,t_c) \, .
    \end{equation}
\end{itemize}
The role of the \emph{linear ramps} is to change the Hamiltonian parameter under which the system evolves. Depending on the rate of change, this will either cause the system to adiabatically follow the instantaneous eigenstates or stimulate transitions between them. The purpose of the \emph{waiting time} is to introduce a phase difference between the instantaneous ground state and first excited state of the system. The phase shift produced in the waiting time and in previous and subsequent linear ramps can produce LZS interference. This is why we also denote this type of schedules as LZS schedules.

In this work, we focus mainly on the concatenation of three linear ramps interrupted by two waiting times, as depicted in Figure~\ref{fig:DiagrammeIntuition}(e). This class of schedules is characterized by a total of seven parameters: the location of the waiting plateaus $s_1$, $s_2$ and the length of the five time intervals $t_i$. These simple three-sweep schedules already capture the qualitative behavior observed by C{\^o}t{\'e} et al.~\cite{Cote_2023}, as we show below. However, the schedule is simple enough to derive three key results, which we discuss in more detail in their respective sections:
\begin{enumerate}
    \item Given sufficiently many concatenations of ramps and waiting times, any two-level state can be prepared in the low-energy subspace of the instantaneous Hamiltonian. Thus, if the dynamics is sufficiently constrained to the lowest two levels, this schedule class provides a powerful tool to overcome challenges due to closing gaps (see Section \ref{sec:SchedulesAsGates}).
    \item The simplified schedule allows us to gain an intuition about the impact decoherence can have on the working mechanism of the schedule. We verify this intuition later with numerical simulations. This shows that the speedup obtained by the LZS-schedules is a coherent effect (see Section \ref{sec:DecoherenceTheory}).
    \item Since the three-sweep schedules only have a finite number of variational parameters, we can show that the optimal schedule can be found in polynomial time, assuming that a solution to the optimization problem exists with $T = \poly(N)$. This answers a relevant concern regarding speedups obtained by variational methods, namely that potentially any speedup in the quantum computation is negated by an explosion of the time spent to optimize the variational parameters (see Section \ref{sec:EfficientOptimization}).
\end{enumerate}
We do not expect every concatenation of linear ramps and plateaus to be effective, but rather those in which the value of $s$ in consecutive plateaus alternates back and forth. As we will demonstrate, these schedules capture the essence of LZS interference and are the ones that achieve a significant speedup compared to linear schedules on problem instances with perturbative anti-crossings.

We also aim to understand these variational schedules as the analog version, in the annealing context, of the variational quantum algorithms from gate-based computations~\cite{Cerezo2021}. In a digital computer, the parameterization is in terms of the rotation angles of a fixed set of single-qubit and two-qubit gates. In the analog case, the parameterization is in terms of the stopping points and waiting times. In the digital case the complexity of a circuit increases by adding more gates, while in the analog picture the complexity increases by concatenating more ramps and waiting times, as well as by decreasing the rate of change in the linear ramp sections, resulting in longer total anneal times $T$.

It is well known that if we can perform sufficiently many gates on a digital computer, any unitary (and state) can be generated~\cite{Dawson2006}.  More rigorously, we denote a universal gate set as a set of single qubit and two-qubit gates that allows us to approximate any quantum gate to any desired precision.
In a similar spirit, as we argue below, these variational schedules allow us to prepare any superposition of the ground state and the first excited states as long as the low energy subspace is protected from the rest of the spectrum by an energy gap.
\subsection{Variational schedule as quantum gates}
\label{sec:SchedulesAsGates}
By decomposing the variational schedule into distinct pieces, we can understand their respective role. We make the assumption that the evolution is constrained to the two lowest energy levels so that we can describe the system in the instantaneous eigenbasis $|E_i(s) \rangle$ with $i=0,1$. Projecting onto the lowest two instantaneous energy levels using $P_-(s)$, we define the projected Hamiltonian $H_-(s)$ for simplicity
\begin{equation}
    \begin{aligned}
        H_-(s) &= P_-(s) H(s) P_-(s) \\
        &= P_-(s) \mathcal{U}(s) \Lambda(s) \mathcal{U}^\dagger(s) P_-(s) \ ,
    \end{aligned}
\end{equation}
where $\Lambda(s)$ is the diagonal operator with the eigenvalues $E_i(s)$ on the diagonal and $\mathcal{U}(s)$ is the unitary given by the instantaneous eigenbasis. Transforming to  the instantaneous eigenbasis, the Hamiltonian $H_-(s)$ transforms to
\begin{equation}
    H_-(s) \rightarrow \tilde{H}_-(s) = \Lambda(s) - i\mathcal{U}^\dagger(s) \dot{\mathcal{U}}(s) \ ,
\end{equation}
where $\dot{\mathcal{U}}(s) \equiv \frac{d}{ds}\mathcal{U}(s)$.

In the following, we consider the unitary in the low-energy subspace of the instantaneous eigenbasis
\begin{equation}
    \tilde{U}_- = \mathcal{T} \exp \left( -i\int_0^T \tilde{H}_-(t) dt \right) \ .
\end{equation}
As for any unitary in SU(2), $\tilde{U}_-$ can can be decomposed as
\begin{equation}
    \tilde{U}_- = \exp \left( -i \phi_r Z \right) \begin{pmatrix}
        \sqrt{1-r} & \sqrt{r} \\ -\sqrt{r} & \sqrt{1-r}
    \end{pmatrix} \exp \left( -i \theta_r Z \right) \ .
    \label{eq:SU2Gate}
\end{equation}
The transition rate $0 \leq r \leq 1$ and the phases $\phi_r$, $\theta_r$ depend on the schedule, or the piece of the schedule, that is integrated. For example, a waiting plateau segment contributes with $r=0$, while a linear ramp segment contributes with $r > 0$ . By introducing waiting intervals before and after the ramp, we can effectively control the phases
\begin{equation}
    \begin{aligned}
        &\tilde{U}_- = U(\phi, \theta) \\
        = &\exp \left( -i (\phi - \phi_r) Z \right) \begin{pmatrix}
        \sqrt{1-r} & \sqrt{r} \\ -\sqrt{r} & \sqrt{1-r}
    \end{pmatrix} \\
    &\times \exp \left( -i (\theta - \theta_r) Z \right) \\
    = & \exp \left( -i (\phi - \phi_r) Z \right) \begin{pmatrix}
        \cos \lambda & \sin \lambda \\ -\sin \lambda & \cos \lambda
    \end{pmatrix} \\
    &\times \exp \left( -i (\theta - \theta_r) Z \right) \ ,
    \end{aligned}
    \label{eq:WaitPulseWaitGate}
\end{equation}
where for the second equality we expressed the transition rates $r$ by a rotation $\sin \lambda$, i.e. a rotation around the y-axis by an angle $2\lambda$ on the Bloch sphere. The transition rate depends on the interval that is swept. In the context of diabatic annealing, in order to stimulate a transition with sufficient probability $r$ certain conditions must be fulfilled, such as a slowly vanishing gap along the anneal. We discuss this in more detail in Section \ref{sec:OptimalityConditions}.

Note that we choose a decomposition in Eq.~\eqref{eq:WaitPulseWaitGate} such that $\sin \lambda \geq 0$, so we can assume $0 \leq \lambda \leq \pi / 2$. While it is clear that $\tilde{U}_{-}$ is a universal $\text{SU(2)}$-gate if $\lambda$ could be chosen freely, here we will assume $\lambda$ to be fixed. This reflects that a given annealing instance dictates the ramp speeds, and therefore the transition rates, via its spectral structure at least to some degree. The phase parameters controlled via waiting times, however, are easy to control.

It is instructive to analyze what happens if we concatenate gates of the form of $U(\phi, \theta)$. Consider a quantum state $|\psi_n \rangle = \sqrt{1-P_1^{(n)}} |E_0(s)\rangle + e^{i\phi_n}\sqrt{P_1^{(n)}} |E_1(s) \rangle$ after $n$ applications of the gate, where $P_1^{(n)}$ is the population of the excited state and $\phi_n$ is the relative phase. Applying one more iteration of $U(\phi, \theta)$, we obtain the state
\begin{equation}
    |\psi_{n+1} \rangle = U(\phi, \theta) |\psi_n \rangle \ .
\end{equation}
We can calculate the excited state population $P_1^{(n+1)}$ as
\begin{equation}
    \begin{aligned}
        P_1^{(n+1)} =& | \langle E_1 | \psi_{n+1} \rangle |^2 \\
        =& | \langle E_1 | U(\phi, \theta) | \psi_{n} \rangle |^2 \\
        =& (1-r)P_1^{(n)} + r\left(1-P_1^{(n)} \right) \\
        &- 2 \cos (\tilde{\phi}) \sqrt{r(1-r)} \sqrt{P_1^{(n)} \left(1-P_1^{(n)} \right)} \ ,
    \end{aligned}
    \label{eq:PopulationTransferStep}
\end{equation}
where we have absorbed all the phases into one controllable phase $\tilde{\phi}$. Since $\tilde{\phi}$ is a free parameter, Eq.~\eqref{eq:PopulationTransferStep} gives a range of possible excited state populations $P_1^{(n+1)}$ after the next $U(\phi, \theta)$-gate for every transition rate $r$. These ranges can be plotted as in Figure \ref{fig:AccessiblePopulations}.

The expression in Eq.~\eqref{eq:PopulationTransferStep} and its associated plot in Figure \ref{fig:AccessiblePopulations} allow for insightful interpretations. Firstly, the accessible populations indicated by the colored sets in Figure~\ref{fig:AccessiblePopulations} form ellipses in the $P_1^{(n)}-P_1^{(n+1)}$-plane, and their major axes coincide with the line $P_1^{(n)} = P_1^{(n+1)}$, if $r < 0.5$, or $P_1^{(n)} = 1-P_1^{(n+1)}$, if $r>0.5$. If $r=0.5$, the reachable set forms a circle. Therefore, the expression in Eq.~\eqref{eq:PopulationTransferStep} is invariant under exchange of $P_1^{(n)}$ and $P_1^{(n+1)}$, which is a consequence of the invertibility of $U(\phi, \theta)$. For a given $P_1^{(n)}$ and an appropriate choice of $\tilde{\phi}$, $P_1^{(n+1)}$ is given by any point in the ellipse on a vertical line above $P_1^{(n)}$ in Figure~\ref{fig:AccessiblePopulations}. Due to the invariance, the accessible $P_1^{(n+2)}$ are now all points in the ellipse on a horizontal line at $P_1^{(n+1)}$. Continuing this procedure, it becomes apparent that by concatenating the gates $U(\phi, \theta)$, we can traverse the ellipse in perpendicular lines, thus reaching any point, i.e. any desired population of the excited state. This is illustrated in Figure~\ref{fig:AccessiblePopulations}(a) for $r=0.3$, where we show how an initially small excited state population $P_1^{(1)} \approx 0.05$ can be pumped to $P_1^{(4)} \approx 1$ after three LZS-gates.

By choosing the parameter $\phi$ of the last $U(\phi, \theta)$-gate and $\theta$ of the first, we can for any given state create both any desired population and phase, indicating that a sufficient concatenation of $U(\phi, \theta)$ is universal in SU(2). This graphical intuition can be formalized in the following Proposition:
\begin{proposition}
    \label{lemma:UniversalSU2}
    Let $U(\phi, \theta)$ as in Eq. \eqref{eq:WaitPulseWaitGate} and $\lambda_{\text{eff}} = \min\left\{\lambda, \frac{\pi}{2} - \lambda \right\}$. For any $V \in \text{SU}(2)$, there is a $N_0 \in \mathbb{N}$ with
    \begin{equation}
        N_0 = 2^{\left\lceil \log_2 \frac{\pi}{2\lambda_{\text{eff}}} \right\rceil} \ ,
    \end{equation}
    such that for any $N\geq N_0$ there is a sequence of angles $(\phi_n, \theta_n)$ with
    \begin{equation}
        \prod_{n=1}^N U(\phi_n, \theta_n) = V \ .
    \end{equation}
\end{proposition}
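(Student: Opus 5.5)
The plan is to strip away the controllable $Z$-phases and reduce everything to a statement about compositions of $y$-rotations. Write $Y_\mu = \begin{pmatrix}\cos\mu & \sin\mu\\ -\sin\mu & \cos\mu\end{pmatrix}$ and $R_Z(c)=\exp(-icZ)$. Then $U(\phi,\theta)=R_Z(\phi-\phi_r)\,Y_\lambda\,R_Z(\theta-\theta_r)$, and the two outer angles are arbitrary. In a product $\prod_{n=1}^N U(\phi_n,\theta_n)$, adjacent $Z$-rotations merge, so the product equals
\begin{equation*}
R_Z(a)\,Y_\lambda R_Z(c_1) Y_\lambda R_Z(c_2)\cdots R_Z(c_{N-1}) Y_\lambda\,R_Z(b)
\end{equation*}
with $a,b,c_1,\dots,c_{N-1}$ all freely choosable. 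By the $ZYZ$ Euler decomposition, any $V\in\mathrm{SU}(2)$ can be written as $V=R_Z(a')\,Y_\mu\,R_Z(b')$ with $\mu\in[0,\pi/2]$. Here $\mu$ is determined by $|V_{11}|=\cos\mu$, and the sign of $\sin\mu$ can be fixed by shifting $a'$ or $b'$ by $\pi/2$. It therefore suffices to show the following: for $N\ge N_0$, the inner product $Y_\lambda R_Z(c_1)\cdots Y_\lambda$ can be made equal to $R_Z(a'')Y_\mu R_Z(b'')$ for every $\mu\in[0,\pi/2]$. The outer $Z$-phases are then absorbed into $a,b$.

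The key tool is a composition lemma for two such blocks. Take $W=Y_\alpha R_Z(c) Y_\beta$ with $\alpha,\beta\in[0,\pi/2]$. A direct computation of the $(1,1)$ entry gives
\begin{equation*}
|W_{11}|^2=\cos^2 c\,\cos^2(\alpha+\beta)+\sin^2 c\,\cos^2(\alpha-\beta).
\end{equation*}
Since every $\mathrm{SU}(2)$ element equals $R_Z Y_\mu R_Z$ with $\cos\mu=|W_{11}|$, varying $c\in[0,\pi/2]$ sweeps $\mu$ continuously between the two endpoint values. Hence $W$ realizes, up to outer $Z$-phases, every $Y_\mu$ with $\mu$ in the interval between $|\alpha-\beta|$ and $\tilde\mu(\alpha+\beta)$. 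Here $\tilde\mu(x)=\min\{x,\pi-x\}$ is the folded angle with $\cos\tilde\mu=|\cos x|$. In particular, two copies of the same angle $\alpha$ realize every $\mu\in[0,\min\{2\alpha,\pi-2\alpha\}]$. The intermediate $Z$-phases at block junctions stay free, so blocks can be composed recursively.

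Next comes the doubling induction. If $\lambda\le\pi/4$, then $\lambda_{\text{eff}}=\lambda$. The claim is that $2^k$ gates realize every $\mu\in[0,\min\{2^k\lambda,\pi/2\}]$. Split $2^k$ gates into two halves. By induction each half realizes any angle $\alpha\le\min\{2^{k-1}\lambda,\pi/2\}$; choose the same $\alpha$ in both halves. The lemma then gives every $\mu\in[0,\min\{2\alpha,\pi-2\alpha\}]$. Taking $\alpha=\min\{2^{k-1}\lambda,\pi/4\}$ covers $[0,\min\{2^k\lambda,\pi/2\}]$. If $\lambda>\pi/4$, the single gate cannot simply be doubled, but the lemma with $\alpha=\beta=\lambda$ already gives every $\mu\in[0,\pi-2\lambda]=[0,2\lambda_{\text{eff}}]$ from two gates. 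The same induction then runs from there with $\lambda_{\text{eff}}$ in place of $\lambda$. A single gate also realizes $\mu=\lambda\ge\pi/4$, which lets the base case be arranged so that $2^k$ gates cover $[0,\min\{2^k\lambda_{\text{eff}},\pi/2\}]$. Full coverage of $[0,\pi/2]$ requires $2^k\lambda_{\text{eff}}\ge\pi/2$, i.e.\ $k=\lceil\log_2\frac{\pi}{2\lambda_{\text{eff}}}\rceil$, which gives $N_0$.

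Finally, $N>N_0$ must be handled as well. For this I would use the lemma with unequal blocks. Append the remaining $N-N_0$ gates as a block realizing some $\beta\in[0,\pi/2]$; a single gate can be padded via two gates realizing $Y_0$ up to phases, using $c=\pi/2$. Then compose with the $N_0$-block: for target $\mu$, choose the $N_0$-block angle $\alpha$ so that $\mu$ lies between $|\alpha-\beta|$ and $\tilde\mu(\alpha+\beta)$. Since $\alpha$ ranges over all of $[0,\pi/2]$, setting $\alpha=\mu+\beta$ or $\alpha=\beta-\mu$ (whichever lies in $[0,\pi/2]$), or using folding, places $\mu$ as the endpoint $|\alpha-\beta|$. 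I expect the main obstacle to be the bookkeeping in this step and in the $\lambda>\pi/4$ base case. The folding $\tilde\mu$, the sign conventions of $\sin\mu$, and the endpoints of the continuity intervals all need to be checked so that the covered interval really closes up to $[0,\pi/2]$.
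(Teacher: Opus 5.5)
Your proposal is correct and follows essentially the same doubling argument as the paper: composing two blocks with a free intermediate $Z$-phase sweeps the effective transfer angle over $[0,\min\{2\alpha,\pi-2\alpha\}]$ (the paper reads this off the off-diagonal entry $\sin(2\lambda)\cos(\phi'/2)$), and iterating this from two gates yields $N_0$. Your unequal-angle composition lemma, which the paper only sketches in its closing remark on asymmetric ramps, also gives you the case $N>N_0$, which the paper's proof does not treat explicitly; note only that the induction must start at $k=1$, since a single gate realizes only $\mu=\lambda$ rather than all of $[0,\lambda]$.
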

The proof of Proposition~\ref{lemma:UniversalSU2} can be found in Appendix~\ref{sec:ProofUniversalSU2}. Note that in the limit of small transition rates $r$, we find $N_0 = 2^{\lceil \log_2 \frac{\pi}{2\lambda} \rceil} = \mathcal{O} \left( r^{-1/2} \right)$. The key takeaway is that if sufficiently many sweeps and waiting times are concatenated, any state of the two-level system can be prepared. In the case where the transition rate $r= \sin^2 \lambda =\frac{1}{2}$, we have $N_0 = 2$, i.e. two sweeps are sufficient to reach universality. 

Although the exact value of $r$ depends on the individual Hamiltonian, an argument can be made that if there is a closing gap elsewhere along the anneal, which closes slower than the minimal gap as the system size grows, then $r=\frac{1}{2}$ can be approximated by slowing the ramps down and remaining close to adiabatic. In this scenario, the required anneal time for the LZS-schedules would be governed by the scaling of the larger gap.
A class of instances where we would expect this type of gap profile to occur are the instances discussed in \cite{Amin_2009a, Werner_2023}. Here, there is a slowly closing gap due to a quantum phase transition from a phase with a delocalized ground state in the computational basis to a phase with a localized ground state, followed by another avoided level crossing between two localized eigenstates.

There is a second takeaway from Eq.~\eqref{eq:PopulationTransferStep}. The first two terms are essentially the transitions one would observe in a classical Markov model with the transition matrix
\begin{equation}
    M = \begin{pmatrix}
        1-r & r \\ r & 1-r
    \end{pmatrix} \ .
\end{equation}
Only the last term contains a dependence on the phases $\tilde{\phi}$ and can therefore be interpreted as the ``quantum part" of the population transfer. Interestingly, the populations of a classical model evolving under the matrix $M$ will equilibrate to $P_1^{(n)} = 1-P_1^{(n)} = 0.5$. In order to pump up the population (or, equally, in order to decrease it) the coherent effect of the phases is needed. If coherence is compromised, we expect there to be an effect on the accessible populations, as we discuss in the next section.

We make one final remark. The number of sweeps required to reach any two-level state according to Proposition~\ref{lemma:UniversalSU2} for small transition probabilities $r$ scales as $\mathcal{O}(1/\sqrt{r})$. Without exploiting the coherent phase interferences to populate the desired state, e.g. by ramping once and measuring the state at the end, the system would transition to the excited state with probability $r$, implying that one would require $\mathcal{O}(1/r)$ repetitions of the ramp to observe the excited state. Thus, the exploitation of coherent effects can deliver a quadratic speedup in this simple preparation of the excited state, similar to Grover search~\cite{Roland_2002}.

The idea of exploiting LZS interference for population inversion has been investigated in the quantum control literature before~\cite{Teranishi_1998, Teranishi_1999, Hsiao_2025}. Specifically, Proposition~\ref{lemma:UniversalSU2} formally proves that universal control is possible, which has recently been anticipated by Hsiao et al.~\cite{Hsiao_2025}.

\subsection{The impact of decoherence}
\label{sec:DecoherenceTheory}
To get an intuition of what happens to the coherent transfer mechanism that we have discussed above, it is instructive to decompose the gate $U(\phi, \theta)$ into its population transfer part $U_r$ and the phase parts $U_\theta$, $U_\phi$. Instead of state vectors $|\psi_n \rangle$ and $|\psi_{n+1} \rangle$, we now consider density matrices $\rho_n$ and $\rho_{n+1}$. The unitaries are applied consecutively, but between the respective unitary channels $D_r$ and $D_\theta$ (the last phase-gate $U_\phi$ can be omitted for this consideration, as it does not impact the population) we insert a decoherence channel $D_\mu$, where $\mu$ is the error rate, i.e.
\begin{equation}
    \rho_{n+1} = D_r \left( D_\mu \left( D_\theta \left(\rho_n \right) \right) \right) \ .
    \label{eq:SimpleNoiseModel}
\end{equation}
We make the simplifying assumption that $\rho_n = |\psi_n \rangle \langle \psi_n|$ is pure. We consider the three major noise channels, dephasing noise, amplitude damping and depolarizing noise. For each, the excited state population can be determined and is shown in Table \ref{tab:NoiseChannelSummary}.

Again, we obtain a range of possible populations $P_1^{(n+1)}$ for different values of $\tilde{\phi}$, which we plot in Figure \ref{fig:AccessiblePopulations}(b), where we chose an example transition rate of $r=0.3$. We can see that different noise channels distort the ellipses in different ways, such that the invariance under inversion of $P_1^{(n)}$ and $P_1^{(n+1)}$ is lost. This corresponds to the loss of invertibility of the noisy quantum gate.

Additionally, Figure \ref{fig:AccessiblePopulations}(b) shows that the accessible populations do not cover the whole range from 0 to 1. This means that the range of population $P_1^{(n+1)}$ of the first excited state that can be reached by the noisy gate is restricted, no matter what the previous population $P_1^{(n)}$ was. Therefore, noise impedes the mechanism of control, which is used by our proposed schedule class to reach the excited state. Based on this argument, we would expect that in an open quantum system, there is an upper bound on the population inversion that can be achieved. We verify this numerically later.

\begin{figure*}
    \centering
    \begin{tikzpicture}
        \node[anchor=north west, inner sep=0] (fig) at (0,0)
            {\includegraphics[width=0.85\linewidth]{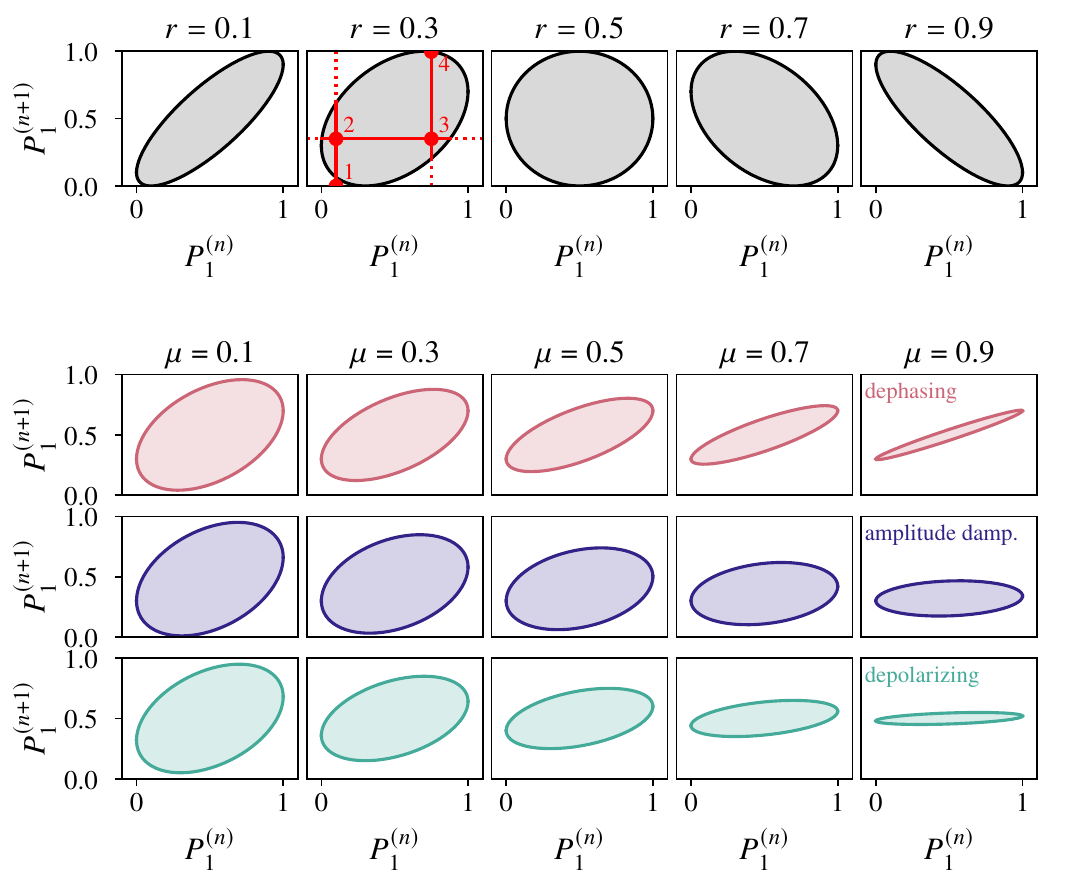}};
        \node[font=\large] at (fig.north west) {(a)};
        \node[font=\large] at ($(fig.north west)!0.35!(fig.south west)$) {(b)};
    \end{tikzpicture}
    \caption{(a): Visualization of accessible excited state populations $P_1^{(n+1)}$ as per Eq.~\eqref{eq:PopulationTransferStep} of the first excited state after $n+1$ LZS-sweeps assuming the initial population $P_1^{(n)}$. For each $P_1^{(n)}$, there is a set of accessible $P_1^{(n+1)}$, which are determined by the choice of phase $\tilde{\phi}$, shown here as the shaded ellipses. For $r=0.3$, we indicate how a sequence of sweeps with appropriately chosen phases between them can pump the excited state population. The red solid lines indicate which populations are accessible after the next LZS-gate, while the red points indicate the populations at each step. (b): Same accessible populations as in (a) under the influence of different types of noise for a fixed transition rate $r=0.3$. The respective expressions are listed in Table \ref{tab:NoiseChannelSummary}.}
    \label{fig:AccessiblePopulations}
\end{figure*}

\begin{table*}
    \centering
    \begin{tabular}{|c|c|}
        \hline
        Noise channel $D_\mu$ & \rule{0pt}{20pt} $P_1^{(n+1)} = \langle E_1 | \rho_{n+1} | E_1 \rangle = \langle E_1 | D_r \left(D_\mu \left(D_{\theta} \left(\rho_{n}\right)\right)\right) | E_1 \rangle $  \\[6pt]
        \hline \hline
         Dephasing & \rule{0pt}{20pt} $(1-r)P_1^{(n)} + r\left(1-P_1^{(n)}\right) - 2\cos(\tilde{\phi}) (1-\mu) \sqrt{r(1-r)P_1^{(n)} \left(1-P_1^{(n)}\right)}$\\[6pt]
         \hline
         Amplitude damping & \rule{0pt}{20pt} $r\left(1-P_1^{(n)} \right) + \left[ r\mu + (1-r)(1-\mu) \right] P_1^{(n)} - 2\cos(\tilde{\phi})\sqrt{r(1-r)P_1^{(n)} \left(1-P_1^{(n)} \right)(1-\mu)}$\\[6pt]
         \hline
         Depolarizing & \rule{0pt}{20pt} $(1-\mu) \left[ (1-r)P_1^{(n)} + r\left(1-P_1^{(n)} \right) - 2\cos(\tilde{\phi})\sqrt{r(1-r) P_1^{(n)} \left( 1-P_1^{(n)} \right)} \right] + \frac{\mu}{2}$\\[6pt]
         \hline
    \end{tabular}
    \caption{Population $P_1^{(n+1)}$ given $P_1^{(n)}$ after another application of a LZS-gate Eq.~\eqref{eq:WaitPulseWaitGate} for different decoherence channels modeled as in Eq. \eqref{eq:SimpleNoiseModel}. The resulting populations are depicted in Figure \ref{fig:AccessiblePopulations}(b).}
    \label{tab:NoiseChannelSummary}
\end{table*}

\subsection{Efficient optimization is possible}
\label{sec:EfficientOptimization}
A natural question that arises in the context of variational algorithms is the question of optimization efficiency. While an algorithm using optimized parameters can show substantial speedups on certain problems, it is often unclear if the observed speedup is genuine or an artifact of out-sourcing computational effort to the classical optimizer. For example, if the observed speedup is exponential, it can happen that the resulting parameter-optimization task is NP-hard.

Often times the exponential blow-up of required optimization resources is associated with a growing parameter space. In contrast, since our proposed class of schedules requires only a fixed number of parameters, we are able to show that the optimization can be done in polynomial time.
\begin{proposition} (informal)
    Let $\| H_X \|, \|H_Z \| = \poly(N)$. Assume that the optimal LZS-schedule $s(t)$ prepares the ground state of the target Hamiltonian with sufficient probability in polynomial time, $T=\poly(N)$. Then the optimal parameters can be determined by a classical optimizer in a polynomial number of calls to the quantum annealer and each annealing run takes at most $\poly(N)$ time.
    \label{prop:efficiency}
\end{proposition}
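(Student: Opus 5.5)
The proof will be a grid-search argument that uses two facts: the LZS ansatz has a constant number of parameters, and the final success probability is Lipschitz in those parameters with a Lipschitz constant polynomial in $N$. The parameters are $\mathbf{p}=(s_1,s_2,t_1,\dots,t_5)$, which we write as $(s_1,s_2,T,t_1,\dots,t_4)$. The assumption $T=\poly(N)$ lets us restrict to the compact box $\mathcal{B}=[0,1]^2\times[0,T_{\max}]^5$ with $T_{\max}=\poly(N)$. Let $F(\mathbf{p})=\langle\psi(T;\mathbf{p})|\Pi_0|\psi(T;\mathbf{p})\rangle$ be the success probability, where $\Pi_0$ projects onto the ground space of $H_Z$. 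Alternatively, $F$ can be the final energy, which is bounded by $\|H_Z\|=\poly(N)$. By hypothesis there is $\mathbf{p}^*\in\mathcal{B}$ with $F(\mathbf{p}^*)\ge p_{\text{succ}}$, where $p_{\text{succ}}$ is a constant or at least $1/\poly(N)$.

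First I will prove the Lipschitz bound. Let $\mathbf{p},\mathbf{p}'$ be two parameter choices. Their schedules $s(t),s'(t)$ are piecewise linear, so I can compare the evolutions $U,U'$ with the standard Duhamel bound
\begin{equation}
\|U(T)-U'(T)\|\le \int_0^{T_{\max}} |s(t)-s'(t)|\,\|H_Z-H_X\|\,dt .
\end{equation}
The only subtlety is that the two schedules may have different total durations. I handle this by padding the shorter one with a final idle segment at $s=1$. If $T\neq T'$ this padding changes $U$ only by a unitary that commutes with $\Pi_0$ up to the final $H_Z$ evolution, and since $[H_Z,\Pi_0]=0$ the success probability is unaffected. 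Next I need $\sup_t|s(t)-s'(t)|\le C\,\|\mathbf{p}-\mathbf{p}'\|$ uniformly in $t$. This fails near breakpoints whenever a ramp duration $t_i$ is tiny, because the slope $(s_b-s_a)/t_i$ blows up. So I will bound the $L^1$ norm $\int|s-s'|\,dt$ instead. Shifting a breakpoint by $\delta$ or a plateau height by $\delta$ changes the area between the graphs by at most $O(\delta T_{\max})$, and that holds regardless of the slope. Combining the two estimates gives
\begin{equation}
|F(\mathbf{p})-F(\mathbf{p}')|\le 2\|U-U'\|\le L\|\mathbf{p}-\mathbf{p}'\|_\infty,\qquad L=O\big(T_{\max}(\|H_X\|+\|H_Z\|)\big)=\poly(N).
\end{equation}
I expect this $L^1$ trick to be the main obstacle, and I will check it carefully for the reparametrisation in which $T$ is itself a parameter.

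Second, the optimizer itself. I take a uniform grid on $\mathcal{B}$ with spacing $\epsilon=p_{\text{succ}}/(2L)$. The number of grid points is $O\big((T_{\max}/\epsilon)^5\epsilon^{-2}\big)=\poly(N)$, because the dimension $7$ is a constant; this is exactly where the fixed parameter count of the LZS ansatz matters. Some grid point $\mathbf{p}$ lies within $\epsilon$ of $\mathbf{p}^*$, so $F(\mathbf{p})\ge p_{\text{succ}}/2$. Each grid point is a single annealing run of duration at most $7T_{\max}=\poly(N)$.

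Third, the estimation step. I estimate $F$ at each grid point from repeated runs and measurements. Each measurement outcome can be checked classically by evaluating the energy, so success is verifiable. Alternatively I use a Hoeffding bound with $O(\log(\#\text{grid})/p_{\text{succ}}^2)$ shots per point, followed by a union bound. This identifies a parameter set within $p_{\text{succ}}/4$ of the grid optimum using $\poly(N)$ total calls. The same argument works for minimizing energy, with accuracy measured relative to the spectral gap of $H_Z$ at the target.
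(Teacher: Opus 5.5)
Your proof is correct, and its architecture matches the paper's. A constant number of parameters, a search region of polynomial size, and a polynomial Lipschitz constant together make brute-force grid search polynomial in $N$.

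Where you differ is the key estimate. The paper bounds partial derivatives segment by segment:
for the plateaus it gets $\|H(s_j)\|$ and $\|H_Z-H_X\|$ times the plateau length, and for the ramps it uses Duhamel integrals, computing the $t_i$-derivatives after reparametrizing the ramp by $s$. It assembles these into an affine gradient bound $\|\nabla\mathcal{L}\|\le K_1\|\theta\|+K_2$. It then feeds this through the mean value theorem into an adaptive grid search that doubles the range and halves the spacing each round.

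You instead compare two complete evolutions with a single Duhamel estimate controlled by the $L^1$ distance between the schedules. You pad with $e^{-iH_Z\tau}$ to compare different total times. This sidesteps differentiability, the $1/(s_2-s_1)$ prefactor, and vanishing ramp durations all at once. In fact a shifted breakpoint costs only $O(\delta)$ in $L^1$, since the schedule has total variation at most $3$, so your $O(\delta T_{\max})$ is generous. What the paper's route buys is explicit per-parameter gradient bounds. Its range doubling is also meant to avoid fixing a box like your $[0,T_{\max}]^5$ in advance, though its iteration count still depends on $\|x^*\|$. You additionally handle finite sampling, which the paper ignores by counting each exact evaluation of $\langle H_Z\rangle$ as one call.

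One remark in your estimation step is wrong as stated. Computing the energy of a measured bitstring does not certify that it is a ground state unless $E_0$ is known. There are two easy fixes:
\begin{itemize}
\item Estimate the energy itself, which is the paper's loss. Hoeffding then needs $O(\|H_Z\|^2\log(\#\text{grid})/\eta^2)$ shots per point, with $\eta$ small compared with the gap of $H_Z$.
\item Take the lowest energy observed over all runs as a proxy for $E_0$. With $O(\log(\#\text{grid})/p_{\text{succ}})$ shots at the good grid point, this proxy equals $E_0$ with high probability, after which your success-probability estimates are legitimate.
\end{itemize}
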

The proof is given in Appendix \ref{sec:EfficiencyProof}. The key caveat here is that the proposition only guarantees finding an efficient schedule in polynomial time if it exists. From the assumptions of Proposition \ref{prop:efficiency}, it follows that the optimal parameters are at most polynomially far away from the origin and the cost function is only inversely polynomially sensitive. This allows us to show that the optimal parameters can be approximated up to an error $\epsilon$ by a brute-force grid search algorithm. This shows that optimization is possible in polynomial time, but we do not claim any optimality of the brute-force algorithm used in the proof. Indeed, the polynomial runtime bound we prove in Appendix \ref{sec:EfficiencyProof} is of high order for practical problems.

While we can show that a polynomial-time solution can also be found by an optimizer in polynomial time, the next question would be under which conditions the LZS-schedule is optimal. In the next section, we will give a broad answer to this question.

\subsection{Conditions for optimality of LZS-schedules}
\label{sec:OptimalityConditions}
Here we give some intuition about when we expect our ansatz to be able to solve an annealing problem faster than adiabatic evolution. We assume that the time evolution can be well described by considering only the instantaneous ground and first excited states. Let $r_{\text{leak}}$ be the leakage rate, i.e. the probability of the system transitioning out of the ground and first excited state subspace. As before, $r$ is the transition rate between ground and first excited state. We require $r_{\text{leak}} \ll r$. This can be guaranteed for example if the low energy subspace is energetically separated from the rest of the spectrum by an energy gap that scales at least inverse-polynomially (in the system size) and the ramp times scale polynomially. Alternatively, the low energy subspace can also be isolated via selection rules, as per the numerator of the adiabatic condition (Eq. \eqref{eq:AdiabaticTheorem}). We also require there to be an interval in the spectrum where transitions between the ground and first excited state can be stimulated at a sufficient rate. This can occur at avoided level crossings where the gap closes polynomially but at a faster rate than the separation between the first excited state and higher excitations or, more generally, on intervals where the eigenstates undergo sufficient change. We call this stretch of the spectrum the ``control interval,'' where sweeping through e.g.~a closing gap will stimulate sufficiently large transitions within the low-energy subspace.  

We now make use of a result by Callision et al.~\cite{Callison_2021} to characterize the control interval more precisely. Let $\ket{\psi_0^-}$ and $\ket{\psi_1^-}$ be the instantaneous ground and excited states on the left side of the control interval $s\in [s^-, s^+]$, and let $\ket{\psi_0^+}$ and $\ket{\psi_1^+}$ be the eigenstates on the right. The defining property of the control interval is that ramps in polynomial time stimulate transitions between ground and first excited state at a rate $r$, while at the same time the dynamics is constrained to the lowest two energy levels. Let us define $H^+$ and $H^-$ to be the Hamiltonians on the right and left limits of the interval respectively. Callison et al.~argued that if the system is initially in the ground state of $H^-$, i.e. $\ket{\psi(t=0)} = \ket{\psi_0^-}$, then after a monotonic ramp to $H^+$ in time $T$ we have
\begin{equation}
    \bra{\psi(t=T)} H^+ \ket{\psi(t=T)} \leq \bra{\psi_0^-} H^+ \ket{\psi_0^-} \ .
\end{equation}
In a two-level system, this directly implies an upper bound on the transition rate $r$:
\begin{equation}
    r = \left| \langle \psi(t=T) | \psi_1^+ \rangle \right|^2 \leq \left| \langle \psi_0^-| \psi_1^+ \rangle \right|^2 \ ,
\end{equation}
where the bound is saturated in the limit of the sudden approximation. This shows that specific conditions need to be met and that not any interval can be used to stimulate transitions between ground and first excited state. As an example, if on a given interval the eigenstates barely change, then we have $|\langle \psi_0^- | \psi_1^+ \rangle|^2 \approx 0$ and there will be no population transfer between ground and first excited state. If, on the other hand, the control interval covers an avoided crossing, then we can describe this scenario locally as a Landau-Zener model, where $\ket{\psi_0^-} \approx \ket{\psi_1^+}$ and vice versa, as depicted in Figure \ref{fig:DiagrammeIntuition}(b). In this case we find $r \leq |\langle \psi_0^- | \psi_1^+\rangle|^2 \approx 1$ and potentially all the population can transition to the excited state in a single sweep, if the sweep is fast enough. However, since the sweep needs to be slow enough such that $r_{\text{leak}} \ll 1$, a slower ramp and thus a smaller $r$ might be necessary. As a third example, if the effective low-energy Hamiltonian can be described in such a way that $H^- \sim X$ and $H^+ \sim Z$, then $r \leq 1/2$. In this case, more than one sweep is necessary for population inversion. As we see in Section~\ref{sec:IsingRingSimulation}, these conditions, $r \approx 1/2$, $r_{\text{leak}} \approx 0$, can be met at quantum phase transitions.

In the examples we consider numerically in Section \ref{sec:Numerics}, the optimal solution often is populating the first excited state in order to avoid adiabatically evolving through the exponentially closing gap. However, inverting the population might not be sufficient. For example, there could be more than one location along the anneal where populations mix non-trivially and their phases interfere. The existence of a control interval allows in principle to overcome a broader class of challenges, as we can see from Proposition~\ref{lemma:UniversalSU2}.

Applying Proposition~\ref{lemma:UniversalSU2} allows us to conclude that in a small number of sweeps we can realize an arbitrary unitary on the low-energy subspace by sweeping through the control interval. Considering that the time evolution before and after the interval will act as some unitary on this subspace, assuming leakage to higher excitations is suppressed adiabatically, the proposed schedule class can prepare whichever state that will evolve into the ground state at $s=1$. Note that if the ramp can be slow enough such that $r = 0.5$ when sweeping through the control interval, then according to Proposition \ref{lemma:UniversalSU2}, $N_0 = 2$ sweeps are sufficient to prepare an arbitrary state in the two-level subspace. If there is a control interval and the dynamics is sufficiently constrained to the instantaneous ground and first excited state, coherent obstructions to the anneal such as the spectral gap between the two lowest eigenstates closing exponentially along the interpolation one or more times can be overcome.

Note that in the numerical examples we consider in Section \ref{sec:Numerics} we focus on the ansatz with three sweeps and two plateaus, as shown in Figure \ref{fig:DiagrammeIntuition}(e). Depending on the problem at hand, ans\"atze with more sweeps and plateaus may be required in order to populate the excited state or to reach universality. If the number of sweeps and plateaus remains constant with the system size, then these ans\"atze inherit the efficient optimization we show in Proposition~\ref{prop:efficiency}, as the proof also applies for a larger, but fixed, number of parameters.

We show in Figure \ref{fig:ControllableIntervalIlustration} an illustration of a spectrum where we expect our ansatz class to provide a substantial speedup over adiabatic evolution. We show the spectral gap between the first excited and ground state $\gamma_{01}$ and the second excited and ground state $\gamma_{02}$ along the interpolation parameter $s$. We annotate the control interval in green, where $\gamma_{01}$ is still inversely polynomially large, and the second excited state is sufficiently far away in energy. Along the entire anneal, the ground and first excited state are separated from the rest of the spectrum, except at the beginning of the anneal, where the system is in the ground state and the second and first excited states may approach closely. This is typically the case if e.g.~the transverse-field Hamiltonian (Eq.~\eqref{eq:DefHx}) is the initial Hamiltonian. In this case, an LZS-schedule $s(t)$ may not enter the red shaded area. While the numerical examples discussed below display one exponentially closing gap, in this example we show two since, if universality on the ground and first excited subspace is achieved, a state that evolves into the target ground state can be prepared on the control interval independent of closing $\gamma_{01}$ in other locations along the anneal.
\begin{figure}
    \centering
    \includegraphics[scale=0.625]{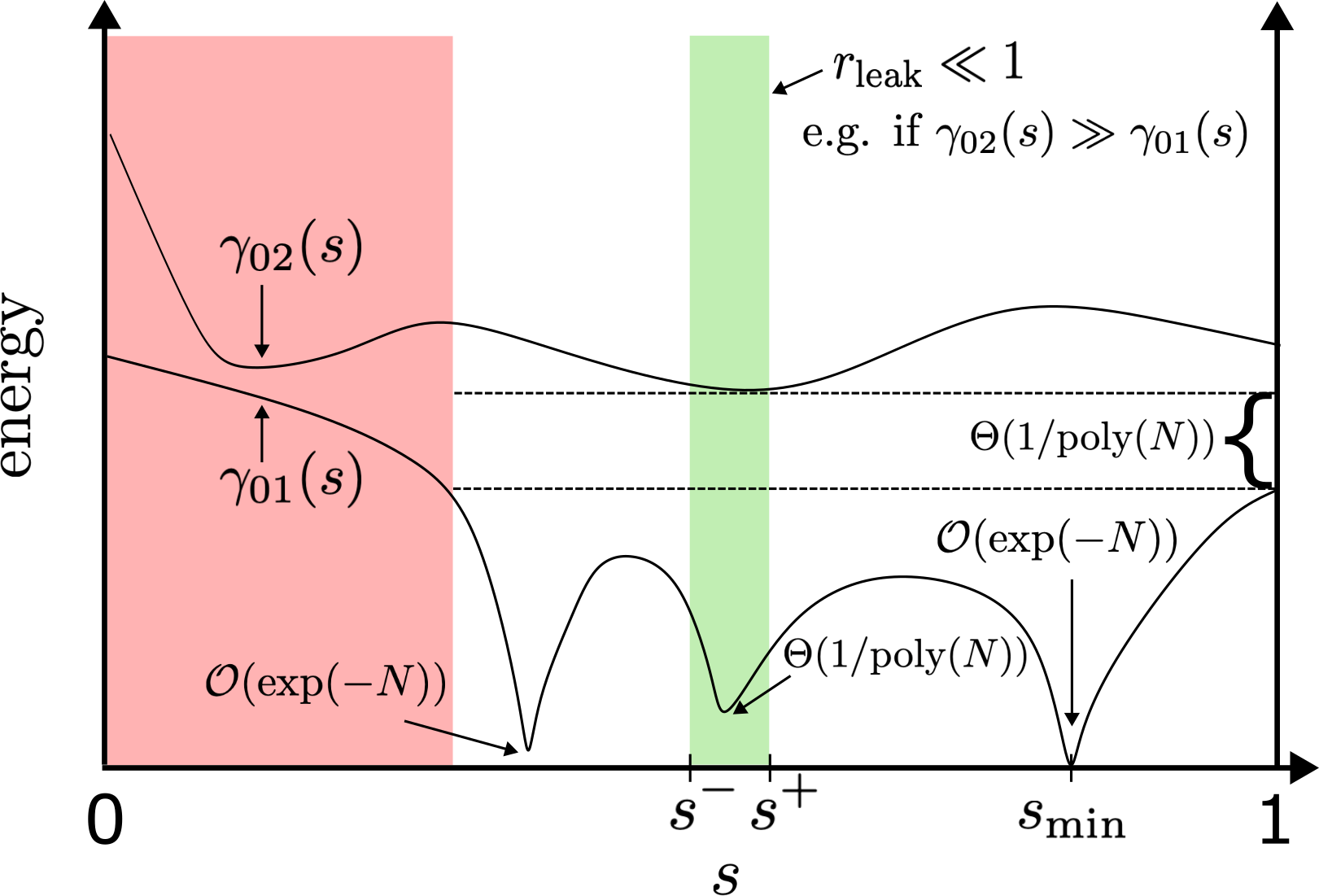}
    \caption{Illustration of a spectrum where we expect our ansatz to provide a speedup over adiabatic evolution. We show the spectral gap between the first excited and ground state $\gamma_{01}$ and the second excited and ground state $\gamma_{02}$ along the interpolation parameter $s$. Green shaded area: control interval, where $\gamma_{01} = \Theta(1/\poly(N)) \ll \gamma_{02}$, such that a transition rate $r$ can be obtained to prepare an arbitrary state. Red shaded area: part of the annealing spectrum where the first and second excited state may not be sufficiently separated. However, since in the beginning of the anneal, the system will be in the ground state, it will be protected from transitioning to the higher levels, while at later times the optimal schedule $s(t)$ could avoid this region.}
    \label{fig:ControllableIntervalIlustration}
\end{figure}

We note that the conditions discussed here are rather strict and may only be found to hold in artificially-constructed or rare problem instances. 
We also do not make a statement on what these conditions imply for the computational hardness of the instances where they hold. On the other hand, while we expect these condition to be sufficient, they may not be necessary, and our ansatz might be more broadly applicable. In this sense, we consider the theoretical discussion here to be a foundation for an ansatz that could be applied heuristically to a larger class of problems.
\section{Numerical results}
\label{sec:Numerics}
The class of annealing schedules we introduced above has seven parameters: two locations $s_{1,2} \in [0, 1]$ of the waiting times along the anneal and five time intervals $t_{1...5} \geq 0$, as depicted in Figure \ref{fig:DiagrammeIntuition}(e). We perform an outer optimization loop over the total time $T = \sum_{i=1}^5 t_i$ and optimize the remaining six free parameters in an inner loop for a fixed $T$. For the inner-loop optimization we use APOSMM \cite{Larson_2018} with COBYLA \cite{Powell_1994} as a gradient-free local optimizer. 

\subsection{Reproduction of result by C{\^o}t{\'e} et al.~on the frustrated Ising ring}
\label{sec:IsingRingSimulation}
We first test our proposed simplified schedule by applying it to the frustrated Ising ring, as was investigated in Refs.~\cite{Cote_2023, Wang_2025}. The target Hamiltonian $H_Z$ for the Ising ring is shown in Figure \ref{fig:FrustratedIsingRingProblem}(a). For a particular choice of the $ZZ$-coupling strengths, the spectrum of the full Hamiltonian $H(s)$ displays a perturbative anti-crossing at $s\approx0.899$. The spectral gap is shown for various system sizes in Figure~\ref{fig:FrustratedIsingRingProblem}(b). The spectral gap closes exponentially as a function of the system size at the perturbative anti-crossing. However, as the system size increases, the gap also appears to close at  $s\approx0.514$ with increasing system size. This closing of the gap is associated with the paramagnetic to ferromagnetic phase transition.
\begin{figure}
    \centering
    \begin{tikzpicture}
        \node[anchor=north west, inner sep=0] (a) at (-0.0cm,0)
            {\includegraphics[width=0.95\linewidth]{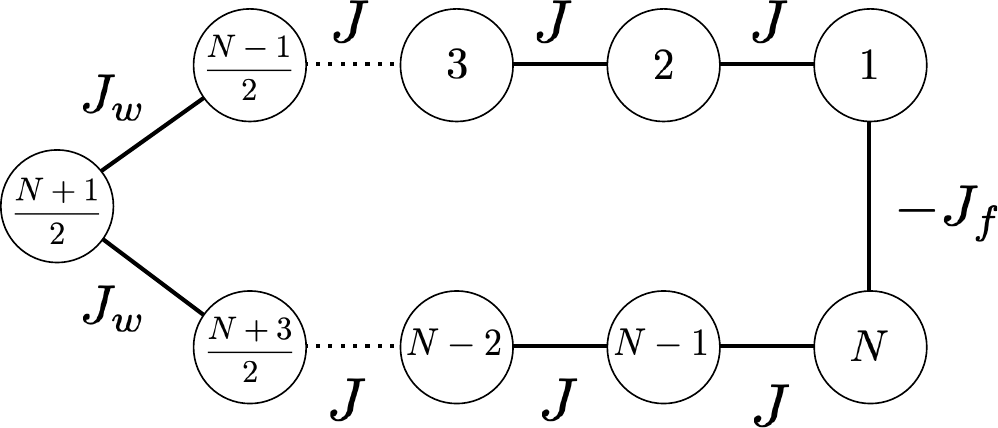}};
        \node[anchor=north west, inner sep=0] (b) at (0, -4.25cm)
            {\includegraphics[width=0.95\linewidth]{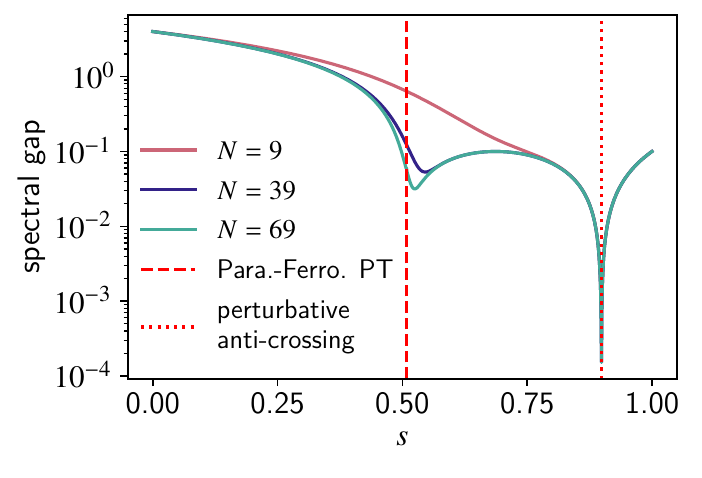}};
        \node[font=\large] at ($(a.north west) + (0.0cm,0)$) {(a)};
        \node[font=\large] at ($(b.north west) + (0, 0)$) {(b)};
    \end{tikzpicture}
    \caption{(a) Connectivity graph of the target Hamiltonian for the frustrated Ising ring. (b) Spectral gap for the Hamiltonian $H(s)$ associated with (a) for coupling values $J=1$, $J_w = 0.5$ and $J_f = 0.45$. Also shown are the approximate positions of the para- to ferromagnet transition at $s\approx 0.514$ and the perturbative anti-crossing at $s \approx 0.899$. }
    \label{fig:FrustratedIsingRingProblem}
\end{figure}

We optimize the schedule introduced above for various system sizes $N$. For each $N$, we employ the bisection search method used in Ref.~\cite{Cote_2023} to find the optimal value of $T$, while for the simulation of the quantum system we use the Nambu formalism, as discussed in Ref.~\cite{Wang_2025}. The Nambu formalism makes use of the fact that the frustrated Ising ring can be described by a system of non-interacting fermions, so the time-evolution only needs to be computed in a $2N$-dimensional vector space, instead of an exponentially large one. This allows us to simulate comparatively large system sizes of the frustrated Ising ring.

At each step of the bisection search method, the algorithm fixes a time $T$ and runs the optimization of the inner loop. The inner-loop optimization is considered successful if the optimizer is able to find a schedule that prepares a state $|\psi(T; \theta) \rangle$ such that
\begin{equation}
    \langle \psi(T; \theta) | H_Z | \psi(T; \theta) \rangle \leq E_0(s=1) + c \cdot \Delta E(s=1) \ ,
    \label{eq:IsingRingCriterium}
\end{equation}
where $E_0(s=1)$ and $\Delta E(s=1)$ are the ground state energy and spectral gap of $H_Z$ respectively, and $0 < c < 1$ is a small number that determines the accuracy of the ground state preparation, i.e.\@ it tunes the difficulty of the optimization task. If the inner-loop optimization is successful, the bisection algorithm continues the search for the optimal time below the current $T$. If the inner-loop optimizer terminates after $N_{\text{maxiter}} = 10^4$ calls to the annealer and the optimizer does not find a schedule that satisfies Eq.~\eqref{eq:IsingRingCriterium}, the inner-loop optimization is unsuccessful, and the search continues with a larger $T$.

The optimized $T$ as a function of system size $N$ is shown in Figure \ref{fig:FrustratedIsingRingResult}(a). Here we use $c=0.1$ and $c=0.5$, in accordance with Refs. \cite{Cote_2023, Wang_2025}. We fit the data with a power law in order to extract the scaling exponent, and we observe that our ansatz results in a sub-quadratic scaling of the required $T$, compared to the quadratic scaling of Ref.~\cite{Cote_2023}. With increasing difficulty, set by the parameter $c=0.1$, the exponent of the scaling is not substantially changed, but the required runtime increases by an $N$-independent factor.

\begin{figure}
    \centering
    \begin{tikzpicture}
        \node[anchor=north west, inner sep=0] (fig) at (0,0)
            {\includegraphics[width=0.92\linewidth]{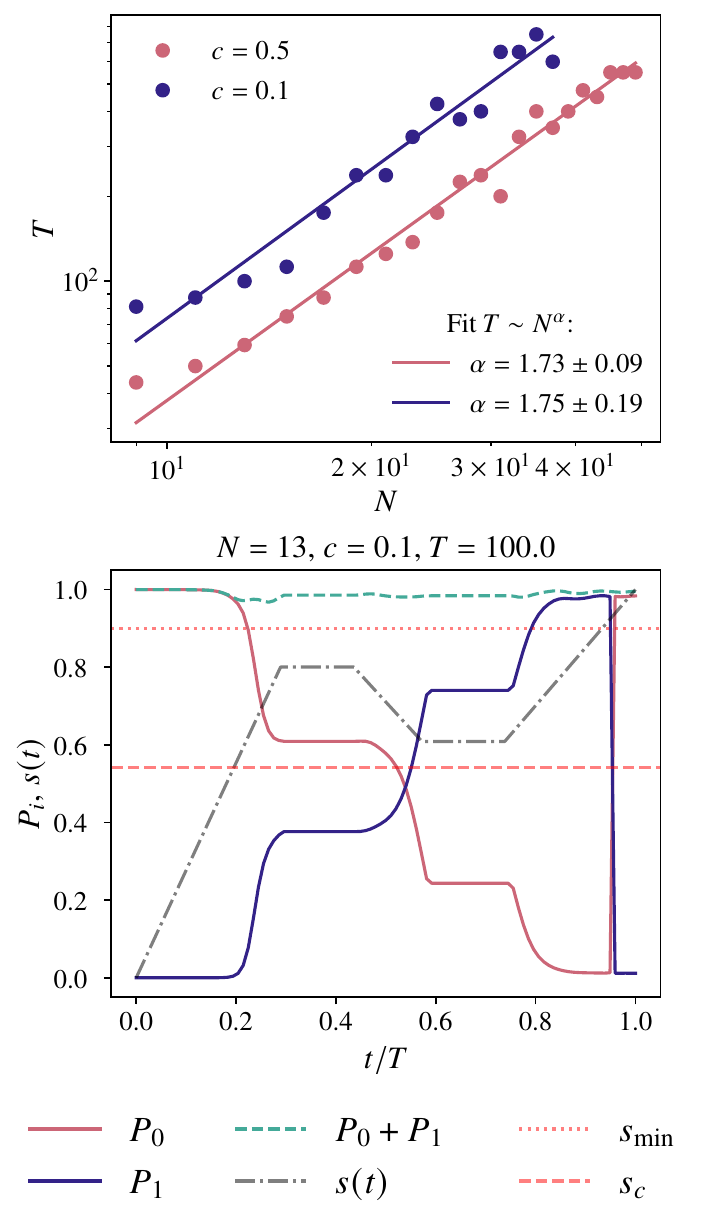}};
        \node[font=\large] at (fig.north west) {(a)};
        \node[font=\large] at ($(fig.north west)!0.45!(fig.south west)$) {(b)};
    \end{tikzpicture}
    \caption{(a) Optimum total time $T$ as found by running the optimization algorithm as in Ref.~\cite{Cote_2023} to prepare the ground state with at least $c=0.5, 0.1$ overlap.  Solid lines are the best fit curves of the form $b N^{\alpha}$.
    (b) The optimized schedule $s(t)$ for a small system size $N=13$, where state vector simulation is feasible. We show the populations of the instantaneous ground state $P_0$ and the first excited state $P_1$ along the anneal as well as their sum $P_0 + P_1$. 
    The horizontal dashed and dotted lines indicate the critical values $s_c$ where the Ising ring transitions from the paramagnetic to the ferromagnetic phase and the location of the minimal spectral gap due to the perturbative anti-crossing respectively . Note that for small system sizes, the small spectral gap associated with the PT is slightly above $s_c$, as shown in Figure \ref{fig:FrustratedIsingRingProblem}(b).}
    \label{fig:FrustratedIsingRingResult}
\end{figure}
For small system sizes, we further analyze the dynamics of the anneal by exact state vector simulation using QuTip \cite{qutip5}. We calculate the occupation of the instantaneous energy levels
\begin{equation}
    \begin{aligned}
        P_i(t) = |\langle E_i(t)| \psi(t) \rangle |^2 \ ,
    \end{aligned}
\end{equation}
for $i=0,1$, as shown in Figure \ref{fig:FrustratedIsingRingResult}(b). Here, we show the optimized schedule and the eigenstate populations $P_0$ and $P_1$ of the ground and first excited states, respectively, for $N=13$ and $c=0.1$ as a representative case. Additional data for $N=9, 11, 13$ and $c=0.5, 0.1$ can be found in Appendix~\ref{sec:AppendixIsingRingSchedules}. We see that the dynamics behaves in a similar way to that reported by C{\^o}t{\'e} et al. We see that the schedule populates the first excited state by sweeping across the paramagnetic transition, and then restores the population to the ground state when it crosses the perturbative crossing. The dynamics are also well restricted to the lowest two instantaneous energy levels, as can be seen by the sum $P_0 + P_1 \approx 1$ for most of the anneal.
Given that our ansatz is inspired by the intuition that the system dynamics is restricted to the ground and first excited states, we believe we have correctly identified LZS interference as the working mechanism behind the optimized annealing schedules observed in Ref.~\cite{Cote_2023}.

We believe that the smaller exponent using our ansatz compared to the quadratic scaling of $T$ found by C{\^o}t{\'e} et al.\@ can be explained by how the optimization of the schedule parameters was performed. For each $T$, C{\^o}t{\'e} et al.\@ used a fixed compute budget for the inner loop optimization of the schedule parameters. Once the budget is exhausted, the algorithm assumes that at the given time budget $T$, the ground state cannot be prepared with sufficient overlap, and $T$ is increased. We also choose this compute budget as a fixed value that also does not increase with system size. Thus, the number of bisection steps only increases with system size $N$ due to the bisection search having to search a larger interval. Bisection search finds the optimal $T$ in a number of bisection steps logarithmic in the optimal $T$, hence we conclude that the number of bisection steps in this experiment only increases logarithmically with system size for $T=\poly(N)$. However, under certain circumstances, we expect this strategy to skew the asymptotic scaling. As $N$ increases, we expect the inner-loop optimizer to fail to find the optimal schedule with increasing probability, as the fine-tuning of the schedule parameters likely becomes more difficult with growing $N$. Then, the bisection will assume some $T$ to be too short to solve the problem, even though it would be sufficient. This will lead to a systematic $N$-dependent over-estimation of the required $T$, thus increasing the scaling exponent. We believe that this could explain the difference between the exponents obtained by our ansatz compared to the ansatz by C{\^o}t{\'e} et al.

The frustrated Ising ring with the same choice of parameters was also studied by Wang et al.~\cite{Wang_2025}, where they find a linear scaling of the required time using discretized time evolution as an ansatz for QAOA~\cite{Farhi2014}. However, in their work the time evolution is not restricted to the low-energy subspace. Rather, the lowest four instantaneous eigenstates are only half occupied at small system sizes and almost completely vacated for large systems, except for the very beginning and end of the evolution (Figure 7 in Ref.~\cite{Wang_2025}). The approach of Wang et al. seems to work due to a yet unidentified principle, which we expect to be different from the mechanism of the schedules proposed here, as indicated by the almost complete evacuation of the low-energy subspace. We also note that the approach proposed by Wang et al.\@ requires the number of parameters to scale with the system size, while in our setting the number of parameters to optimize is fixed and does not scale with system size. 

\subsection{Toy model of first-order quantum phase transitions}
\label{sec:ToyModel}

Ref.~\cite{Werner_2023} proposes a simple toy model, where instances can be generated and classified according to the presence or absence of a first-order quantum phase transition in the localized ground state phase. These transitions correspond to avoided level crossings, also called perturbative anti-crossings, in the ferromagnetic phase and are a known slow-down mechanism for adiabatic quantum computation \cite{Altshuler_2010}. The target Hamiltonian of the toy model has a global and a local minimum and no other structure beyond that, allowing us to isolate the physical properties leading to speed-up conditions with our ansatz. Therefore, we test our proposed diabatic annealing schedule on this toy model with two instances: one with a perturbative anti-crossing  where we expect a substantial speedup, and another without an anti-crossing where we expect no substantial advantage.

The Hamiltonian of the toy model is given by
\begin{equation}
    H(s) = - (1-s) \sum_{n=1}^N X_n / N + \sum_{z\in \{ 0, 1\}^N} E_z \ket{z}\bra{z} \ ,
    \label{eq:ToyModelHamiltonian}
\end{equation}
where the sum over $z$ is over all bit strings with $N$ bits. The driver Hamiltonian is a rescaled transverse field Hamiltonian with ground state 
\begin{equation}
    \ket{E_0(s=0)} = 2^{-N/2} \sum_{z\in \{ 0, 1\}^N} \ket{z} \ .
\end{equation}
The target Hamiltonian $H_Z$ is diagonal and characterized by its eigenvalues $\left\{ E_z \right\}$. When taken as the adjacency matrix of a graph $G$ with $2^N$ nodes, the driver Hamiltonian defines a hypercube graph that is a $N$-regular graph. One node of the graph, which we can assume to be node $z=0$, is assigned the ground-state energy $E_0 = -1$. Far away in terms of distance on the graph $G$, the node $z=2^N-1$ is assigned the first excited-state energy $-1 < E_1 < 0$, where $E_1$ is sampled from the uniform distribution $\mathcal{U}(-1, 0)$. This local minimum is extended in size by randomly selecting a neighbor and also assigning it the energy $E_1$. The nodes with the energy $E_1$ define the set $V$. Depending on the depth of the local minimum, an instance might display an avoided level crossing in the localized phase. As discussed in Ref.~\cite{Werner_2023}, an instance exhibits a localized-localized transition if
\begin{equation}
    \frac{1}{1 + \langle E \rangle - E_0} \leq \frac{N-\phi(V)}{N-\phi(V) + N(E_1 - E_0)} \ ,
\end{equation}
where $\langle E \rangle = 2^{-N} \sum_{z \in \{0, 1 \}^N} E_z$ is the average energy over all eigenstates of $H_Z$. The quantity $\phi(V)$ is called the conductance of the local minimum $V$, and it is defined as
\begin{equation}
    \phi(V) = \frac{|\partial V|}{|V|} \ ,
\end{equation}
where $\partial V$ is the set of edges in $G$ with exactly one end in the set $V$. An example of such a spectrum is shown in Figure \ref{fig:ToyModelSpectra}(a). In contrast, if 
\begin{equation}
    \frac{1}{1 + \langle E \rangle - E_0} \geq \frac{d_{\max}(V)}{d_{\max}(V) + N(E_1 - E_0)} \ ,
\end{equation}
where $d_{\max} (V)$ is the maximum degree of the subgraph induced by $V$, the instance does not display a localized-localized level crossing but merely a delocalized-localized one, as depicted in Figure \ref{fig:ToyModelSpectra}(b). In the instances we consider here, where the local minimum $V$ contains $|V| = 2$ nodes, we find $d_{\max}(V) = 1$ and $|\partial V| = 2N-1$.

Since here we are investigating specific instances instead of a size-dependent scaling, we numerically optimize the schedule parameters for a fixed total time $T$, where the cost function to be minimized is again the expectation value of the target Hamiltonian $H_Z=\sum_{z\in \{ 0, 1\}^N} E_z \ket{z}\bra{z}$. For each value of $T$, the inner-loop optimizer runs $N_{\text{maxiter}} = 10^4$ calls to the quantum annealer. For each optimized schedule found at a given $T$, we calculate the normalized residual energy $\varepsilon_{\text{residual}}$ and fidelity $\mathcal{F}$,
\begin{eqnarray}
     \varepsilon_{\text{residual}}(T) &=& \frac{\langle \psi(T) | H_Z | \psi(T) \rangle - E_0}{ E_1 - E_0} \ ,
    \label{eq:DefResEnergy1} \\
     \mathcal{F}(T) &=& |\langle \psi(T) | 0 \rangle|^2 \ ,
    \label{eq:DefFidelity}
\end{eqnarray}
to compare these solutions against linear ramp annealing in Figure \ref{fig:ToyModelResult}.
We perform this optimization for both instances (Figure~\ref{fig:ToyModelSpectra}), and we distinguish between two types of optimized schedules found:  monotonic ($s_1 \leq s_2$) and non-monotonic ($s_1 > s_2$, compare Figure \ref{fig:DiagrammeIntuition}(e)) solutions.

In the case of the instance without the localized-localized crossing (Figure~\ref{fig:ToyModelSpectra}(b)), the variational schedules are able to reach the ground state at much lower $T$ compared to the linear ramp, but for sufficiently large $T$ both are able to find the ground state with high probability. The speedup of the variational schedules compared to the linear schedule is because the variational schedule has the flexibility to ramp fast to the minimal gap, pass it slowly, and then ramp quickly to the end. This is reflected by the monotonicity of the optimal schedules in Figure \ref{fig:ToyModelResult}(c,d).

In the case of the instance with the localized-localized crossing (Figure~\ref{fig:ToyModelSpectra}(a)), the linear schedule is almost unable to prepare the target ground state with high probability for the range of $T$ values explored. The optimal schedules almost exclusively assume a non-monotonic shape, and the target ground state fidelity reaches close to unity.
While some overlap is observed for short anneal times for the linear ramp, the fidelity is still much lower compared to the fidelity reached by the variational schedules. At these shorter times, the linear ramp dynamics are such that the system partially jumps to the first excited state at the delocalized-localized transition, which then leads to some population ending up in the final ground state after crossing the localized-localized transition.

Examples of the schedules and the populations of the instantaneous eigenstates during the evolution are shown in Figure~\ref{fig:ToyScheduleExample}, where we choose the value of $T$ that minimizes the target expectation value. For the instance with the localized-localized transition, we can clearly see that the optimized schedule behaves as expected, namely that the sweeps are used to populate the first excited state before the localized-localized anti-crossing. After sweeping past the localized-localized transition, most of the quantum state population ends up in the target ground state. In contrast, for the linear ramp schedule, the optimal time is the maximum $T$ that we tested, and even this time is insufficient since the target ground state population ends up very close to zero.

These results indicate that the localized-localized crossings, which often hinder linear quantum annealing schedules, can be overcome using the LZS-ansatz. Next, we will test this finding on an instance of an NP-hard problem.

\begin{figure}
    \centering
    \begin{tikzpicture}
        \node[anchor=north west, inner sep=0] (fig) at (0,0)
            {\includegraphics[width=0.9\linewidth]{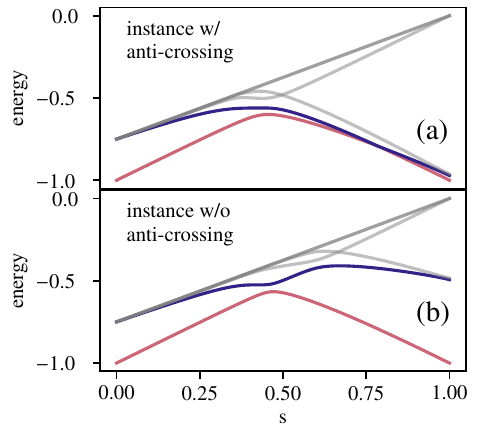}};
    \end{tikzpicture}
    \caption{The lowest six eigenvalues of the Hamiltonian $H_s$ for two instances of the toy model defined by Eq.~\eqref{eq:ToyModelHamiltonian}: (a) an instance with a localized-localized crossing and (b) an instance without a localized-localized crossing. The ground state energy is shown in red, the first excited state energy in blue and the higher excitations in gray.}
    \label{fig:ToyModelSpectra}
\end{figure}

\begin{figure}
    \centering
    \begin{tikzpicture}
        \node[anchor=north west, inner sep=0] (fig) at (0,0)
            {\includegraphics[width=\linewidth]{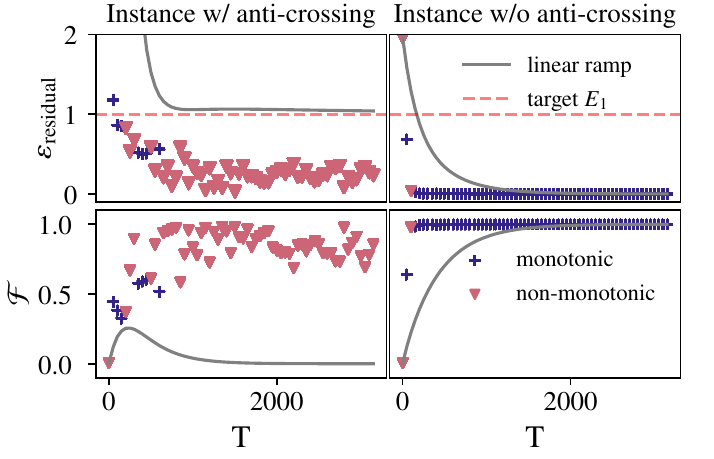}};
        \node[font=\large] at ($(fig.north west)!0.48!(fig.north east) + (fig.north west)!0.3!(fig.south west)$) {(a)};
        \node[font=\large] at ($(fig.north west)!0.48!(fig.north east) + (fig.north west)!0.73!(fig.south west)$) {(b)};
        \node[font=\large] at ($(fig.north west)!0.89!(fig.north east) + (fig.north west)!0.3!(fig.south west)$) {(c)};
        \node[font=\large] at ($(fig.north west)!0.89!(fig.north east) + (fig.north west)!0.73!(fig.south west)$) {(d)};
    \end{tikzpicture}
    \caption{(a) Normalized residual energy $\varepsilon_{\text{residual}}$  (Eq.~\eqref{eq:DefResEnergy1}) 
    and (b) target ground state fidelity $\mathcal{F}$ (Eq. \eqref{eq:DefFidelity}) as functions of the time budget $T$ for an instance of the toy model with a perturbative anti-crossing in the localized phase. (c) and (d) show the same quantities, but for a problem instance without the anti-crossing. The results for linear annealing schedules are plotted in gray, while the results for the optimized ansatz with  non-monotonic schedules are denoted with red triangles and those with monotonic schedules are denoted with blue crosses.}
    \label{fig:ToyModelResult}
\end{figure}

\begin{figure}
    \centering
    \begin{tikzpicture}
        \node[anchor=north west, inner sep=0] (fig) at (0,0)
            {\includegraphics[width=\linewidth]{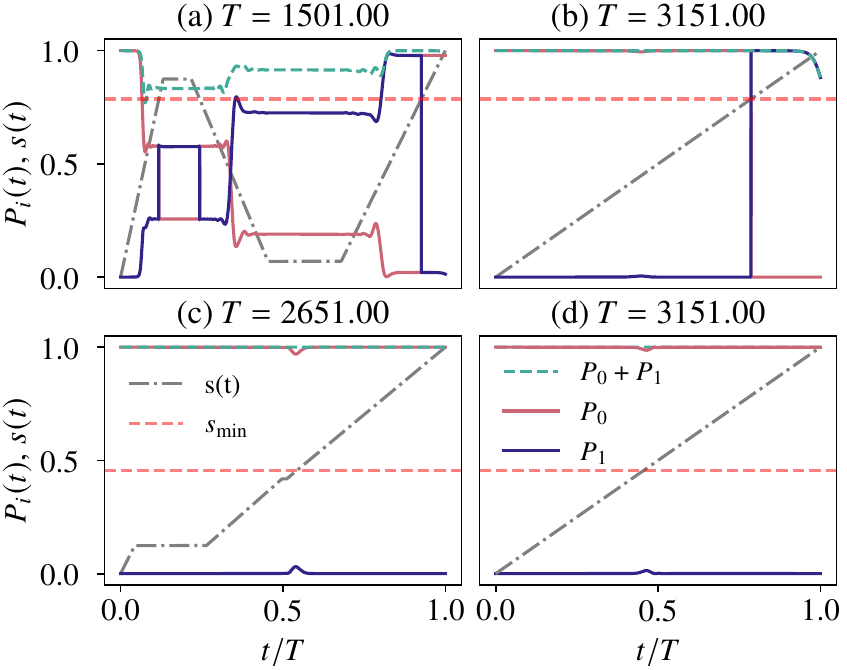}};
    \end{tikzpicture}
    \caption{Instantaneous eigenstate populations $P_i(t)$ and optimal schedules $s(t)$ for the optimal time budget $T$ for (a,c) the parameterized schedule and (b,d) the linear ramp. (a,b) The first row corresponds to the instance with the loc-loc transition, and  (c,d) the second row corresponds to the instance without the loc-loc transition.}
    \label{fig:ToyScheduleExample}
\end{figure}
\subsection{Observed speed-up on a MAXCUT instance}
\label{sec:Instance1}
In this section, we investigate an instance of the MAXCUT problem that has been used to benchmark quantum annealing and other methods in the literature \cite{Zhou_2020, Pecci_2024}. A MAXCUT problem is defined by a weighted graph $G$, and this can be converted to a target Hamiltonian given by
\begin{equation}
    H_Z = \sum_{(m,n) \in \mathcal{E}_G} \frac{J_{mn}}{2} (Z_mZ_n-1) \ ,
    \label{eq:MAXCUTHamiltonian}
\end{equation}
where $\mathcal{E}_G$ is the set of edges of the graph $G$ and the matrix $J_{mn}$ contains its weights. We focus on the problem referred to in the literature as ``MAXCUT instance 1,'' whose graph $G$ with $N=14$ nodes and weights can be found in Ref.~\cite{Pecci_2024}. When converted to the interpolating Hamiltonian in Eq.~\eqref{eq:DefFullH}, the spectrum displays an avoided level crossing between the ground and first excited state energies around $s_{\min} \approx 0.92$, as shown in Figure \ref{fig:Instance1Result}(a). In the neighborhood of the avoided crossing, the lowest two energy levels seem to be comparatively well separated from the rest of the spectrum.

\begin{figure*}
    \centering
    \begin{tikzpicture}
        \node[anchor=north west, inner sep=0] (top) at (0,0)
            {\includegraphics[width=0.97\linewidth]{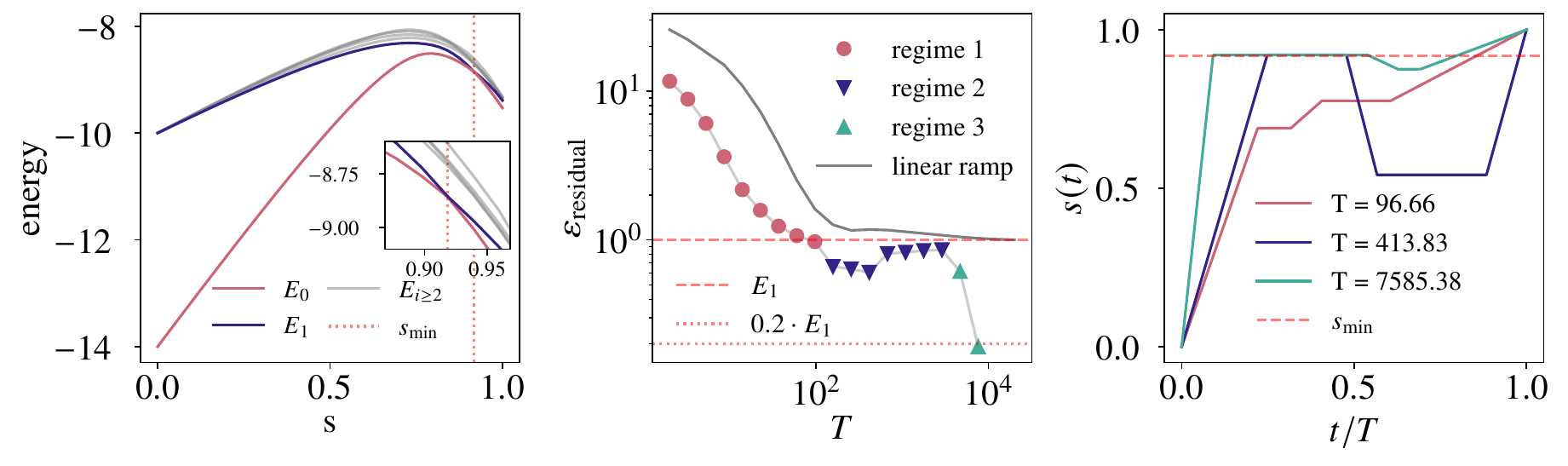}};
        \node[anchor=north west, inner sep=0] (bottom) at (0.69cm,-5cm)
            {\includegraphics[width=0.9\linewidth]{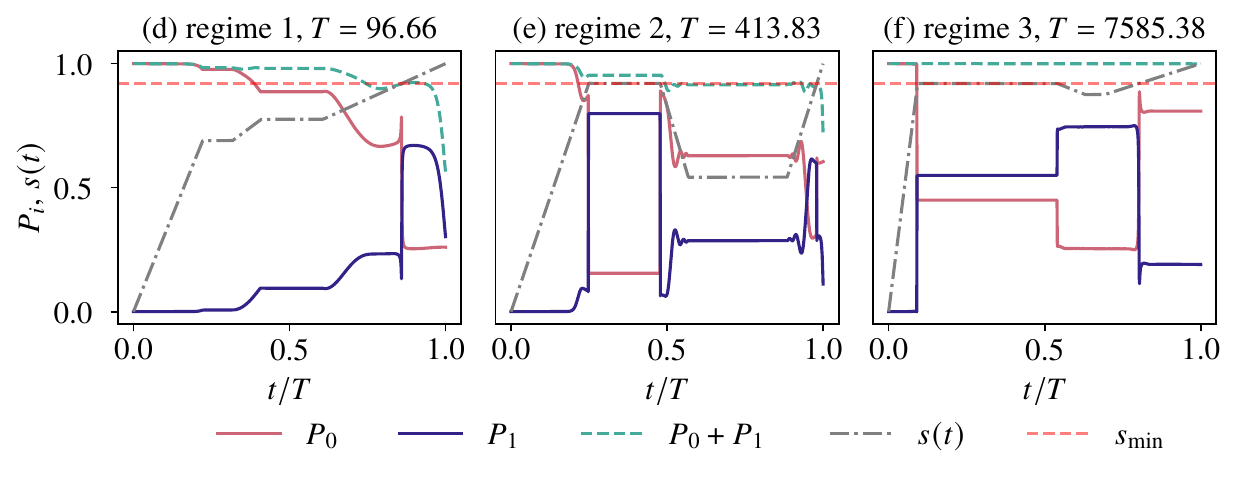}};
        \node[font=\large] at ($(top.north west)!0.05!(top.north east) + (0, 0.2cm)$) {(a)};
        \node[font=\large] at ($(top.north west)!0.38!(top.north east) + (0, 0.2cm)$) {(b)};
        \node[font=\large] at ($(top.north west)!0.7!(top.north east) + (0, 0.2cm)$) {(c)};
    \end{tikzpicture}
    \caption{Results for MAXCUT instance 1. (a): The lowest six energy eigenvalues of $H(s)$ Eq. \eqref{eq:DefFullH}, where the target Hamiltonian is given by MAXCUT instance 1 Eq. \eqref{eq:MAXCUTHamiltonian}. The ground state energy is shown in red, the first excited state in blue, while higher energy levels are gray; (b): Normalized residual energy (Eq. \eqref{eq:DefResEnergy1}) for MAXCUT instance 1 for the optimized LZS-schedules and the linear ramp for comparison. For the optimized schedule, the three regimes discussed in the text are shown in different colored markers (red $\bigcirc$, blue $\triangledown$, green $\triangle$), and the linear ramp results are shown in grey. For reference, the horizontal line corresponds to the first excited state energy; (c): Optimal LZS-schedules corresponding to the smallest residual energies for each regime shown in (b), where we used the same color coding. The red dashed line indicates the location of the minimal gap. (d,e,f) Eigenstate populations $P_i$, $i=0, 1$, for the best performing schedule in each regime identified in (b).}
    \label{fig:Instance1Result}
\end{figure*}

As before, we optimize the parameters of the LZS-schedule with the expectation value of the target Hamiltonian as the cost function to be minimized. However, here we increase the time $T$ exponentially. For each $T$, the inner-loop optimizer uses $N_{\text{maxiter}} = 4 \times 10^3$ calls to the quantum annealer. An exception is the optimization for the largest $T=7585.38$, where $N_{\text{maxiter}} = 2 \times 10^3$ in order to keep the wall clock computation time low. Since this still results in the best performing schedule, we conclude that this is not a significant impairment. 

We show the normalized residual energy $\varepsilon_{\text{residual}}$ (Eq. \eqref{eq:DefResEnergy1}) as a function of $T$ in Figure \ref{fig:Instance1Result}(b). For comparison, we also show the residual energies obtained using a linear ramp schedule. As the total annealing time increases, we can distinguish three regimes, where the optimal schedules display qualitatively distinct behavior. In Figure \ref{fig:Instance1Result}(b), the three regimes are distinguished by different colors and markers for the normalized residual energy points. In the first regime, we see that, as the annealing time $T$ increases, the normalized residual energy of both the optimized schedule and the linear ramp decay, with the optimized schedule outperforming the linear schedule by a constant factor.

In the second regime, the two cases begin to show a qualitative difference. Both schedule classes manage to reach the first excited state energy on average. However, the linear ramp approaches it much more slowly and even with the maximum time budget we allow does not achieve an average energy below the first excited state. On the other hand, the optimized schedules reach average energies substantially lower than the first excited state, indicating that they obtain substantial overlap with the target ground state. Increasing the time budget $T$ further, the optimized schedules enter a third regime, where the average energy decreases even further and eventually seems to predominantly reach the target ground state.

The three regimes of the normalized residual energy also correspond to qualitatively different optimal schedules. In Figure \ref{fig:Instance1Result}(c) we show the schedule that minimizes $\varepsilon_{\text{residual}}$ in each regime, but the examples are representative for their respective regimes. The optimal schedules for all other $T$ are shown in Appendix \ref{sec:AppendixDataInstance1}. In the first regime (red), the schedules first ramp quickly while the gap is large, just to then slow down and ramp more slowly when the gap is small. However, the optimal schedules still ramp monotonically from $s=0$ to $s=1$. In the second regime (blue) the optimal schedule makes use of the ability to go back and forth, such that the populations can interfere. Here, all schedules are non-monotonic. In the last regime (green), the optimal schedule seems to follow a different strategy. The schedule seems to quickly ramp to the minimal gap and then halts.

The fact that these three different regimes are distinct is further substantiated when considering the populations of the instantaneous eigenstates when evolving under the optimal schedules for each regime, shown in Figure \ref{fig:Instance1Result} (d-f). In the first regime, the schedule is monotonic (Figure \ref{fig:Instance1Result}(d)), and the system remains mainly in the ground state until the schedules crosses the minimal gap, where the populations of the first excited state and the ground state invert. This results in a low population for the final Hamiltonian ground state.

In the second regime, the schedule is non-monotonic (Figure \ref{fig:Instance1Result}(e)), and the schedule transfers population from the ground state to the first excited state with each sweep. This increasingly populates the first excited state, and this population transitions into the target ground state after crossing the minimum gap. In contrast to the third regime, the net population transfer from the ground state to the first excited state is limited in this regime.

In regime 3  (Figure \ref{fig:Instance1Result}(f)), the schedule is able to maximally transfer population to the first excited state. First, the schedule ramps adiabatically to the minimal gap, such that the populations remain close to constant. Then, the system halts at the minimal gap from $t/T \approx 0.1$ on-wards, while after moving away from the minimal gap around $t/T \approx 0.55$, a substantial amount of population has been moved to the first excited state. Therefore, in regime 3, the halting at the minimal gap contributes maximally to the population transfer, while the ramps contribute almost nothing to the transfer before crossing the minimal gap.

This behavior can be understood intuitively if we consider the effective Hamiltonian in the two-level subspace. Using the instantaneous eigenstates from sufficiently far away from the minimal gap $\ket{\psi_0^-} \approx \ket{\psi_1^+}$, $\ket{\psi_1^-} \approx \ket{\psi_0^+}$ as a basis, the two-level Hamiltonian in the vicinity of the minimal gap can be approximated as
\begin{equation}
    H(s) \approx C(s-s_{\min}) Z + \gamma_{\min} X / 2
\end{equation}
for some appropriately chosen constant $C$ such that $\gamma_{\min} / C = \mathcal{O}(\exp(-N))$. Approaching the minimal gap from $s < s_{\min}$, the system is in the ground state $\ket{\psi_0^-} \approx \ket{\psi_1^+}$ of $H(s) \approx C(s-s_{\min})Z$. Then, at the minimal gap, the Hamiltonian acts as $H(s_{\min}) = \gamma_{\min} X / 2$, such that after a time $t_i =\pi \gamma_{\min}^{-1}$, the system rotates to the state $\ket{\psi_1^-} \approx \ket{\psi_0^+}$, which is the excited state in $s < s_{\min}$ and the ground state at $s > s_{\min}$. This explains why the schedule moves adiabatically to $s_{\min}$, then halts there, where the system is in an almost equal superposition of the instantaneous eigenstates, and then continues adiabatically, up to the population inversion after crossing $s_{\min}$.

Based on this, we conclude that in regime 3 the optimal schedule solves the problem using a different mechanism than LZS interference. The rotation of the state we describe above requires a time that scales as $\mathcal{O}(\gamma_{\min})$, which we note is still at least quadratically better than an adiabatic ramp through the minimum gap, so this mechanism will typically be substantially slower than the exploitation of the LZS interference, whose time scale depends on the delocalized-to-localized transition. This difference in time scales is suggested by our simulation results in Figure \ref{fig:Instance1Result}(b), where residual energies below the first excited state energy are reached at times $T$ almost two orders of magnitude shorter in regime 2 than in regime 3.

In conclusion, we find that on this particular MAXCUT instance, the LZS-schedule provides a substantial speedup over linear ramps when minimizing the residual energy, and it reaches comparative performance to the results reported by Pecci et al.~(compare Figure 4 in Ref.~\cite{Pecci_2024}), albeit with far fewer parameters in the ansatz. As has been argued in Ref.~\cite{Pecci_2024}, the minimal spectral gap along the anneal path for MAXCUT instance 1 is exceptionally small compared to randomly generated instances. Hence, our simplified schedule provides a speedup on an instance, which is otherwise exceptionally hard to solve with conventional quantum annealing. 

\subsection{Open system simulation of a two-level system}
\label{sec:OpenSystem}
So far, we have only considered unitary dynamics, and in this section, we introduce relaxation to a two-level system to investigate the impact of decoherence numerically.
As we have discussed, the observed speedup in annealing time is achieved by preparing the system in the excited state in a targeted manner using a coherent interference effect prior to crossing the minimal spectral gap . In Section \ref{sec:DecoherenceTheory} we argued with a simplified model based on quantum channels that we expect the schedules ability to precisely prepare the excited state to be impaired by decoherence. We test this hypothesis numerically by considering an open two-level system. The model we consider is a qubit coupled to a thermal bath at inverse temperature $\beta$ described by an adiabatic master equation \cite{Albash2015}, which is discussed in more detail in Appendix \ref{sec:NoiseModel}. The two-level Hamiltonian is given by
\begin{equation}
    H(t) = -(1-s(t)) X + s(t) Z \ ,
    \label{eq:TwoLevelHamiltonian}
\end{equation}
and the system is initially prepared in $\ket{+}$ and $s(t)$ is our parameterized LZS-schedule. At the end of the anneal, the system's Hamiltonian is $Z$ with an excited state energy of 1. For simplicity, we investigate only the preparation of the excited state via our simplified diabatic annealing schedule in order to isolate the relevant dynamics.

As before, we linearly increase $T$ and optimize the annealing schedule for each $T$ using COBYLA \cite{Powell_1994} until convergence. We report the best result obtained from 10 random  of the optimizer. We keep the inverse temperature fixed at $\beta=0.2$ and increase the coupling strength to the environment $g$.

Figure \ref{fig:OpenSystemResult}(a) shows the target expectation value as a function of the coupling $g$ and the time $T$. We observe three regimes: in the bottom half of the image, corresponding to couplings $g \gtrapprox 0.1$, the target expectation values do not substantially deviate from zero, indicating that the optimizer is unable to prepare the excited state. In this regime, the coupling to the environment, i.e. the decoherence, is too strong to solve the task at all.

In the top half of the plot for $g \lessapprox 0.05$, there are two qualitative regimes, one for short time budgets ($T \lessapprox \pi$) where the system also struggles to prepare the excited state and one for larger $T$, where the system prepares the excited state with the maximal probability. The respective maximum depends on the coupling $g$.

The white dots in the heatmap indicate that the respective LZS-schedule does not sweep back and forth but increases monotonically from $s=0$ to $s=1$, while the gray dots indicate non-monotonic schedules. For long enough time budgets ($T \gtrapprox \pi$) and sufficiently coherent ($g \lessapprox 0.1$), the optimal schedules reach values close to one, i.e. the system is able to completely reach the the excited state by sweeping back and forth. This regime can be better understood by plotting the target expectation value over the time budget, as seen in Figure \ref{fig:OpenSystemResult}(b). From this plot, we can see two things, namely that the time to optimally solve the problem coincides with $T=\pi$ and that the target expectation values decrease as the coupling $g$ increases. This agrees with our theoretical prediction that in a system with decoherence there is a maximum population that can be transferred to the excited state.

The fact that the optimal time is $T=\pi$ can be explained intuitively. Given the Hamiltonian in Eq.~\eqref{eq:TwoLevelHamiltonian}, since $0 \leq s \leq 1$ the state of the system on the Bloch sphere can rotate either around the $X$-axis with negative orientation, around the $Z$-axis with positive orientation, or any convex combination of the two. Initially, the system is in the state $\ket{+}$, and we need to reach the state $\ket{1}$. The shortest path between the two would be to rotate around the $Y$-axis. However, $Y$-rotations are not in our toolbox. Consider instead the case of pure Z- and -X-rotations. The initial state $\ket{+}$ evolves into $\ket{1}$ by evolving under $\exp(-itZ)$ for time $t=\pi/2$, resulting in a $\pi/4$-rotation on the Bloch sphere. This is followed by a evolution $\exp(itX)$, also for $t=\pi/2$ for another $\pi/4$-rotation, giving an evolution time of $T=\pi$.

\begin{figure}
    \centering
    \begin{tikzpicture}
        \node[anchor=north west, inner sep=0] (fig) at (0,0)
            {\includegraphics[width=\linewidth]{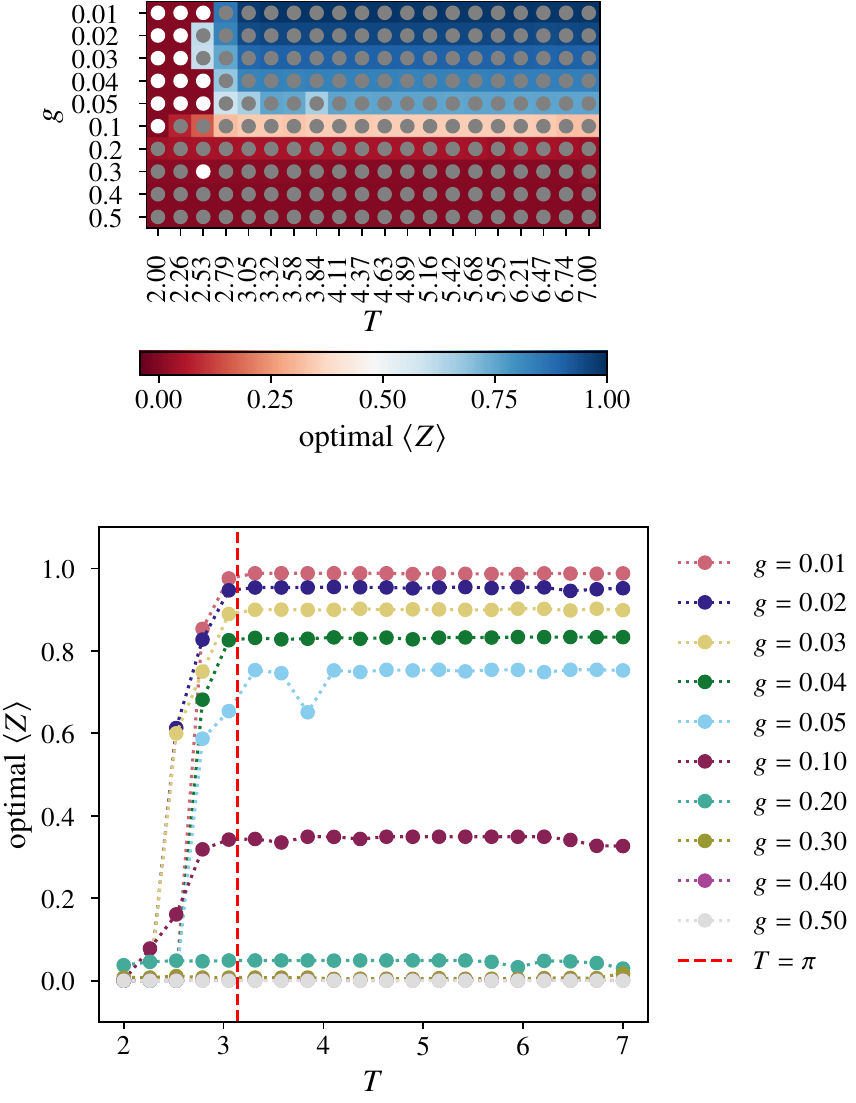}};
        \node[font=\large] at (fig.north west) {(a)};
        \node[font=\large] at ($(fig.north west)!0.45!(fig.south west)$) {(b)};
    \end{tikzpicture}
    \caption{Target expectation value of a two level system (Eq. \eqref{eq:TwoLevelHamiltonian}) after optimizing the LZS-schedule. The goal of this optimization is to prepare the excited state of the system, i.e. to reach $\langle Z \rangle = 1$. (a): Optimal target expectation value as a function of the coupling to the environment $g$ and the time budget $T$, the gray dots indicate that the optimal schedule assumes a back-and-forth shape, while white dots indicate monotonic schedules. (b): Optimal target expectation value as a function of time budget $T$.}
    \label{fig:OpenSystemResult}
\end{figure}

\subsection{Cubic interpolation improves scaling}
\label{sec:CubicSplines}
In this section, we discuss a potential improvement of our ansatz.
The class of schedules we propose is defined by six points in the $t-s$-plane, where one point is fixed at $(0, 0)$ and another at $(T, 1)$. The other points are determined by the remaining six parameters, since we enforce the existence of the plateaus in the schedule. So far, we have used linear interpolation between the points to construct the full schedule. However, we may consider a cubic spline interpolation between the points defined by our schedule definition. For the simulations on the frustrated Ising ring, we use the same parameters and the same optimization algorithm as in Section \ref{sec:IsingRingSimulation}, except that the schedule $s(t)$ is decoded using cubic splines instead of linear interpolation. 

An example of such a schedule is shown in Figure \ref{fig:FrustratedIsingRingCubicIntuition}(a). We observe that the change to a cubic spline interpolation significantly improves the scaling of the annealing time $T$ as a function of $N$, as shown in Figure \ref{fig:FrustratedIsingRingResultCubic}(a), reducing the exponent from $\alpha \approx 1.75$ to $\alpha \approx 1.2$ or $\alpha \approx 1.4$ for $c=0.5$ and $c=0.1$ respectively. At the same time, the dynamics of the system remains qualitatively similar to the linear interpolation case, at least for the small system sizes where exact state vector simulation is feasible. This can be seen by comparing Figure \ref{fig:FrustratedIsingRingResult}(b) and Figure \ref{fig:FrustratedIsingRingResultCubic}(b).

We do not have a conclusive theoretical explanation for this observation. However, we speculate that smoother turn-around enables the optimizer to effectively evolve locally adiabatically \cite{Roland_2002}, i.e. change the Hamiltonian rapidly when the gap is large and slowly when the gap is small. While the linear interpolation with the plateaus needs to cross the phase transition at a global rate for at least two out of the three sweeps, it does not allow to effectively move fast to the phase transition and then cross it slowly. On the other hand, the cubic interpolation results in schedules that smoothly turn around, so that the gradient also smoothly changes sign. This allows to effectively control the gradient at the phase transition, while maintaining a fast gradient where the spectral gap is large. We illustrate this reasoning in Figure \ref{fig:FrustratedIsingRingCubicIntuition}(b). We speculate that local adiabaticity could improve upon this and achieve a substantially sub-quadratic scaling. It is also conceivable that smoother schedules make the optimization landscape easier. Further research is needed to investigate these conjectures.

\begin{figure}
    \centering
    \begin{tikzpicture}
        \node[anchor=north west, inner sep=0] (a) at (0, 0)
            {\includegraphics[width=0.8\linewidth]{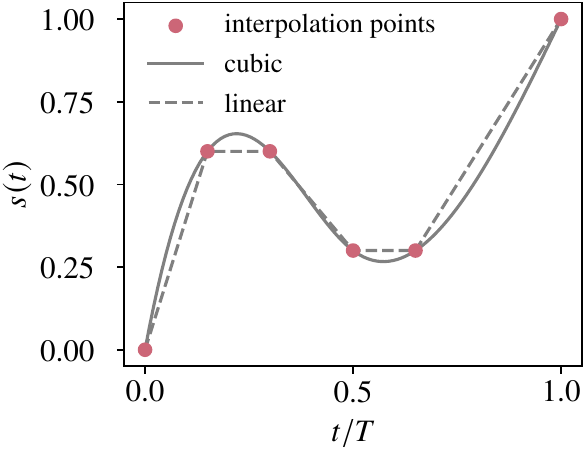}};
        \node[anchor=north west, inner sep=0] (b) at (1.15cm,-5.8cm)
            {\includegraphics[width=0.7\linewidth]{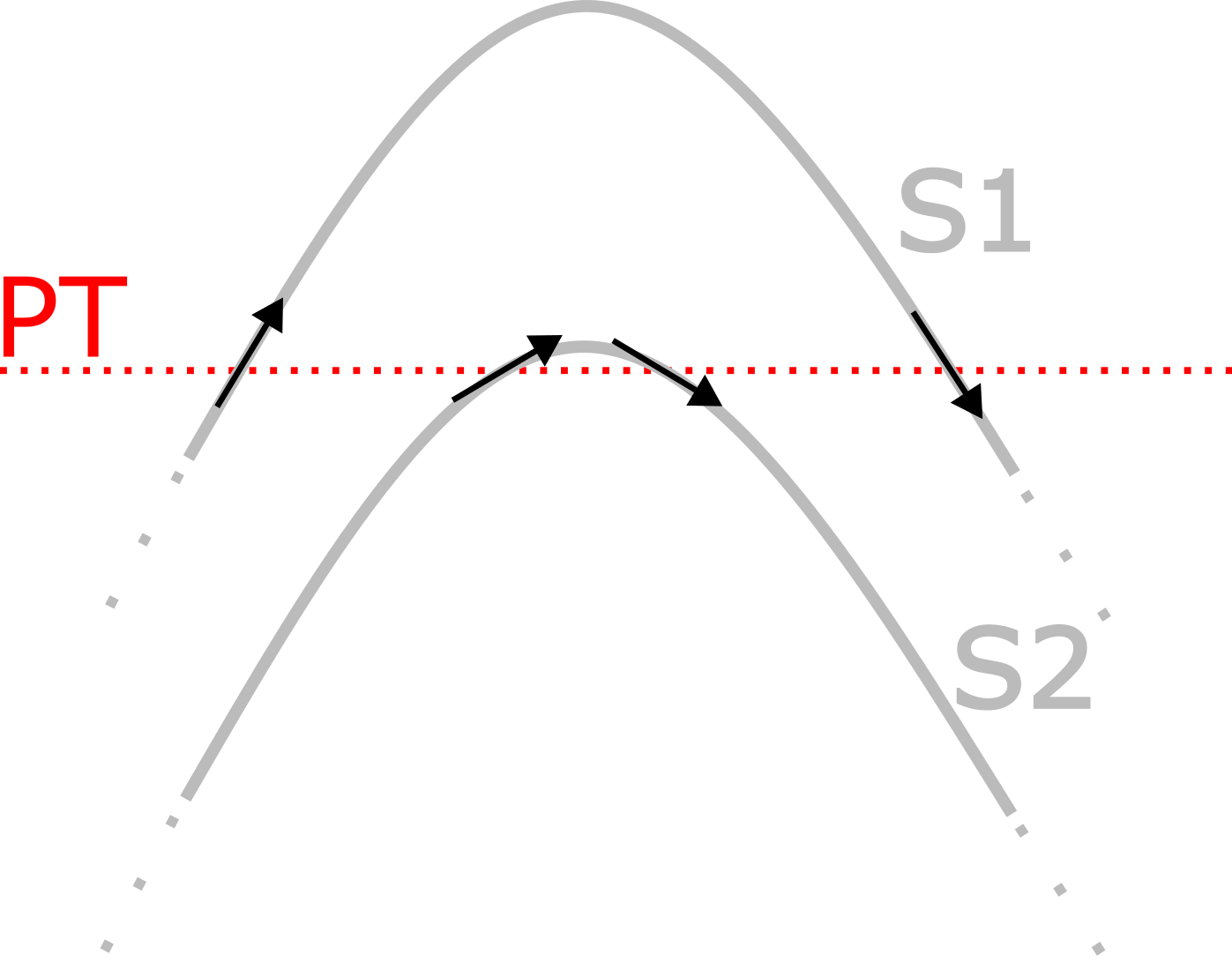}};
        \node[font=\large] at (a.north west) {(a)};
        \node[font=\large] at ($(b.north west) + (-1.15cm, 0)$) {(b)};
    \end{tikzpicture}
    \caption{(a) Schedules obtained by linear interpolation versus cubic spline interpolation for the same set of parameters. (b) Illustration of the effective reduction of the rate of change at the phase transition. The gradient (black arrows) at the phase transition is smaller for the schedules S2 in comparison to the schedule S1.}
    \label{fig:FrustratedIsingRingCubicIntuition}
\end{figure}

\begin{figure}
    \centering
    \begin{tikzpicture}
        \node[anchor=north west, inner sep=0] (fig) at (0,0) {\includegraphics[width=0.92\linewidth]{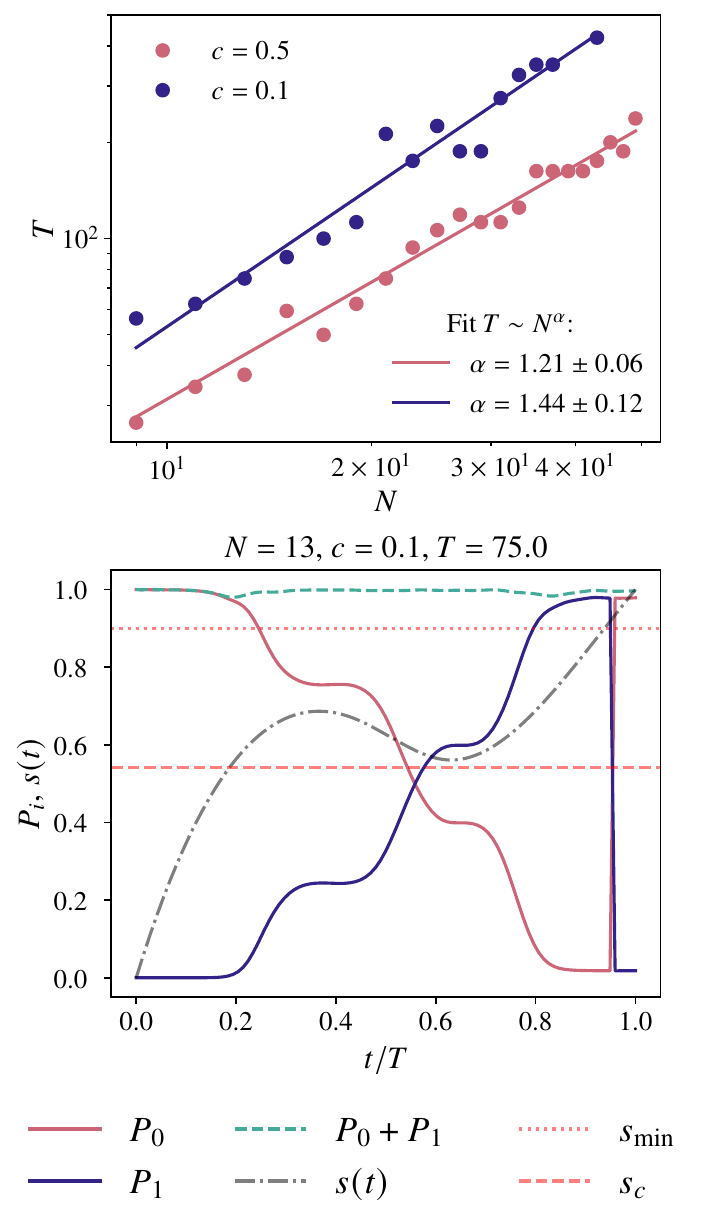}};
        \node[font=\large] at (fig.north west) {(a)};
        \node[font=\large] at ($(fig.north west)!0.45!(fig.south west)$) {(b)};
    \end{tikzpicture}
    \caption{Same plots as in Figure \ref{fig:FrustratedIsingRingResult}, but using cubic interpolation when decoding the schedule parameters. The dynamics shown in (b) is qualitatively similar to the linear interpolation. However, the exponent of the power law dependence of $T$ on $N$ is significantly smaller.}
    \label{fig:FrustratedIsingRingResultCubic}
\end{figure}
\section{Conclusion and outlook}
\label{sec:Conclusion}
In this work we proposed a simplified ansatz for variational diabatic quantum annealing that is characterized by only a small number of parameters independent of the system size. This is in stark contrast to most variational ans{\"a}tze, where the number of parameters grows with system size. The parameter efficiency is enabled by making explicit use of Landau-Zener-St{\"u}ckelberg interference (LZS), which allows for efficient population transfer to the first excited state, assuming the existence of a polynomially closing gap along the anneal. The inversion of the population in turn does not require the optimized annealing schedule to slow down exponentially at exponentially small spectral gaps, commonly due to perturbative anti-crossings. In fact, we are able to prove that sufficiently many consecutive sweeps can prepare any arbitrary state in a two-level subspace, as stated in Proposition \ref{lemma:UniversalSU2}, which in principle can overcome any obstruction to the anneal that is restricted to the low-energy subspace. It will be instructive to explore which problems require a more complex state preparation at the control interval to be optimally solved, potentially at the cost of choosing a more expressive ansatz with more sweeps.

The small number of parameters additionally allows us to show that efficient optimization of the ansatz is possible (Proposition \ref{prop:efficiency}) assuming that this schedule family is able to solve a given problem with sufficient precision. However, further research is required to shed light on whether the efficient trainability of this particular variational algorithm implies tractable classical simulation~\cite{Cerezo_2025}, a question which is extensively discussed in the literature on digital variational quantum circuits. Additionally, we give some sufficient conditions for when we expect the proposed schedule class to deliver a speedup over adiabatic annealing. It remains to be seen if these conditions are too strict. This is relevant to determine if the observed speedups over conventional quantum annealing are genuine, or merely an artifact of the specific problem instance's simplicity.

We test our proposal on various models. We reproduce the exponential speedup over linear ramps observed in Ref.~\cite{Cote_2023}. We even observe a slightly sub-quadratic scaling of the required annealing time $T$ for our ansatz. Subsequently, we test the ansatz on a toy model of Hamiltonians that display a phase transition from an extended ground state in the computational basis to a localized one, as well as perturbative anti-crossings in the localized phase. These anti-crossings are known to result in exponentially vanishing spectral gaps, making the problem instance particularly challenging for adiabatic computation. We show that with our ansatz, such instances can still be solved, thus indicating that the ansatz can systematically solve this failure case, when a spectral gap with sufficiently favorable scaling can be found along the anneal.

We also test the ansatz on a particularly pathological instance of MAXCUT, where we find that the identified mechanism reduces the optimization cost significantly faster compared to linear ramp quantum annealing. Our ansatz also identifies a third mechanism to accurately prepare the ground state, namely, progressing adiabatically to the minimal spectral gap and halting there. In this strategy, the Hamiltonian at the minimal gap acts as a $X$-gate (potentially up to phases) when pausing for a time proportional to $\gamma_{\min}^{-1}$. Therefore, while quadratically faster than an adiabatic transition on the same problem, this third mechanism requires substantially more annealing time than the LZS-mechanism.

While our ansatz seems to substantially improve the annealing result on the MAXCUT instance with an exponentially closing gap due to a perturbative anti-crossing, we only observe a mild improvement of the optimized ansatz compared to the linear ramp annealing on 20 randomly generated MAXCUT instances, as we show in Appendix \ref{sec:AppendixRandomMaxcut}. While this seems to indicate that our ansatz might not provide a qualitative benefit over linear quantum annealing on randomly generated MAXCUT instances, our observation could be due to the fact that random instances lack the structure to likely have perturbative anti-crossings. There may still exist problem classes of practical relevance where this phenomenon is more common, and thus our ansatz can potentially improve performance.

One aspect to highlight here is that the optimized schedule performs significantly better than linear ramps when considering the expected energy of the target Hamiltonian. For the MAXCUT instance 1 and the toy model, we observe that the linear ramp achieves a much smaller but non-zero fidelity with the ground state. This suggests that in some instances, the optimization task of observing the ground state at least once could also be solved by sufficiently many repetitions of the linear ramp, without having to invest computational efforts into optimizing the schedule parameters. This would be reflected when considering the time-to-solution metric~\cite{Ronnow2014}. However, if the task is the best possible ground state preparation, our ansatz does significantly better under conditions where linear ramps would fail.

We argue that decoherence will impair the population inversion mechanism using LZS interference. To verify our theoretical argument, we test the ansatz in a simple two-level system coupled to a thermal bath. The simulation results align with what we have argued theoretically, namely that the speedup mechanism is inhibited under dephasing noise. This establishes the LZS-mechanism as an inherently coherent effect, hinting that coherence can be a useful resource for quantum optimization and essential for the LZS-mechanism. At the same time, this limits the applicability of our ansatz to devices with sufficiently long coherence times.

We also observe that an equally simple ansatz using cubic spline interpolation results in qualitatively very similar dynamics for the frustrated Ising ring, with the significant difference being that the scaling exponents for the required anneal time drop to values closer to linear than quadratic. We speculate that this is due to the fact that the smooth non-monotonic schedules encoded by this method result in a similar mechanism as the LZS, but where it is implicitly combined with local adiabatic search. However, we did not conclusively answer this question, leaving it for future research. While we believe our method exploits a different mechanism compared to Ref.~\cite{Wang_2025}, Wang et al.~demonstrate that the problem is in principle solvable in linear time by a quantum annealing-inspired method. This suggests that, potentially, the exponent can be further reduced by a more careful ansatz design or an improved optimization algorithm. Furthermore, by Trotterizing the continuous quantum evolution, our proposed schedule class could be used as a parameter-efficient ansatz for QAOA and thus our findings are also relevant for gate-based quantum optimization algorithms.
\begin{acknowledgments}
The authors thank Jeremy C{\^o}t{\'e} for many stimulating discussions. 

M.W. and A.R. thankfully acknowledge RES resources provided by Barcelona Supercomputing Center in Marenostrum 5 to INNO-2025-3-0004.

This work was performed, in part, at the Center for Integrated Nanotechnologies, an Office of Science User Facility operated for the U.S. Department of Energy (DOE) Office of Science.

This article has been co-authored by an employee of National Technology \& Engineering Solutions of Sandia, LLC under Contract No. DE-NA0003525 with the U.S. Department of Energy (DOE). The employee owns all right, title and interest in and to the article and is solely responsible for its contents. The United States Government retains and the publisher, by accepting the article for publication, acknowledges that the United States Government retains a non-exclusive, paid-up, irrevocable, world-wide license to publish or reproduce the published form of this article or allow others to do so, for United States Government purposes. The DOE will provide public access to these results of federally sponsored research in accordance with the DOE Public Access Plan \url{https://www.energy.gov/downloads/doe-public-access-plan}.
\end{acknowledgments}
\bibliography{references}

\onecolumngrid

\appendix
\section{Proof of Proposition \ref{lemma:UniversalSU2} - Universality of LZS-gates}
\label{sec:ProofUniversalSU2}
Consider two 2-level Hamiltonians $H_{i=1, 2}$ and the time-dependent interpolation
\begin{equation}
    H(t,T) = \left( 1-\frac{t}{T} \right) H_1 + \frac{t}{T} H_2 \ ,
\end{equation}
for $t = [0, T]$. We define ramp-gates as the time-evolution induced by $H(t)$ as
\begin{equation}
    U_{\text{ramp-up}}(T) = \mathcal{T}\exp \left[ -i \int_0^T H(t, T) dt \right] \ ,
\end{equation}
\begin{equation}
    U_{\text{ramp-down}}(T) = \mathcal{T}\exp \left[ -i \int_T^0 H(t, T) dt \right] \ ,
\end{equation}
and the wait gates
\begin{equation}
    U_{\text{wait}, i}(T) = \exp \left[ -i H_i T\right] \ .
\end{equation}
We will first present the argument under the assumption that both ramp directions induce transitions with the same probability $r$, while the phases can be absorbed into the tunable phases of the wait-gates. This is generally only true for real Hamiltonians, but with a few considerations, it is straightforward to extend the argument to the case of distinct $r$. After applying one of the ramp-gates, the system will transition between the ground and the excited state with some probability $r$, where $r$ depends on $T$. We will consider $r$ to be a given parameter with $0 < r < 1$ and show that universality can be reached with sufficiently many concatenations of ramp and wait gates.

In the instantaneous eigenbasis of $H(t, T)$, the ramp gates $U_{\text{ramp-up}}$ and $U_{\text{ramp-down}}$ can be described as
\begin{equation}
    \begin{aligned}
    U_{\text{ramp}}(r) &= \begin{pmatrix}
        \sqrt{1-r} e^{i(\phi_r + \theta_r)/2} & \sqrt{r} e^{-i(\phi_r - \theta_r)/2} \\
        -\sqrt{r} e^{i(\phi_r - \theta_r)/2} & \sqrt{1-r} e^{-i(\phi_r + \theta_r)/2}
    \end{pmatrix} \\
    &= \begin{pmatrix}
        e^{i\theta_r/2} & 0 \\ 0 & e^{-i\theta_r/2}
    \end{pmatrix}
    \begin{pmatrix}
        \sqrt{1-r} & \sqrt{r} \\ -\sqrt{r} & \sqrt{1-r}
    \end{pmatrix}
    \begin{pmatrix}
        e^{i\phi_r/2} & 0 \\ 0 & e^{-i\phi_r/2}
    \end{pmatrix} \ ,
    \end{aligned}
\end{equation}
where the phases $\phi_r$ and $\theta_r$ depend on $r$. The wait-gates in the instantaneous eigenbasis are
\begin{equation}
    U_{\text{wait}}(\phi) = \begin{pmatrix}
        e^{i\phi/2} & 0 \\ 0 & e^{-i\phi/2}
    \end{pmatrix} \ , 
\end{equation}
where the phase $\phi$ is tunable via the waiting time. Combining the ramp-gate with two wait-gates, one gets
\begin{equation}
    \begin{aligned}
    U(r, \phi, \theta) &= U_{\text{wait}}(\theta) U_{\text{ramp}}(r) U_{\text{wait}}(\phi) \\
    &= \begin{pmatrix}
        e^{i(\theta - \theta_r)/2} & 0 \\ 0 & e^{-i(\theta - \theta_r)/2}
    \end{pmatrix}
    \begin{pmatrix}
        \sqrt{1-r} & \sqrt{r} \\ -\sqrt{r} & \sqrt{1-r}
    \end{pmatrix}
    \begin{pmatrix}
        e^{i(\phi - \phi_r)/2} & 0 \\ 0 & e^{-i(\phi - \phi_r)/2}
    \end{pmatrix} \\
    &= \begin{pmatrix}
        e^{i\theta/2} & 0 \\ 0 & e^{-i\theta/2}
    \end{pmatrix}
    \begin{pmatrix}
        \sqrt{1-r} & \sqrt{r} \\ -\sqrt{r} & \sqrt{1-r}
    \end{pmatrix}
    \begin{pmatrix}
        e^{i\phi/2} & 0 \\ 0 & e^{-i\phi/2}
    \end{pmatrix}
    \end{aligned} \ , 
    \label{eq:DefLZSGate}
\end{equation}
where in the last equality we absorb the $r$-dependent phases into the tunable phases from the wait-gates to simplify the notation.
We can already see that if $r$ could be freely chosen between $r_l = 0$ and $r_u = 1$, the LZS-gate (Eq.~\eqref{eq:DefLZSGate}) would decompose into an $Y$-rotation on the Bloch sphere flanked by two $Z$-rotations, which is a universal single-qubit gate. We will show here that for restricted $r$, a sequence of LZS-gates Eq.~\eqref{eq:DefLZSGate} is still universal.

In fact, it suffices to have control of the phase parameters $\phi$ and $\theta$ and keeping $r \neq 0, 1$ fixed. In the case $r=0,1$ no interference would be possible, and the LZS-gate could not function. If there is a sequence of LZS-gates that transports any population $r$ between the energy levels (and the required phases are added to the first and the last wait-gates in the sequence), then any SU(2)-gate can be realized.

We can re-write the LZS-gate Eq.~\eqref{eq:DefLZSGate} as
\begin{equation}
    U(\lambda, \phi, \theta) = \begin{pmatrix}
                                    e^{i\theta/2} & 0 \\ 0 & e^{-i\theta/2}
                                \end{pmatrix}
                                \begin{pmatrix}
                                    \cos \lambda & \sin \lambda \\ -\sin \lambda & \cos \lambda
                                \end{pmatrix}
                                \begin{pmatrix}
                                    e^{i\phi/2} & 0 \\ 0 & e^{-i\phi/2}
                                \end{pmatrix}
    \label{eq:DecompositionAngles}
\end{equation}
for $\lambda$ such that $\cos \lambda = \sqrt{1-r}$. As a convention, we will choose $0 < \lambda < \pi/2$. To simplify the notation in the next step, we will define $\phi' := \phi_1 + \theta_2$, which is freely tunable. Applying two gates consecutively, we obtain
\begin{equation}
    \begin{aligned}
    &U(\lambda, \phi_1, \theta_1) U(\lambda, \phi_2, \theta_2) \\
    &=
    \begin{pmatrix}
        e^{i\theta_1/2} & 0 \\ 0 & e^{-i\theta_1/2}
    \end{pmatrix}
    \begin{pmatrix}
        \cos \lambda & \sin \lambda \\ -\sin \lambda & \cos \lambda
    \end{pmatrix}
    \begin{pmatrix}
        e^{i\phi'/2} & 0 \\ 0 & e^{-i\phi'/2}
    \end{pmatrix}
    \begin{pmatrix}
        \cos \lambda & \sin \lambda \\ -\sin \lambda & \cos \lambda
    \end{pmatrix}
    \begin{pmatrix}
        e^{i\phi_2/2} & 0 \\ 0 & e^{-i\phi_2/2}
    \end{pmatrix} \\
    &=
    \begin{pmatrix}
        e^{i\theta_1/2} & 0 \\ 0 & e^{-i\theta_1/2}
    \end{pmatrix}
    \begin{pmatrix}
        e^{i\phi'/2} \cos^2 \lambda - e^{-i\phi'/2} \sin^2 \lambda & \cos \lambda \sin \lambda (e^{i\phi'/2} + e^{-i\phi'/2}) \\
        -\cos \lambda \sin \lambda (e^{i\phi'/2} + e^{-i\phi'/2}) & e^{-i\phi'/2} \cos^2 \lambda - e^{i\phi'/2} \sin^2 \lambda
    \end{pmatrix}
    \begin{pmatrix}
        e^{i\phi_2/2} & 0 \\ 0 & e^{-i\phi_2/2}
    \end{pmatrix} \\
    &=
    \begin{pmatrix}
        e^{i\theta_1/2} & 0 \\ 0 & e^{-i\theta_1/2}
    \end{pmatrix}
    \begin{pmatrix}
        \cos(2\lambda) \cos(\phi'/2) + i \sin(\phi'/2) & \sin (2\lambda) \cos(\phi'/2) \\
        -\sin (2\lambda) \cos(\phi'/2) & \cos(2\lambda) \cos(\phi'/2) - i \sin(\phi'/2)
    \end{pmatrix}
    \begin{pmatrix}
        e^{i\phi_2/2} & 0 \\ 0 & e^{-i\phi_2/2}
    \end{pmatrix}
    \end{aligned} \ ,
    \label{eq:DoubleLZSGate1}
\end{equation}
where we applied double-angle identities in the last equality. Note that we can factorize any unitary $U \in \text{SU}(2)$ with real off-diagonals as
\begin{equation}
    U = 
    \begin{pmatrix}
        e^{i\alpha} \cos \lambda' & \sin \lambda' \\ -\sin \lambda' & e^{-i\alpha} \cos \lambda'
    \end{pmatrix}
    =
    \begin{pmatrix}
        e^{i\alpha/2} & 0 \\ 0 & e^{-i\alpha/2}
    \end{pmatrix}
    \begin{pmatrix}
        \cos \lambda' & \sin \lambda' \\ -\sin \lambda' & \cos \lambda'
    \end{pmatrix}
    \begin{pmatrix}
        e^{i\alpha/2} & 0 \\ 0 & e^{-\alpha/2}
    \end{pmatrix}
    \label{eq:UnitaryDecomposition}
\end{equation}
for some $\alpha$ and $\lambda'$. This allows to write the off-diagonal elements of the matrix in the middle of Eq.~\eqref{eq:DoubleLZSGate1} as
\begin{equation}
    \pm \sin \lambda' = \pm \sin(2\lambda) \cos(\phi'/2) \in \left[ -|\sin(2\lambda)|, |\sin(2\lambda)| \right] \ , 
    \label{eq:OffDiagElementEquation}
\end{equation}
where any value in the interval can be obtained by setting the free parameter $\phi'$ to an appropriate value. The diagonal elements are then fixed according to
\begin{equation}
    e^{\pm i\alpha} \cos \lambda' = \cos(2\lambda) \cos(\phi'/2) \pm i \sin(\phi'/2) \ .
    \label{eq:DiagElementEquation}
\end{equation}
The parameters $\lambda'$ and $\alpha$ exist but are not unique. Using an appropriate choice of $\phi'$, we can assume $0 \leq \lambda' \leq \pi/2$, such that $0 < \sin \lambda' \leq |\sin2 \lambda| \leq 1$. Let us define the effective lambda
\begin{equation}
    \lambda_{\text{eff}} = \min \left\{ \lambda, \frac{\pi}{2} - \lambda \right\} \ .
\end{equation}
Since by assumption we have $0 < \lambda < \pi / 2$, it follows that $0 < \lambda_{\text{eff}} \leq \pi /4$, which in turn implies $|\sin \lambda_{\text{eff}}| \leq |\sin (2\lambda_{\text{eff}})|$. Note that the introduction of $\lambda_{\text{eff}}$ is equivalent to switching the assignment of cosine and sine to the diagonal and off-diagonal elements in the decomposition in Eq.~\eqref{eq:DecompositionAngles} such that $0 < \lambda \leq \pi / 4$.

Inserting the decomposition Eq.~\eqref{eq:UnitaryDecomposition} into the expression for the two LZS-gates Eq.~\eqref{eq:DoubleLZSGate1}, we obtain
\begin{equation}
    \begin{aligned}
        &U(\lambda, \phi_1, \theta_1) U(\lambda, \phi_2, \theta_2) \\
        &=
        \begin{pmatrix}
        e^{i(\theta_1 + \alpha)/2} & 0 \\ 0 & e^{-i(\theta_1 + \alpha)/2}
        \end{pmatrix}
        \begin{pmatrix}
            \cos \lambda' & \sin \lambda' \\ -\sin \lambda' & \cos \lambda'
        \end{pmatrix}
        \begin{pmatrix}
            e^{i(\phi_2 + \alpha)/2} & 0 \\ 0 & e^{-i(\phi_2 + \alpha)/2}
        \end{pmatrix} \\
        &=
        \begin{pmatrix}
        e^{i\theta'/2} & 0 \\ 0 & e^{-i\theta'/2}
        \end{pmatrix}
        \begin{pmatrix}
            \cos \lambda' & \sin \lambda' \\ -\sin \lambda' & \cos \lambda'
        \end{pmatrix}
        \begin{pmatrix}
            e^{i\phi'/2} & 0 \\ 0 & e^{-i\phi'/2}
        \end{pmatrix} \\
        &=: U'(\lambda', \phi', \theta')
    \end{aligned} \ , 
    \label{eq:DoubleLZSGate3}
\end{equation}
where we used the fact that $\phi_1$ and $\theta_2$ are free parameters. Eq.~\eqref{eq:DoubleLZSGate3} tells us that two LZS-gates with free phase parameters and fixed transition parameter $\lambda$ are equivalent to a single LZS-gate with free phase parameters $\phi'$ and $\theta'$ and a transition parameter $\lambda'$ that can be chosen such that
\begin{equation}
    0 \leq \sin \lambda' \leq |\sin 2\lambda_{\text{eff}} | \ . 
\end{equation}
In order to further extend the range of accessible transition probabilities, one can fix $\kappa \in [0, 2]$ such that $\lambda' = \min \left\{ \kappa \lambda_{\text{eff}}, \pi / 4 \right\}$ and join two of the gates defined in Eq.~\eqref{eq:DoubleLZSGate3}. Repetitive application of this argument presented above after $n$ iterations leads to
\begin{equation}
    U^{(n)}(\lambda^{(n)}, \phi, \theta) = 
    \begin{pmatrix}
        e^{-i\theta/2} & 0 \\ 0 & e^{-i\theta/2}
    \end{pmatrix}
    \begin{pmatrix}
        \cos \lambda^{(n)} & \sin \lambda^{(n)} \\ -\sin \lambda^{(n)} & \cos \lambda^{(n)}
    \end{pmatrix}
    \begin{pmatrix}
        e^{-i\phi/2} & 0 \\ 0 & e^{-i\phi/2}
    \end{pmatrix} \ , 
\end{equation}
where $\phi$ and $\theta$ can be chosen freely, while $\lambda^{(n)}$ can be chosen such that
\begin{equation}
    0 \leq \sin \lambda^{(n)} = \sin\left(\kappa^{(n)}\lambda^{(n-1)}\right) = \sin\left(\lambda_{\text{eff}} \prod_{m=1}^n \kappa^{(m)} \right) \ , 
    \label{eq:LambdaConstraint}
\end{equation}
where all $\kappa^{(m)} \in [0 , 2]$ can be chosen so that any $\lambda^{(n)} \in [0, 2^n \lambda_{\text{eff}}] \subseteq [0, \pi / 4]$ can be realized, where equality is reached after $n = \lceil \log_2 \frac{\pi}{4\lambda_{\text{eff}}}\rceil$. From one more iteration of the argument, where now we allow $0 \leq \kappa \leq 2$, $\lambda^{(n+1)} \in [0, \pi / 2]$ is reached. Although the sequence of basic LZS-gates doubles in every application of the argument, the range of accessible $\lambda^{(n)}$ also grows exponentially. Therefore, we obtain a universal SU(2)-gate after $N_0 = 2^{n+1} = 2^{\lceil \log_2 \frac{\pi}{2\lambda_{\text{eff}}}\rceil}$ LZS-gates, which is the statement of Proposition \ref{lemma:UniversalSU2}.\\

Note that in the iterative argument, the choice of $\lambda' = \min \left\{ \kappa \lambda_{\text{eff}}, \pi / 4 \right\}$ is not strictly necessary, but it allows to consider a monotonously increasing sequence of $\lambda^{(n)}$, which simplifies the proof. Furthermore, while we assumed identical $r$ for the forward and backward ramps, the argument trivially extends to the case of distinct $0 < r_{i=1,2} < 1$. The main difference occurs only in the first iteration. Consider the concatenation of two ramps with distinct transition probabilities
\begin{equation}
    \begin{aligned}
    &U(\lambda_1, \phi_1, \theta_1) U(\lambda_2, \phi_2, \theta_2) \\
    &=
    \begin{pmatrix}
        e^{i\theta_1/2} & 0 \\ 0 & e^{-i\theta_1/2}
    \end{pmatrix}
    \begin{pmatrix}
        \cos \lambda_1 & \sin \lambda_1 \\ -\sin \lambda_1 & \cos \lambda_1
    \end{pmatrix}
    \begin{pmatrix}
        e^{i\phi'/2} & 0 \\ 0 & e^{-i\phi'/2}
    \end{pmatrix}
    \begin{pmatrix}
        \cos \lambda_2 & \sin \lambda_2 \\ -\sin \lambda_2 & \cos \lambda_2
    \end{pmatrix}
    \begin{pmatrix}
        e^{i\phi_2/2} & 0 \\ 0 & e^{-i\phi_2/2}
    \end{pmatrix}
    \end{aligned} \ ,
    \label{eq:DoubleLZSGateDistinctLambda}
\end{equation}
where, as before, we absorb the different phases from the ramps into the tunable phases of the wait gate. Using again trigonometric identities, it is easy to see that the norm of the off-diagonal elements in Eq.~\eqref{eq:UnitaryDecomposition} would now read
\begin{equation}
    |\sin \lambda'| = |\sin(\lambda_1 + \lambda_2) \cos\phi' -i \sin(\lambda_1 - \lambda_2) \sin \phi' | \ ,
\end{equation}
which are respectively lower and upper  bounded by the minimum and maximum of $|\sin(\lambda_1 - \lambda_2)|$ and $|\sin(\lambda_1 + \lambda_2)|$. If $\lambda_1$ and $\lambda_2$ are close, i.e. $\lambda_1 \approx \lambda_2 = \lambda$, then $\phi'$ can be chosen such that $|\sin \lambda'| \approx |\sin 2\lambda|$ and the argument above applies similarly. If one of the transition rates is close to $1$ while the other is close to $0$, then $|\sin \lambda'|$ will also be close to unity. We can then start the iterative argument from here, reaching universality after $N_0 = 2^{\lceil \log_2 \frac{\pi}{\lambda_{\text{eff}}} \rceil}$, where $\lambda_{\text{eff}} = \max \{ \pi/2 - \lambda_{\text{eff},1} + \lambda_{\text{eff},2} \}$. The doubling of the $N_0$ for when the two transition probabilities are apart reflects the fact that now after two sweeps the interference cannot be completely destructive and there is a net transfer. However, even if the two sweeps are asymmetric, after grouping them they become symmetric.
\section{Proof of Proposition \ref{prop:efficiency} - Efficient optimization}
\label{sec:EfficiencyProof}
We consider the problem of preparing the ground state of a target Hamiltonian $H_Z$ using diabatic quantum annealing (DQA). DQA works by preparing the system in the ground state of a simple initial Hamiltonian $H_X$ and then interpolation between $H_X$ and $H_Z$ according to
\begin{equation}
    H(s) = (1-s) H_X + s H_Z \ .
    \label{eq:Hamiltonian}
\end{equation}
While for adiabatic quantum annealing (AQA), the schedule $s=s(t)$ is simply a monotonic function that interpolates between $H_X$ and $H_Z$ in some (long enough) time $T$, DQA makes explicit use of diabatic transitions, i.e. excitations of the system into higher excited states. In the main text, we propose a schedule that makes use of Landau-Zener-St{\"u}ckelberg interferometry, hence we define the LZS-schedule as
\begin{equation}
    s(t) = \begin{cases}
        s_1 \frac{t}{t_1} & , \ 0 \leq t < t_1 \\
        s_1 & , \ t_1 \leq t < t_2 \\
        s_1 + (s_2 - s_1) \frac{t-t_2}{t_3-t_2} & , \ t_2 \leq t < t_3 \\
        s_2 & , \ t_3 \leq t  < t_4 \\
        s_2 + (1 - s_2)\frac{t-t_4}{t_5-t_4} & , \ t_4 \leq t \leq t_5
    \end{cases} \ .
    \label{eq:DefLZS}
\end{equation}
This schedule is parameterized by the seven parameters $t_{i=1...5}$ and $s_{i=1, 2}$. We will denote them collectively as the parameter vector $\theta$. We will first show that, under specific assumptions, there is an adaptive grid search algorithm that converges to the optimal solution in polynomial time. Then we show that these conditions are met in the setting we consider here.

\subsection{Global optimization of a differentiable function with affinely bounded gradient}
Let $f: \mathcal{D} \rightarrow \mathbb{R}$ be a bounded differentiable function with
\begin{equation}
    x^* = \argmin_{x \in \mathcal{D}} f(x) < \infty \ ,
\end{equation}
i.e. we require $f$ to have a global minimum in the parameter domain $\mathcal{D} \subseteq \mathbb{R}^d $. It simplifies the proof to assume $\mathcal{D} = \mathbb{R}_+^d$ be the strictly positive orthant of $\mathbb{R}^d$. Additionally, we require that the gradient of $f$ is affinely bounded, i.e.
\begin{equation}
    \| \grad f \| \leq K_1 \| x \| + K_2 \ , 
\end{equation}
for real numbers $K_1, K_2 > 0$. The computational task we are asked to solve is to find for any $\epsilon >0$
\begin{equation}
    x_{opt} \in \mathcal{D} : |f(x_{opt}) - f(x^*)| \leq \epsilon \ .
\end{equation}
We propose the following Algorithm \ref{alg:AGO}, which is a version of grid search adapted to respect the fact that $\mathcal{D}$ is unbounded. We will show that the bound on the gradient of $f$ allows to determine the asymptotic runtime.

The optimization of continuous functions tends to suffer from the curse of dimensionality (i.e. exponential dependence on $d$). For simplicity of the argument, we will consider adaptive grid search, as it suffices to show the existence of a polynomial optimization algorithm, assuming that the LZS-ansatz prepares the target ground state efficiently.\\
We will denote the discretization of the set $(0, \Delta ]^d \subset \mathbb{R}_+^d$ into a regular grid with a distance $\rho$ ($\Delta / \rho$ needs to be an integer) between neighboring points as
\begin{equation}
    G_{\Delta, \rho} := \left\{ \rho, ...,  \Delta - \rho, \Delta \right\}^d \ .
\end{equation}
$G_{\Delta, \rho}$ contains $\left( \frac{\Delta}{\rho} \right)^d$ points. Note that in the optimization problem we consider here, we only need to search for positive values, but in principle we could also define $G_{\Delta, \rho}$ so that negative values are covered as well.
\begin{algorithm}[H]
\caption{Adaptive grid search} \label{alg:AGO}
\begin{algorithmic}[1]
    \Require bounded differentiable function $f$ with $x^* = \argmin_x f(x) < \infty$
    \Require initial resolution $\rho > 0$
    \Require initial range $\Delta > 0$
    \Require number of iterations $N_{\text{iter}}$
    \State $f_{opt} \gets \infty, x_{opt} \gets 0$
    \For{$N_{\text{iter}}$ iterations}
    \For{each point $x_i$ in $G_{\Delta, \rho}$}
    \If{$f(x_i) < f_{opt}$}
    \State $f_{opt} \gets f(x_i)$
    \State $x_{opt} \gets x_i$
    \EndIf
    \EndFor
    \State $\Delta \gets 2\Delta$
    \State $\rho \gets \frac{\rho}{2}$
    \EndFor
    \State \Return $x_{opt}$
\end{algorithmic}
\end{algorithm}
First, we can prove the time complexity of Algorithm \ref{alg:AGO} for differentiable $f$ with an affinely bounded gradient, where the time complexity depends polynomially on the constants $K_1$ and $K_2$. In a second step, we show that for the loss function used for our variational schedules these constants only grow polynomially with system size.
\begin{lemma} (Time complexity of Algorithm \ref{alg:AGO})
    Let $f: \mathbb{R}^d_+ \rightarrow \mathbb{R}$ a bounded differentiable function with $x^* = \argmin_x f(x)$ and $\| \grad f \| \leq K_1 \| x \| + K_2$. Then Algorithm \ref{alg:AGO} finds an $x_{opt}$ with $|f(x_{opt}) - f(x^*)| < \epsilon$ for any $\epsilon > 1/\poly(N)$ in at most
    \begin{equation*}
        N_{\text{eval}} = O \left( \left( \frac{\sqrt{d} \|x^*\| (K_1 \|x^*\| + K_2) }{\epsilon} \right)^{d} \log \frac{\sqrt{d} (K_1 \|x^*\| + K_2) }{\epsilon} \right)
    \end{equation*}
    evaluations of $f$.
    \label{lemma:Complexity}
\end{lemma}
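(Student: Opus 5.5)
The plan is to show that after enough iterations the grid both covers $x^*$ and is fine enough that some grid point is close to $x^*$. The affine gradient bound then controls the function value at that point. Since $f_{opt}$ is monotonically non-increasing over the run, it suffices to exhibit one iteration $k$ at which some grid point $x_i\in G_{\Delta_k,\rho_k}$ satisfies $|f(x_i)-f(x^*)|<\epsilon$. After $k$ iterations the range is $\Delta_k = 2^k\Delta$ and the resolution is $\rho_k=\rho/2^k$.

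First, I would fix the iteration count. Let $k_1$ be the smallest $k$ with $\Delta_k\ge \max_j x^*_j$; it suffices that $2^k\Delta \ge \|x^*\|$, so $k_1 = O(\log(\|x^*\|/\Delta))$. For $k\ge k_1$, $x^*$ lies in the cube $(0,\Delta_k]^d$. Hence there is a grid point $x_i$ with $|x_{i,j}-x^*_j|\le\rho_k$ for every coordinate $j$, which gives $\|x_i-x^*\|\le\sqrt d\,\rho_k$. The boundary near $0$ is handled by rounding each coordinate up to the next grid value.

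Second, I would apply the mean value theorem along the segment from $x^*$ to $x_i$. This gives $|f(x_i)-f(x^*)|\le \sup_{y\in[x^*,x_i]}\|\nabla f(y)\|\,\|x_i-x^*\|$. On that segment $\|y\|\le\|x^*\|+\sqrt d\rho_k\le \|x^*\|+\sqrt d\rho$, and since $\rho_k\le\rho$ this is $O(\|x^*\|)$ for fixed initial $\rho$. Therefore $|f(x_i)-f(x^*)|\le \sqrt d\,\rho_k\,(K_1\|x^*\|+K_2')$, where $K_2'$ absorbs the constant shift. This is below $\epsilon$ once $\rho_k \le \epsilon/(\sqrt d (K_1\|x^*\|+K_2'))$, i.e. once $k\ge k_2 = O\big(\log\frac{\sqrt d (K_1\|x^*\|+K_2)}{\epsilon}\big)$. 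We then take $N_{\text{iter}}=\max(k_1,k_2)$.

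Third, I would count evaluations. Iteration $k$ evaluates $(\Delta_k/\rho_k)^d = (4^k\Delta/\rho)^d$ points, and this count grows geometrically in $k$. A crude bound is therefore $N_{\text{iter}}$ times the cost of the final iteration, which produces the logarithmic factor in the statement. At the final iteration we need $\Delta_k \gtrsim \|x^*\|$ and $\rho_k\lesssim \epsilon/(\sqrt d(K_1\|x^*\|+K_2))$. Their ratio is $\sqrt d\,\|x^*\|(K_1\|x^*\|+K_2)/\epsilon$, which raised to the power $d$ gives the claimed bound.

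The main obstacle is that the range and resolution are coupled through a single counter $k$. Both thresholds may not be reached simultaneously: if $k_1\neq k_2$, then $\Delta_k/\rho_k$ at $k=\max(k_1,k_2)$ can exceed the ideal ratio by a factor $2^{|k_1-k_2|}$. I would handle this either by assuming initial $\Delta,\rho$ are chosen so that the two thresholds are within a constant number of iterations of each other, or by tracking the overshoot explicitly and absorbing it, arguing it only affects constants for fixed $d$ and initial $\Delta/\rho$. Throughout I treat $d=7$, $\Delta$ and $\rho$ as constants. The hypothesis $\epsilon>1/\poly(N)$ only serves to keep the final bound polynomial once $K_1,K_2,\|x^*\|$ are shown to be polynomial in the next step.
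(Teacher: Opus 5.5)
You follow the paper's proof. You cover $x^*$ with the range, use the mean value theorem with the affine gradient bound to fix the resolution, pay $\sqrt d$ for the grid diagonal, and bound the cost by the number of rounds times $(\Delta/\rho)^d$ at the last round. Absorbing $K_1\sqrt d\rho_0$ into $K_2'$ is a linearised version of the paper's step of solving $(K_1\|x^*\|+K_2)\delta+K_1\delta^2\le\epsilon$; both are fine up to constants.

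The coupling problem you flag is real, and the paper's proof does not address it either: it simply inserts the separately required $\Delta\sim\|x^*\|$ and $1/\rho\sim\sqrt d(K_1\|x^*\|+K_2)/\epsilon$ into $(\Delta/\rho)^d$. However, your second remedy fails, because the overshoot is not a constant.

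Set $A=\|x^*\|/\Delta_0$ and $B=\rho_0\sqrt d(K_1\|x^*\|+K_2)/\epsilon$. The first round $k$ meeting both thresholds has $2^k\ge\max\{A,B\}$, so that round alone costs $(4^k\Delta_0/\rho_0)^d\ge(\max\{A,B\}^2\Delta_0/\rho_0)^d$. The claimed bound is only $(AB\,\Delta_0/\rho_0)^d$ up to the log. The excess factor $(\max\{A,B\}/\min\{A,B\})^d$ depends on $\epsilon$, $K_1$, $K_2$ and $\|x^*\|$. For example, if $\|x^*\|\le\Delta_0$ the algorithm still doubles $\Delta$ every round. In the worst case (an $x^*$ badly aligned with the dyadic grids, with $f$ rising at rate $\sim K_2$ away from it), reaching the required resolution then costs $\Theta(\epsilon^{-2d})$ evaluations, not $O(\epsilon^{-d}\log(1/\epsilon))$.

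You therefore need one of the following repairs:
\begin{itemize}
\item Make your first remedy an explicit hypothesis, $\Delta_0\rho_0=\Theta\bigl(\|x^*\|\epsilon/(\sqrt d(K_1\|x^*\|+K_2))\bigr)$. This is what the paper tacitly assumes, and it requires knowing these quantities up to constants.
\item Prove only the weaker bound $O\bigl((\Delta_0/\rho_0)^d\max\{A,B\}^{2d}\log(\cdot)\bigr)$. It is still polynomial for $d=7$ and suffices for Proposition~\ref{prop:efficiency}.
\item Decouple the two updates: in round $m$, evaluate the grids with $\Delta=2^i\Delta_0$, $\rho=\rho_0/2^j$ for all $i+j=m$. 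This succeeds by round $k_1+k_2$ at total cost $O\bigl((k_1+k_2+1)\,4^d(AB\,\Delta_0/\rho_0)^d\bigr)$ for $A,B\ge1$, which is the stated form of the bound.
\end{itemize}
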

\begin{proof}
    First, let us determine a sufficiently large search range $\Delta$ and a sufficiently small precision $\rho$ in order to observe the desired $x_{opt}$. The search range $\Delta$ is easily determined, as it must be large enough to cover $ x^*$, and since $\| x^* \| $ is finite by assumption, it suffices to demand
    \begin{equation}
        \Delta > \| x^* \| \ .
        \label{eq:MinDelta}
    \end{equation}
    Now we will determine the sufficient resolution $\rho$. By assumption, the function $f$ is differentiable, hence according to the mean value theorem
    \begin{equation}
        |f(x) - f(x^*)| \leq \| \grad f(\xi)\| \| x-x^* \| \leq (K_1 \| \xi \| + K_2) \| x-x^* \| \ ,
        \label{eq:funcbound}
    \end{equation}
    for some $\xi \in \mathbb{R}^d$ on the line connecting $x$ and $x^*$, while the second inequality follows from the assumption that $\| \grad f \| \leq K_1 \| \xi \| + K_2$. Since $\xi$ is on a line connecting $x$ and $x*$, we can bound $\| \xi \|$ using the distance from the optimum $\delta := \| x - x^* \|$ as
    \begin{equation}
        \| \xi \| \leq \|x^* \| + \delta \ .
        \label{eq:xibound}
    \end{equation}
    A sufficient condition to ensure $|f(x) - f(x^*)| \leq \epsilon$ is to find an $x$ that $\delta$-close to $x^*$, where $\delta$ is the solution to
    \begin{equation}
        (K_1 \| x^* \| + K_2) \delta + K_1 \delta^2 \leq \epsilon \ ,
    \end{equation}
    which follows from combining Eq.~\eqref{eq:funcbound} and Eq.~\eqref{eq:xibound}. Solving for $\delta$ renders
    \begin{equation}
        \delta \leq \frac{K_1 \| x^* \| + K_2}{2K_1} \left( \sqrt{1+\frac{4K_1 \epsilon}{(K_1 \|x^*\| + K_2)^2}} - 1 \right) \ .
    \end{equation}
    In order to have at least one point of the uniform grid in the $\delta$-ball centered at $x^*$, it suffices to demand
    \begin{equation}
        \rho < \frac{\delta}{\sqrt{d}} \leq \frac{K_1 \| x^* \| + K_2}{2K_1 \sqrt{d}} \left( \sqrt{1+\frac{4K_1 \epsilon}{(K_1 \|x^*\| + K_2)^2}} - 1 \right) \ ,
    \end{equation}
    or equivalently
    \begin{equation}
        \frac{1}{\rho} > \frac{2K_1\sqrt{d}}{K_1 \| x^* \| + K_2} \left( \sqrt{1+\frac{4K_1 \epsilon}{(K_1 \|x^*\| + K_2)^2}} - 1 \right)^{-1} = O \left( \frac{\sqrt{d} (K_1 \|x^*\| + K_2) }{\epsilon} \right) \ .
        \label{eq:MaxRho}
    \end{equation}
    Having found the sufficient range and resolution, we can compute the number of function calls to $f$ by Algorithm~\ref{alg:AGO}. Since $\Delta$ and $\rho$ grow / shrink exponentially with each iteration of the outer for-loop, a sufficiently large range $\Delta$ is reached after $\sim \log \Delta$ iterations, while a sufficiently small resolution $\rho$ is reached after $\sim \log 1/\rho$ iterations. This means we require
    \begin{equation}
        N_{\text{iter}} = O\left( \max \{ \log \Delta, \log 1 / \rho \} \right)
        \label{eq:TMin}
    \end{equation}
    iterations of the outer loop. Since both the required $\Delta$ and $1 / \rho$ asymptotically scale linearly in $\| x^* \|$ and $K_1 \| x^* \| + K_2$, respectively, we can omit taking the maximum of the logarithms in Eq.~\eqref{eq:TMin} and in combination with Eq.~\eqref{eq:MinDelta} and Eq.~\eqref{eq:MaxRho}, we find
    \begin{equation}
        N_{\text{iter}} = O \left( \log \frac{\sqrt{d} (K_1 \|x^*\| + K_2) }{\epsilon} \right) \ .
    \end{equation}
    Note that the inner for-loop requires $\left(\frac{\Delta}{\rho} \right)^d$ calls to $f$ per iteration, and therefore it follows with the number of iterations that
    \begin{equation}
        N_{\text{eval}} = \mathcal{O} \left( \left( \frac{\sqrt{d} \|x^*\| (K_1 \|x^*\| + K_2) }{\epsilon} \right)^{d} \log \frac{\sqrt{d} (K_1 \|x^*\| + K_2) }{\epsilon}\right)
    \end{equation}
    which concludes the proof.
\end{proof}
Clearly, the complexity of Algorithm \ref{alg:AGO} scales exponentially in the dimension $d$, as could be expected. However, the dimension of the parameter space for the LZS-ansatz is fixed to $d=7$, hence the algorithm scales polynomially in $\|x^*\|$ and $K_1, K_2$. In the next section we will argue about the scaling of the optimal parameter vector $x^*$ and the constants $K_1, K_2$ and show that they are polynomial in the system size $N$.

\subsection{The loss landscape of the LZS-ansatz: affine gradient bound}
Assuming that there are parameters $\theta^*$ of the LZS-ansatz (Eq.~\eqref{eq:DefLZS}) that give rise to a state that closely approximates the ground state of $H_Z$ in $T = \poly(N)$, we need to show that the optimal parameters $\theta^*$ are polynomially far away from the origin and that the constants $K_1$ and $K_2$ from the affine bound are also $\poly(N)$, as then by Lemma \ref{lemma:Complexity} the optimal parameters can be found in $\poly(N)$ function calls. Furthermore, as by assumption $t_{i=1...5} = \poly(N)$, each function call takes $\poly(N)$ time.

Showing that $\| \theta^* \| = \poly(N)$ if $T=\poly(N)$ is the simplest step. Note that $\theta = (s_{1,2}, t_{1...5})$, $s_{1,2} \in [0,1]$ and hence $O(1)$, while $t_{1...5} \geq 0$, which implies $T = \poly(N)$ iff $t_{1...5} = \poly(N)$. Since all components of $\theta^*$ are either $O(1)$ or $O(\poly(N))$, the (L2-)norm of $\theta^*$ is also $O(\poly(N))$.

Bounding $K_1$ and $K_2$ requires more work. The function to be optimized here is the expected energy of the target Hamiltonian
\begin{equation}
    \mathcal{L}(\theta) =  \langle \psi(\theta) | H_Z | \psi(\theta) \rangle \ ,
\end{equation}
where
\begin{equation}
    | \psi(\theta) \rangle =  U_T | \psi_0\rangle = \mathcal{T} \exp\left[ -i \int_{0}^{T} H(s(t)) dt \right] | \psi_0\rangle \ , 
\end{equation}
with $T = \sum_{n=1}^5 t_i$ and $H(s)$ and $s(t)$ from Eq.~\eqref{eq:Hamiltonian} and Eq.~\eqref{eq:DefLZS} respectively. We need to show that
\begin{equation}
    \| \grad \mathcal{L}|_{\theta} \| \overset{!}{\leq} K_1 \|\theta \| + K_2 \ ,
    \label{eq:DerivativeBound}
\end{equation}
which can be accomplished by showing that
\begin{equation}
\begin{aligned}
    |\partial_{\theta_i}\mathcal{L}| &= 2 | \Re \langle \psi(\theta) | H_Z | \partial_{\theta_i} \psi(\theta) \rangle | \\
    &\leq 2 | \Re \langle \psi_0 | U_T^\dagger H_Z (\partial_{\theta_i} U_T) | \psi_0 \rangle | \\
    &\leq 2 \| H_Z \| \| \partial_{\theta_i} U_T \| \\
    &\overset{!}{\leq} K'_1 \| \theta \| + K'_2 \ ,
\end{aligned}
\label{eq:Derivative0}
\end{equation}
for any $\partial_{\theta_i} := \frac{\partial}{\partial \theta_i}$. Since target Hamiltonians are assumed to satisfy $\| H_Z \| = \poly(N)$ for practical systems, it suffices to investigate the norm of the derivatives of $U_T$.\\
The unitary $U_T$ can be decomposed into five parts, according to the pieces of the schedule $s(t)$
\begin{equation}
     U_T = U_5(s_2, t_4, t_5) U_4(s_2, t_3, t_4) U_3(s_1, s_2, t_2, t_3) U_2(s_1, t_1, t_2) U_1(s_1, t_1) \ ,
     \label{eq:UTStepwise}
\end{equation}
where $U_1$, $U_3$ and $U_5$ are the unitaries corresponding to the linear ramps in the schedule, and $U_2$ and $U_4$ are the evolutions with constant Hamiltonians.

Since some of the parameters occur in two unitaries in Eq.~\eqref{eq:UTStepwise}, the product rule applies and the derivative $\partial_{\theta_i} U_T$ will be the sum of at most two terms. Each term is the derivative of a unitary multiplied by the remaining unitaries. Applying the Cauchy-Schwartz and the triangle inequalities, the norm of the derivative $\partial_{\theta_i} U_T$ can be bounded by the sums of at most two of the derivatives of the individual steps. Thus, it suffices to show that the derivatives of the unitaries $U_{1...5}$ are bounded by a polynomial in $N$.

\subsubsection{Derivatives of the unitaries with constant Hamiltonian}
Let us start with computing the derivatives of the evolutions of the constant Hamiltonians as
\begin{equation}
    \begin{aligned}
        &\| \partial_{t_i} \exp \left[ -i (t_{i+1} - t_i) H(s_j) \right] \| \\
        = &\| \partial_{t_{i+1}} \exp \left[ -i (t_{i+1} - t_i) H(s_j) \right] \| \\
        = &\| H(s_j) \exp \left[ -i \Delta t H(s_j) \right] \| \\
        \leq &\| H(s_j) \| = \poly(N) \ ,
    \end{aligned}
    \label{eq:Derivative1}
\end{equation}
and since by assumption $\|H_X \|, \| H_Z \| = \poly(N)$, it follows also that $\| H(s) \| = \| (1-s)H_X + s H_Z \| = \poly(N)$ for any $s \in [0, 1]$. Furthermore, we find
\begin{equation}
    \begin{aligned}
        &\| \partial_{s_j}\exp \left[ -i(t_{i+1}-t_{i})H(s_j) \right] \| \\
        = &\| (H_Z - H_X) (t_{i+1}-t_{i}) \exp \left[ -i(t_{i+1}-t_{i})H(s_j) \right] \| \\
        \leq &\| H_Z - H_X \| (t_{i+1}-t_{i}) \\
        = &\poly(N)
    \end{aligned}
    \label{eq:Derivative2a}
\end{equation}
as long as $\| H_X \|, \|H_Z \| = \poly(N)$ and $t_i = \poly(N)$. Note that Eq.~\eqref{eq:Derivative2a} is why we required the affine bound on the gradient.

For the operator norms of the ramp unitaries, we will consider the derivatives of $U_3$, since $U_1$ and $U_5$ can be treated analogously. 

\subsubsection{Derivatives of the linear ramps with respect to $s_i$}
Note that $U_3(s_1, s_2, t_2, t_3)$ is the solution to the Schr{\"o}dinger equation
\begin{equation}
    i \frac{\partial}{\partial t} U(t) = H(t) U(t) \ ,
\end{equation}
integrated from $t_2$ to $t_3$ with initial $U(t=t_2) = \mathbb{I}$, the solution being the time-ordered exponential
\begin{equation}
    U_3(s_1, s_2, t_2, t_3) = \mathcal{T}\exp[-i\int_{t_2}^{t_3}H(t) dt] \ , 
\end{equation}
where $s_1$ and $s_2$ are parameters of the Hamiltonian $H(t)$. This time-ordered exponential can be discretized as
\begin{equation}
    \mathcal{T}\exp[-i\int_{t_2}^{t_3}H(t) dt] = \lim_{M \rightarrow \infty} \prod_{m=1}^M\exp[-iH(t_m) \Delta t] \ , 
\end{equation}
and with the product rule we obtain for the derivative with respect to the $s_i$
\begin{equation}
    \begin{aligned}
        \partial_{s_i} \mathcal{T}\exp[-i\int_{t_2}^{t_3}H(t) dt] =
        \lim_{M \rightarrow \infty} \sum_{m'=1}^M &\left( \prod_{m=m'+1}^M \exp\left[ -i H(t_m) \Delta t \right] \right) \\
        &\cdot \left( -i\Delta t \partial_{s_i} H(t_{m'}) \right) \\
        &\cdot \left( \prod_{m=1}^{m'} \exp\left[ -i H(t_m) \Delta t \right] \right) \ , 
    \end{aligned}
\end{equation}
which, after evaluating the limit, simplifies to
\begin{equation}
    \partial_{s_i} \mathcal{T}\exp[-i\int_{t_2}^{t_3}H(t) dt] = -i\int_{t_2}^{t_3}U(t_3, t)  ( \partial_{s_i} H(t)) U(t, t_2)  dt \ ,
\end{equation}
where $U(t, t')$ are the time-evolution operators from time $t'$ to time $t$. We can bound the operator norm according to
\begin{equation}
    \begin{aligned}
        \left\| \partial_{s_i} \mathcal{T}\exp[-i\int_{t_2}^{t_3}H(t) dt] \right\| &= \left\| \int_{t_2}^{t_3}U(t_3, t)  ( \partial_{s_i} H(t)) U(t, t_2)  dt \right\| \\
        &\leq \int_{t_2}^{t_3} \left\| U(t_3, t)  ( \partial_{s_i} H(t)) U(t, t_2) \right\|  dt \\
        &\leq \int_{t_2}^{t_3} \| \partial_{s_i} H(t) \|  dt \ .
    \end{aligned}
\end{equation}
Considering the definition of the Hamiltonian (Eq.~\eqref{eq:Hamiltonian}) and the schedule (Eq.~\eqref{eq:DefLZS}), the derivatives of $H(t)$ give
\begin{equation}
    \begin{aligned}
        &\partial_{s_1}H(t) = (H_Z - H_X) \left(1- \frac{t-t_2}{t_3 - t_2} \right) \ , \\
        &\partial_{s_2}H(t) = (H_Z - H_X) \frac{t-t_2}{t_3 - t_2} \ ,
    \end{aligned}
\end{equation}
and thus we can bound
\begin{equation}
    \begin{aligned}
        \| \partial_{s_1} U(s_1, s_2, t_2, t_3) \| \leq \| H_Z - H_X \| \int_{t_2}^{t_3} 1- \frac{t-t_2}{t_3 - t_2} dt = \| H_Z - H_X \| \frac{t_3 - t_2}{2} \ ,
    \end{aligned}
    \label{eq:Derivative2}
\end{equation}
and
\begin{equation}
    \| \partial_{s_2} U(s_1, s_2, t_2, t_3) \| \leq \| H_Z - H_X \| \int_{t_2}^{t_3} \frac{t-t_2}{t_3 - t_2} dt = \| H_Z - H_X \| \frac{t_3 - t_2}{2} \ , 
    \label{eq:Derivative3}
\end{equation}
which are polynomial in $N$ under the same assumptions discussed above.

\subsubsection{Derivatives of the linear ramps with respect to $t_i$}
We can transform the time coordinates and recognize that $U_3(s_1, s_2, t_2, t_3)$ is also the solution to the Schr{\"o}dinger equation
\begin{equation}
    i \frac{\partial}{\partial s} U(s) = \frac{t_3 - t_2}{s_2 - s_1} H(s) U(s) \ ,
\end{equation}
integrated from $s_1$ to $s_2$ with initial condition $U(s=s_1) = \mathbb{I}$. The solution is the time-ordered exponential
\begin{equation}
    U(s_1, s_2, t_2, t_3) = \mathcal{T} \exp\left[-i \frac{t_3 - t_2}{s_2 - s_1} \int_{s_1}^{s_2} H(s) ds \right] \ .
\end{equation}
In order to compute $\partial_{t_2} U$ we can discretize the time-ordered exponential as done before
\begin{equation}
    \mathcal{T} \exp \left[ -i\frac{t_3 - t_2}{s_2 - s_1} \int_{s_1}^{s_2} H(s) ds \right] = \lim_{M\rightarrow \infty} \prod_{m=1}^M \exp\left[ -i\frac{t_3 - t_2}{s_2 - s_1} H(s_m) \Delta s \right] \ ,
\end{equation}
such that after applying the product-rule and evaluating the limits we find
\begin{equation}
    \partial_{t_i} \mathcal{T} \exp \left[ -i\frac{t_3 - t_2}{s_2 - s_1} \int_{s_1}^{s_2} H(s) ds \right] = \frac{\pm i}{s_2 - s_1} \int_{s_1}^{s_2}U(s_2, s) H(s) U(s, s_1) ds \ ,
\end{equation}
where the sign on the right-hand side depends on $i=1$ or $i=2$.
The operator norm then gives
\begin{equation}
    \begin{aligned}
        &\left\| \partial_{t_i} \mathcal{T} \exp \left[ -i\frac{t_3 - t_2}{s_2 - s_1} \int_{s_1}^{s_2} H(s) ds \right] \right\| \\
        = \frac{1}{s_2 - s_1}&\left\| \int_{s_1}^{s_2}U(s_2, s) H(s) U(s, s_1) ds \right\| \\
        \leq \frac{1}{s_2 - s_1} &\int_{s_1}^{s_2} \| U(s_2, s) H(s) U(s, s_1) \| ds \\
        \leq \frac{1}{s_2 - s_1} &\int_{s_1}^{s_2} \| H(s) \| ds = \poly(N) \ .
    \end{aligned}
    \label{eq:Derivative4}
\end{equation}
In conclusion, all derivatives of $\mathcal{L}$ are bounded by $\poly(N)$, as long as $ \| H_X \| , \| H_Z \| = \poly(N)$, as can be assumed for physical systems and $T=\sum_{i=1}^5 t_i = \poly(N)$. Therefore, we can conclude that
\begin{equation}
    \| \partial_{\theta_i} U_T \| \leq \poly(N) \ ,
\end{equation}
if $T=\sum_{i=1}^5 t_i = \poly(N)$. Combining Eq.~\eqref{eq:Derivative0}, Eq.~\eqref{eq:Derivative1}, Eq.~\eqref{eq:Derivative2}, Eq.~\eqref{eq:Derivative3} and Eq.~\eqref{eq:Derivative4} shows the bound Eq.~\eqref{eq:DerivativeBound} with $K_1, K_2 = \poly(N)$. Note that Algorithm \ref{alg:AGO} optimizes over $\mathbb{R}_+^d$. Since the loss $\mathcal{L}$ continuously extends to $[0, \infty)^d$, the proof extends trivially to the boundaries. Similarly, if any or all of the parameters are bounded, as is the case with the $s_i$, the proof also holds. Hence, Algorithm \ref{alg:AGO} finds the optimal parameters in polynomially many evaluations of $\mathcal{L}$. Since all parameter points that are tested are with $t_i = \poly(N)$, each function evaluation on the annealer also requires at most polynomial time. Therefore, the optimization is efficient, which is the claim of Proposition \ref{prop:efficiency}.
\section{Details of numerical simulations}
%

\subsection{Noise models}
\label{sec:NoiseModel}
We assume a two level system for all the noise models we consider. Denote the computational basis by $\ket{0}, \ket{1}$ and the instantaneous energy eigenbasis by $\ket{E_0(t)}, \ket{E_1(t)}$. Some of these cases were explicitly studied in Ref.~\cite{Albash2015} in the context of slow annealing. When comparing between the different models, one has to be a little careful how to do this in a `fair' way since they describe different decoherence processes and the strength of decoherence for each is not equivalent.  All units assume $\hbar = 1$.
%
\subsubsection{Dephasing in the computational basis}
\begin{equation}
\frac{d}{dt} \rho(t) = -i \left[ H(t), \rho(t) \right] + \gamma \left( \sigma^z \rho(t) \sigma^z - \rho(t) \right) \ .
\end{equation}
Here the Lindblad operator is $L = \sigma^z$ and the strength of the dephasing is controlled by $\gamma$.
\subsubsection{Amplitude damping}
Assume that the fixed point is the $\ketbra{0}{0}$ state.  Then the Lindblad master equation is given by:
\begin{equation}
\frac{d}{dt} \rho(t) = -i \left[ H(t), \rho(t) \right] + \gamma \left(  \sigma_-  \rho(t) \sigma_+ - \frac{1}{2} \left\{ \sigma_+ \sigma_- , \rho(t) \right\} \right) \ ,
\end{equation}
where the Lindblad operator is $L = \sigma_- = \ketbra{0}{1}$.
\subsubsection{Isotropic depolarization}
The Lindblad master equation is given by:
\begin{equation}
\frac{d}{dt} \rho(t) = -i \left[ H(t), \rho(t) \right] + \gamma \sum_{\alpha \in \left\{x,y,z \right\}}\left(  \sigma_\alpha  \rho(t) \sigma_\alpha - \rho(t) \right) \ ,
\end{equation}
where the three Lindblad operators are $L_1 = \sigma^x$, $L_2 = \sigma^y$, $L_3 = \sigma^z$.
\subsubsection{Adiabatic Master Equation}
The Lindblad equation is given by:
\begin{equation}
\frac{d}{dt} \rho(t) = -i \left[ H(t), \rho(t) \right] + \sum_{\omega} \gamma(\omega) \left( L_{\omega}(t) \rho(t) L_{\omega}(t)^\dagger - \frac{1}{2} \left\{ L_{\omega}^{\dagger}(t) L_{\omega}(t), \rho(t) \right\} \right) \ ,
\end{equation}
where for convenience we have excluded the Lamb shift term.  The sum over $\omega$ is over all differences of energy eigenvalues, $\omega_1 = E_1(t) - E_0(t), \omega_0 = 0, \omega_{-1} = E_0(t) - E_1(t)$ and the time-dependent Lindblad operators are given by:
\begin{eqnarray}
L_{\omega_1} &=& \bra{E_0(t)} \sigma^z \ket{E_1(t)} \ketbra{E_0(t)}{E_1(t)} \\
L_{0} &=& \bra{E_0(t)} \sigma^z \ket{E_0(t)} \ketbra{E_0(t)}{E_0(t)} +  \bra{E_1(t)} \sigma^z \ket{E_1(t)} \ketbra{E_1(t)}{E_1(t)} \\
L_{\omega_{-1}} &=& \bra{E_1(t)} \sigma^z \ket{E_0(t)} \ketbra{E_1(t)}{E_0(t)} \ . 
\end{eqnarray}
The coefficient $\gamma(\omega)$ is taken to be of the form:
\begin{equation}
    \gamma(\omega) = 2 \pi g^2 \eta \frac{ \omega e^{-|\omega|/\omega_c}}{1- e^{-\beta \omega}} \ ,
\end{equation}
where $g^2 \eta$ controls the strength of the noise, $\beta$ is the inverse temperature, and $\omega_c$ is a cutoff usually taken to be $8 \pi$.

\section{Additional numerical results}
\subsection{Landau-Zener-St{\"u}ckelberg interference on the frustrated Ising ring in Ref.~\cite{Cote_2023}}
\label{sec:AppendixLZSDemo}
In this section we briefly demonstrate LZS interference in the results by C{\^o}t{\'e} et al. To this end, we take the optimized schedule for the frustrated Ising ring at $N=9$ from Ref.~\cite{Cote_2023} and insert a waiting time $\tau$ at the first peak of the non-monotonic annealing schedule. Note that we add the plateau while leaving the schedule otherwise unchanged, therefore the total anneal time of this modified schedule is $T_{\text{opt}} + \tau$, where $T_{\text{opt}}$ is the optimal anneal time found in Ref.~\cite{Cote_2023}. An example of the modified schedule is shown in Figure \ref{fig:LZSPattern}(a). Considering the residual energy $\varepsilon_{\text{residual}}$ as a function of the waiting time $\tau$ in Figure \ref{fig:LZSPattern}(b), we see that $\varepsilon_{\text{residual}}$ seems to consist of a slow oscillation with a large amplitude and fast oscillations with small amplitude. The large oscillation follows the population of the first excited state, as can be seen in Figure \ref{fig:LZSPattern}(c). Figure \ref{fig:LZSPattern}(c) also shows how the system oscillates between ground and first excited state as a function of $\tau$, as would be expected in LZS interference.
\begin{figure}[H]
    \centering
    \begin{tikzpicture}
        \node[anchor=north west, inner sep=0] (fig) at (0,0){\includegraphics[scale=0.55]{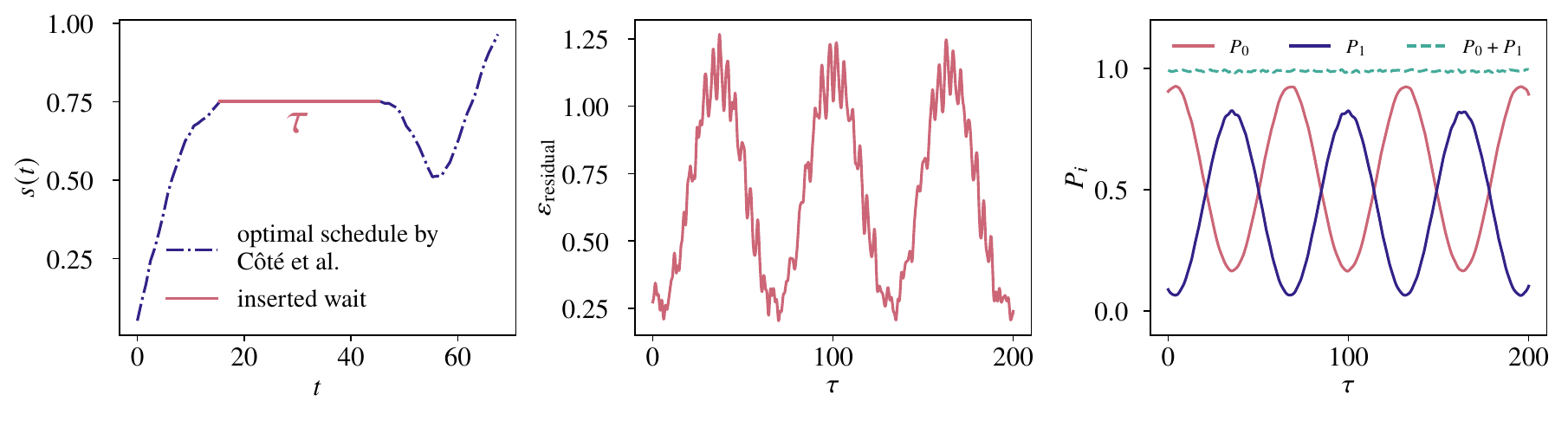}};
        \node[font=\large] at ($(top.north west)!0.01!(top.north east) + (0, 0.2cm)$) {(a)};
        \node[font=\large] at ($(top.north west)!0.36!(top.north east) + (0, 0.2cm)$) {(b)};
        \node[font=\large] at ($(top.north west)!0.68!(top.north east) + (0, 0.2cm)$) {(c)};
    \end{tikzpicture}
    \caption{(a) optimal annealing schedule for the frustrated Ising ring with $N=9$ identified by C{\^o}t{\'e} et al. in Ref.~\cite{Cote_2023} where we insert a waiting time $\tau$. Running the schedule with the waiting time and recording the expected target energy as a function of $\tau$ (b), we see that the cost function oscillates, as is expected in LZS interference. The signal consists of a large oscillation due to the system oscillating between ground and first excited state (c), while the fast oscillations of smaller amplitude are due to transitions between higher excitations.}
    \label{fig:LZSPattern}
\end{figure}

\subsection{Eigenstate populations for the frustrated Ising ring}
\label{sec:AppendixIsingRingSchedules}
Here we show the results for the simulation of the full state vector of the optimal annealing schedules for the frustrated Ising ring for $N=9, 11, 13$ and $c=0.5, 0.1$ for linear (Figure \ref{fig:FrustratedIsingRingAdditionalResultLinear}) and cubic (Figure \ref{fig:FrustratedIsingRingAdditionalResultCubic}) interpolation. As discussed in the main text in Sections \ref{sec:IsingRingSimulation} and \ref{sec:CubicSplines}, the dynamics is qualitatively the same in all cases, except the required time $T$ is shorter for cubic interpolation. The schedules sweep back and forth, populating the first excited state. This allows to find the ground state after ramping through the exponentially small gap close to the end of the anneal. At the same time, the dynamics is entirely contained in the ground and first excited states.

\begin{figure}[H]
    \centering
    \begin{tikzpicture}
        \node[anchor=north west, inner sep=0] (fig) at (0,0){
            \includegraphics[width=0.92\linewidth]{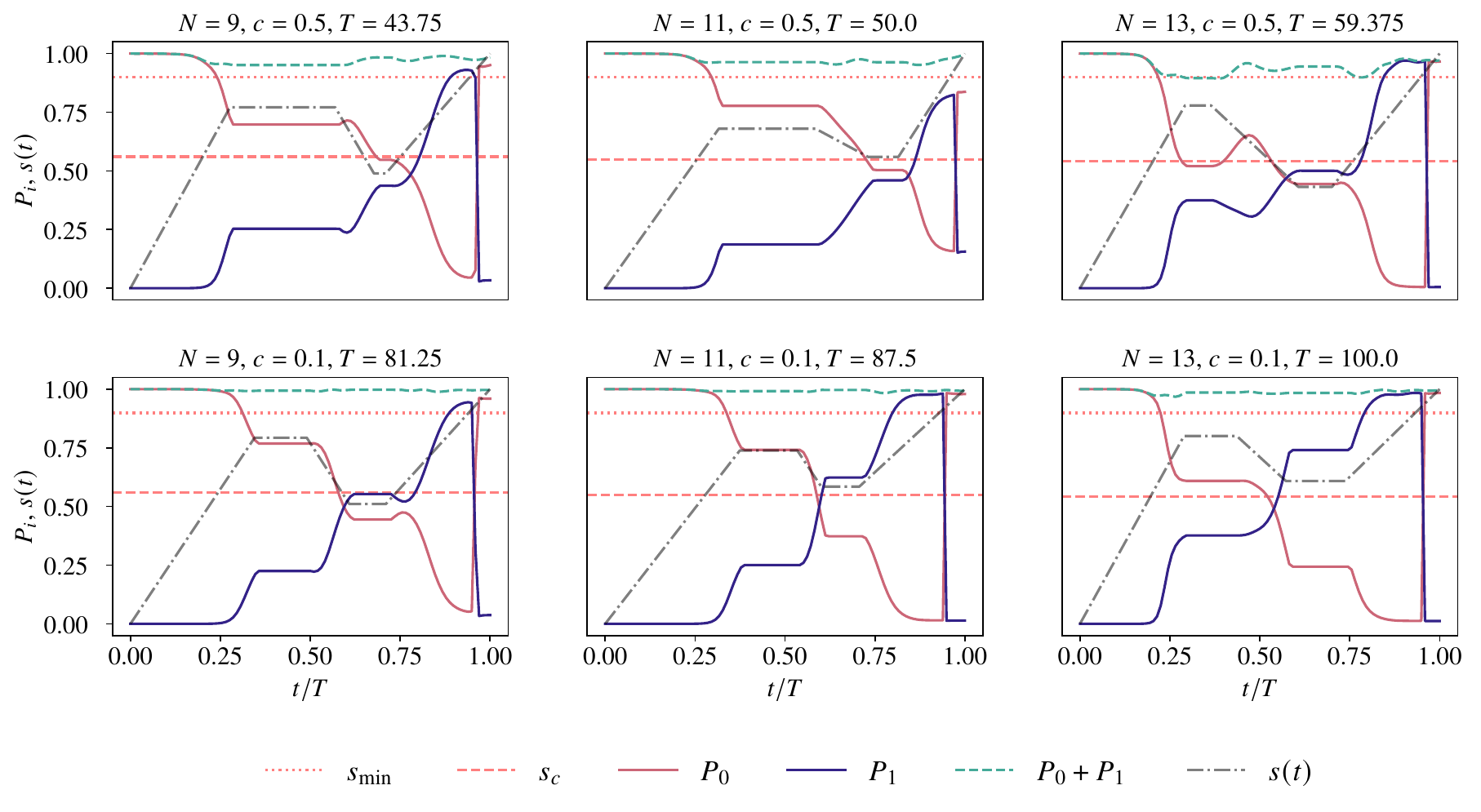}};
    \end{tikzpicture}
    \caption{Complementary data to the linear interpolation schedules as in Figure \ref{fig:FrustratedIsingRingResult}(b) for $N=9, 11, 13$ (left, middle, right) and $c=0.5, 0.1$ (top row, bottom row).}
    \label{fig:FrustratedIsingRingAdditionalResultLinear}
\end{figure}

\begin{figure}[H]
    \centering
    \begin{tikzpicture}
        \node[anchor=north west, inner sep=0] (fig) at (0,0){
            \includegraphics[width=0.92\linewidth]{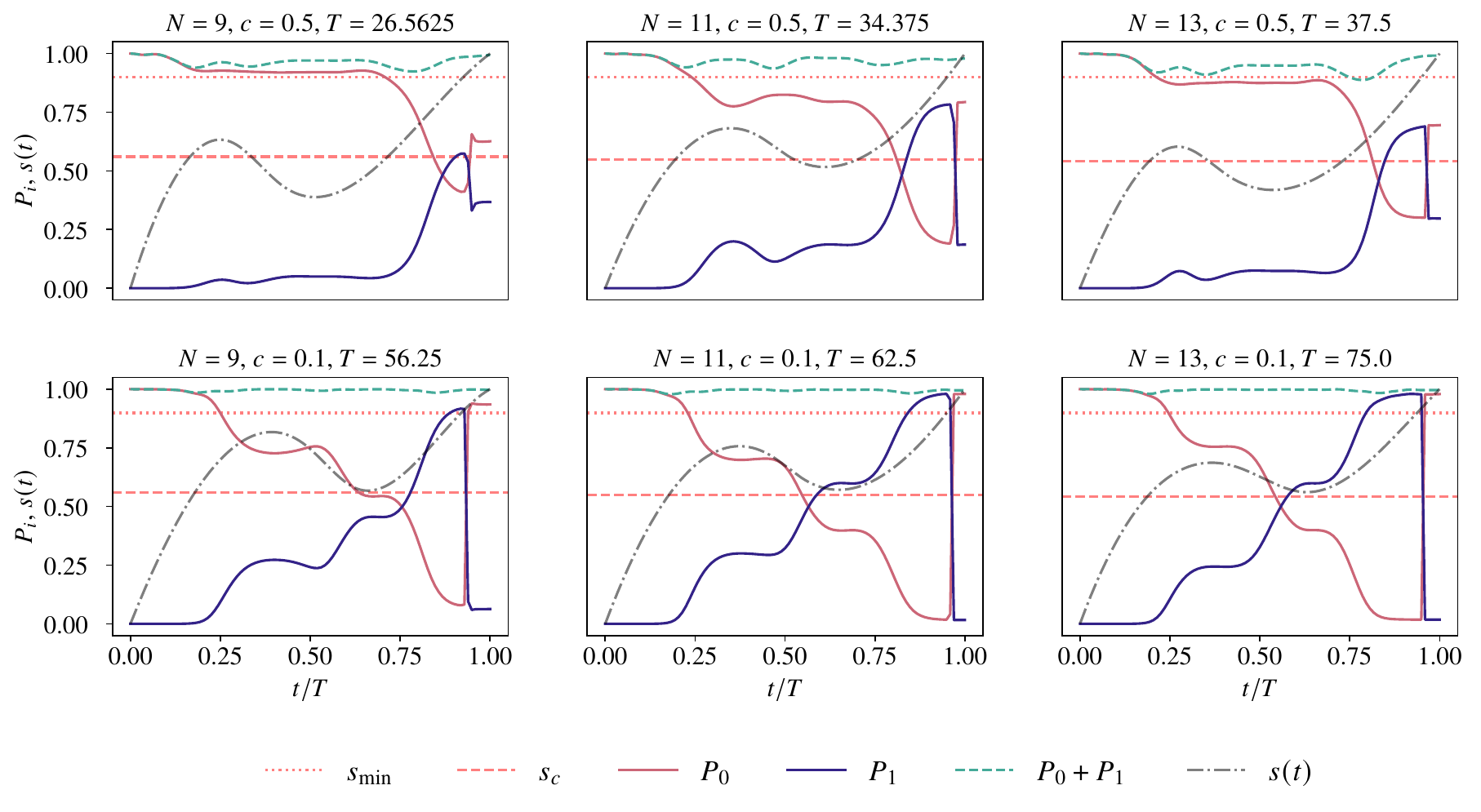}};
    \end{tikzpicture}
    \caption{Complementary data to the cubic spline interpolation schedules as in Figure \ref{fig:FrustratedIsingRingResultCubic}(b) for $N=9, 11, 13$ (left, middle, right) and $c=0.5, 0.1$ (top row, bottom row).}
    \label{fig:FrustratedIsingRingAdditionalResultCubic}
\end{figure}

\subsection{MAXCUT Instance 1 - optimal schedules}
\label{sec:AppendixDataInstance1}
As discussed in the main text, for MAXCUT Instance 1 we optimize the ansatz for an increasing time budget $T$. As $T$ increases, we can identify three regimes of distinct annealing strategies. In the main text, we show the best performing schedule for each regime, while here we show all of the schedules for completeness in Figure \ref{fig:INSTANCE1_AllSchedules} using the same color code as in the main text.
\begin{figure}[H]
    \centering
    \begin{tikzpicture}
        \node[anchor=north west, inner sep=0] (fig) at (0,0){\includegraphics[scale=0.7]{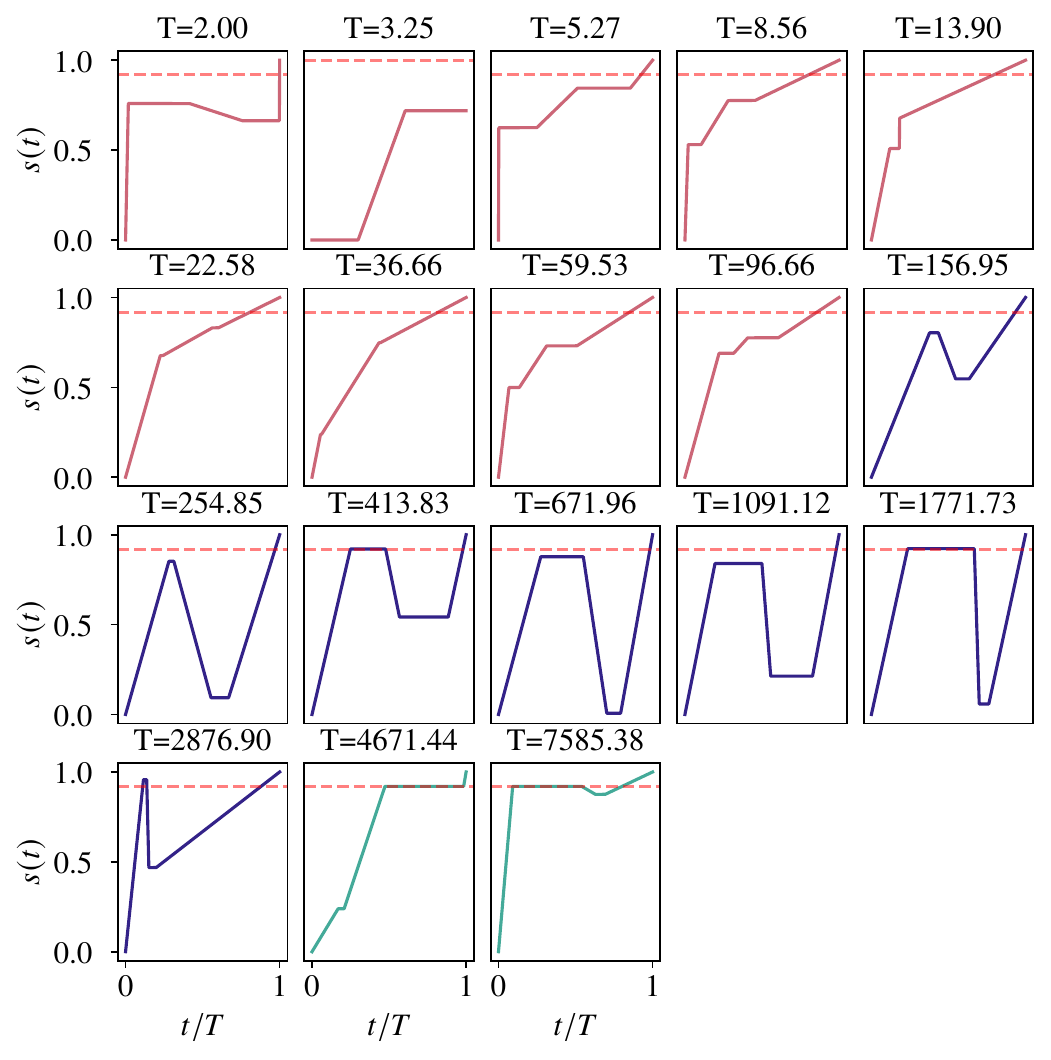}};
    \end{tikzpicture}
    \caption{Optimal schedules for MAXCUT Instance 1 with $N=14$ qubits. As the anneal time $T$ increases, we can identify regime 1 (red) of monotonic schedules, with the negligible exception of $T=2.0$, which is much too short of a time to be considered as a solution. For longer $T$, the schedules become non-monotonic with a pause around the minimum gap (regime 2, blue), while for even larger $T$, the schedules pause exactly at the minimal gap (dashed red), constituting regime 3 (green).}
    \label{fig:INSTANCE1_AllSchedules}
\end{figure}

\subsection{Random MAXCUT instances}
\label{sec:AppendixRandomMaxcut}
While we can observe a qualitative improvement of the residual energy on a particular MAXCUT instance with a perturbative anti-crossing, we investigate the generality of the result by applying our ansatz to 20 randomly generated instances of MAXCUT. The instances are constructed by generating random 3-regular graphs with $N=10$ nodes and sampling the weights of each edge i.i.d. uniformly from $[0, 1]$ For each instance, we show the residual energy $\varepsilon_{\text{residual}}$ for all the instances in Figure \ref{fig:MAXCUTRandomInstancesAll}. We show $\varepsilon_{\text{residual}}$ for quantum annealing with linear ramps (blue) and for the optimized ansatz (red). Most of the optimized schedules are monotonic. i.e. $s_2 \geq s_1$, but for a few instances the schedule is non-monotonic. These instances are denoted with blue crosses. For the instances we test here, the optimized schedules almost always perform only slightly better that the linear ramp, but both methods seem to prepare a low-energy state with a residual energy below the first excited state energy in a comparable amount of time.
\begin{figure}[H]
    \centering
    \begin{tikzpicture}
        \node[anchor=north west, inner sep=0] (fig) at (0,0){
            \includegraphics[width=\linewidth]{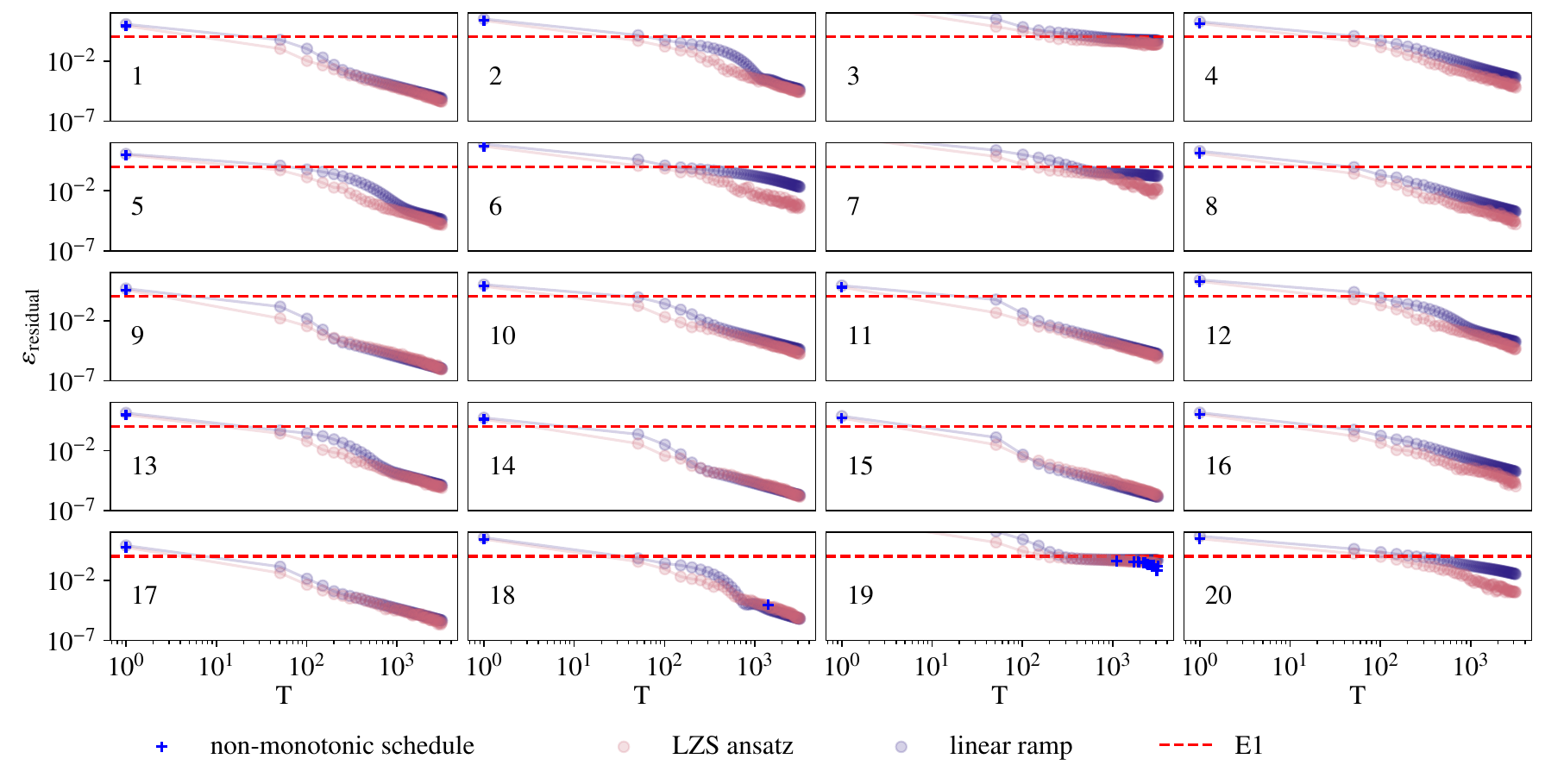}};
    \end{tikzpicture}
    \caption{Residual energy $\varepsilon_{\text{residual}}$ for 20 randomly generated MAXCUT instances on N=10 nodes as a function of annealing time $T$. The data for the optimized ansatz are shown as red dots if the optimal schedule is monotonic, or as blue crosses if the ansatz is non-monotonic. The residual energy of the linear ramp is shown as red dots, while the red dashed line corresponds to the first excited state energy of the target Hamiltonian.}
    \label{fig:MAXCUTRandomInstancesAll}
\end{figure}

\end{document}